\definecolor{light-gray}{gray}{0.7}
\algrenewcommand\algorithmicindent{0.6em}
\newtheorem{theorem}{Theorem}[section]
\newtheorem{proposition}[theorem]{Proposition}
\newtheorem{lemma}[theorem]{Lemma}
\theoremstyle{definition}
\newtheorem{definition}[theorem]{Definition}
\newtheorem{claim}{Claim}
\newtheorem*{claim*}{Claim}
\newtheorem*{remark*}{Remark}
\DeclareMathOperator*{\argmin}{arg\,min}
\title{Contiguous Cake Cutting:\\ Hardness Results and Approximation Algorithms\thanks{A preliminary version of this paper appeared in Proceedings of the 34th AAAI Conference on Artificial Intelligence (AAAI 2020).  
}}
\author{Paul W. Goldberg \and Alexandros Hollender \and Warut Suksompong \and\\
Department of Computer Science, University of Oxford\\
{\tt \{paul.goldberg,alexandros.hollender,warut.suksompong\}@cs.ox.ac.uk}}
\date{}
\begin{document}

\maketitle

\begin{abstract}
We study the fair allocation of a cake, which serves as a metaphor for a divisible resource, under the requirement that each agent should receive a contiguous piece of the cake.
While it is known that no finite envy-free algorithm exists in this setting, we exhibit efficient algorithms that produce allocations with low envy among the agents.
We then establish NP-hardness results for various decision problems on the existence of envy-free allocations, such as when we fix the ordering of the agents or constrain the positions of certain cuts.
In addition, we consider a discretized setting where indivisible items lie on a line and show a number of hardness results extending and strengthening those from prior work. Finally, we investigate connections between approximate and exact envy-freeness, as well as between continuous and discrete cake cutting.
\end{abstract}

\section{Introduction}

We consider the classical \emph{cake cutting} problem, where we wish to divide a cake among a set of agents with different preferences over different parts of the cake.
The cake serves as a metaphor for any divisible resource such as time or land, and our aim is to perform the division in a \emph{fair} manner.
%This problem has a long and storied history that dates back over 70 years and has received attention from mathematicians, economists, and computer scientists alike \citep{BramsTa96,RobertsonWe98,Procaccia16}.
This problem has a long and storied history that dates back over 70 years \citep{BramsTa96,RobertsonWe98,Procaccia16} and has received considerable attention in the past decade \citep{CaragiannisLaPr11,BeiChHu12,AumannDoHa13,BalkanskiBrKu14,BranzeiMi15,AlijaniFaGh17,MenonLa17,BeiHuSu18,Segalhalevi18}.

In order to reason about fairness, we need to specify when a division is considered to be fair.
One of the most commonly used definitions is \emph{envy-freeness}, which means that no agent envies another with respect to the division. 
In other words, among the pieces in the division, every agent receives their first choice.
An early result by \cite{DubinsSp61} shows that an envy-free allocation always exists for arbitrary valuations of the agents.
However, as \cite{Stromquist80} noted, this result depends on a liberal definition of what constitutes a piece of cake, and an agent ``who hopes only for a modest interval of cake may be presented instead with a countable union of crumbs.''

In light of this concern, \cite{Stromquist80} strengthened the result of Dubins and Spanier by showing that it is possible to guarantee an envy-free allocation in which every agent receives a \emph{contiguous} piece of the cake.
Stromquist's result, together with its topological proof, is widely regarded as a cornerstone of the cake-cutting literature.
Nevertheless, since the result focuses only on the existence of a contiguous envy-free allocation, it leaves open the question of how to compute such an allocation.
Almost 30 years later, Stromquist himself addressed this question and showed that under the Robertson-Webb model, where an algorithm is allowed to discover the agents' valuations through \emph{cut} and \emph{evaluate} queries,
no finite algorithm can compute a contiguous envy-free allocation when there are at least three agents \citep{Stromquist08}.\footnote{For two agents, the well-known \emph{cut-and-choose} protocol, which lets the first agent cut the cake into two equal pieces and lets the second agent choose the piece that she prefers, computes a contiguous envy-free allocation.}

Although Stromquist's later result rules out the possibility of computing contiguous envy-free allocations in general, several important questions still remain.
For instance, can we compute a contiguous allocation with low envy between the agents, and if so, how efficiently? 
How does the answer change if we know that the agents' valuations belong to a restricted class?
What happens if we add extra requirements on the allocation, such as fixing a desired ordering of the agents or constraining the positions of certain cuts?
%Does the problem become easier if we relax exact envy-freeness to approximate envy-freeness?
The goal of this paper is to shed light on the complexity of contiguous cake cutting by addressing these questions.

\subsection{Our Contributions}

First, in Section~\ref{sec:approx-alg} we present two algorithms that compute an allocation with low envy in polynomial time. 
As is standard in the cake-cutting literature, we represent the cake by the interval $[0,1]$ and normalize the agents' valuations so that each agent has value $1$ for the entire interval.
Our first algorithm works for general valuations under the Robertson-Webb model and produces a contiguous allocation in which any agent has envy no more than $1/3$ towards any other agent.
On the other hand, our second algorithm is specific to valuations where each agent only desires a single subinterval and has a uniform value over that interval---for such valuations, the algorithm produces a contiguous allocation with a lower envy of at most $1/4$.

Next, in Section~\ref{sec:hardness-cake} we consider variants of the cake-cutting problem where we impose constraints on the desired allocation.
We show that for several natural variants, the decision problem of whether there exists a contiguous envy-free allocation satisfying the corresponding constraints is NP-hard.
In particular, this holds for the variants where (i) a certain agent must be allocated the leftmost piece; (ii) the ordering of the agents is fixed; and (iii) one of the cuts must fall at a given position.
Fixing the ordering of the agents is relevant when there is a temporal ordering in which the agents must be served, e.g., due to notions of seniority or the ease of switching from one agent to another in the service.
Likewise, fixing a cut point is applicable when we divide a parcel of land and there is a road crossing the parcel, so we cannot allocate a piece that lies on both sides of the road.
Moreover, our construction serves as a general framework that can be used to obtain hardness results for other related variants.

In Section~\ref{sec:hardness-indiv} we investigate a discrete analog of cake cutting, where there are indivisible items on a line and each agent is to be allocated a contiguous block of items.
The discrete setting can be viewed as a type of restriction for the continuous setting, where cuts must be placed between discrete items.
In addition to envy-freeness, we work with two other well-studied fairness notions: \emph{proportionality} and \emph{equitability}.\footnote{See the definitions in Section~\ref{sec:hardness-indiv}.}
Using a single reduction, we show that deciding whether there exists a contiguous fair allocation is NP-hard for each of the three fairness notions as well as any combination of them; our result holds even when all agents have \emph{binary} valuations\footnote{That is, the valuations are additive and each agent values each item either $0$ or $1$.} and moreover value the same number of items.
This significantly strengthens a result of \citet{BouveretCeEl17}, who established the hardness for proportionality and envy-freeness using additive but non-binary valuations. Moreover, we show that even if we consider approximate envy-freeness instead of exact, the decision problem remains NP-hard for binary valuations.
We also prove that when the valuations are binary and every agent values a contiguous block of items, deciding whether a contiguous proportional allocation exists is NP-hard.

Finally, in Section~\ref{sec:connections} we present a number of connections between approximate and exact envy-freeness, as well as between the continuous and discrete settings. First, we prove that for piecewise constant valuations, finding an approximately envy-free allocation is as hard as finding an exactly envy-free allocation. Then, we reveal some relationships between continuous and discrete cake cutting---among other things, we show that a special case of the continuous problem for piecewise constant valuations is computationally equivalent to a discrete cake-cutting problem where every item is positively valued by at most one agent. This means that any algorithm or hardness result for one problem will immediately transfer over to the other.

\subsection{Further Related Work}

Since the seminal work of \cite{Stromquist80,Stromquist08}, a number of researchers have studied cake cutting in view of the contiguity condition.
\cite{Su99} proved the existence of contiguous envy-free allocations using Sperner's lemma arguments.
\cite{DengQiSa12} showed that contiguous envy-free cake cutting is PPAD-complete; however, the result requires non-standard (e.g., non-additive, non-monotone) preference functions.
\cite{AumannDoHa13} considered the problem of maximizing social welfare with contiguous pieces, while \cite{BeiChHu12} tackled the same problem with the added requirement of proportionality.
\cite{CechlarovaPi12} and \cite{CechlarovaDoPi13} examined the existence and computation of contiguous equitable allocations---among other things, they showed that such an allocation is guaranteed to exist even if we fix the ordering of the agents.
\cite{AumannDo15} analyzed the trade-off between fairness and social welfare in contiguous cake cutting.
\cite{SegalhaleviHaAu16} circumvented \cite{Stromquist80}'s impossibility result by presenting bounded-time contiguous envy-free algorithms that may not allocate the entire cake but guarantee every agent a certain positive fraction of their value.\footnote{Without this guarantee, it would be much easier to find a contiguous envy-free allocation---just don't allocate any of the cake!}
%Contiguity has recently been studied in a more general model where the cake can be represented by an arbitrary graph \citep{BeiSu19}.

The contiguity requirement has also been considered in the context of indivisible items.
\cite{MarencoTe14} proved that if the items lie on a line and every item is positively valued by at most one agent, a contiguous envy-free allocation is guaranteed to exist.
When each item can yield positive value to any number of agents, \cite{BarreraNyRu15}, \cite{BiloCaFl19}, and \cite{Suksompong19} showed that various relaxations  of envy-freeness can be fulfilled. 
In addition, contiguity has been studied in the more general model where the items lie on an arbitrary graph \citep{BouveretCeEl17,IgarashiPe19,BeiIgLu19}.
Like us, \citet{IgarashiPe19} also showed hardness results for binary valuations.

Recently, \cite{ArunachaleswaranBaKu19} developed an efficient algorithm that computes a contiguous cake division with multiplicatively bounded envy---in particular, each agent's envy is bounded by a multiplicative factor of $3$.
We remark that our approximation algorithms are incomparable to their result.
On the one hand, their algorithm may return an allocation wherein an agent has value $1/4$ for her own piece and $3/4$ for another agent's piece---this corresponds to an additive envy of $1/2$.
On the other hand, our algorithms may leave some agents empty-handed, leading to unbounded multiplicative envy.
We also note that additive envy is the more commonly considered form of approximation, both for cake cutting \citep{DengQiSa12,BranzeiNi17,BranzeiNi19} and for indivisible items \citep{LiptonMaMo04,CaragiannisKuMo16}. 

\section{Preliminaries}

For any positive integer $n$, let $[n] = \{1,2, \dots, n\}$.
In our cake cutting setting, we consider the cake as the interval $[0,1]$. 
There are $n$ agents whose preferences over the cake are represented by valuation functions $v_1, \dots, v_n$. 
Assume that these valuation functions are non-negative density functions over $[0,1]$. 
We abuse notation and let $v_i(a,b) = v_i([a,b]) = \int_a^b v_i(x) dx$ for $0 \leq a \leq b \leq 1$. 
It follows that the valuations are non-negative, additive, and non-atomic (i.e., $v_i(a,a) = 0$).
We assume further that the valuations are normalized so that $v_i(0,1)=1$ for every $i\in[n]$. 

A \emph{contiguous allocation} of the cake is a partition of $[0,1]$ into $n$ (possibly empty) intervals, along with an assignment of each interval to an agent, so that every agent gets exactly one interval. 
Note that this means that we cut the cake using $n-1$ cuts.
Formally, a contiguous allocation is represented by the cut positions $0 \leq x_1 \leq x_2 \leq \dots \leq x_{n-1} \leq 1$ and a permutation $\pi: [n] \to [n]$ that assigns the intervals to the agents so that agent $i$ receives the interval $[x_{\pi(i)-1},x_{\pi(i)}]$, where we define $x_0=0$ and $x_n=1$ for convenience.

%Alternatively, instead of giving the cut positions, we could also provide the interval lengths as a point on the standard $(n-1)$-simplex, i.e., $y_1, y_2, \dots, y_n \in [0,1]$ with $\sum_{i=1}^n y_i = 1$.

We are interested in finding a contiguous allocation that is \emph{envy-free}, i.e., no agent thinks that another agent gets a better interval. 
Formally, the contiguous allocation $(x, \pi)$ is envy-free if for all $i,j \in [n]$, we have $v_i(x_{\pi(i)-1},x_{\pi(i)}) \geq v_i(x_{j-1},x_j)$.
In some cases we will be interested in finding a contiguous allocation that is only \emph{approximately} envy-free. For $\varepsilon \in [0,1]$, the contiguous allocation $(x,\pi)$ is \emph{$\varepsilon$-envy-free} if for all $i,j \in [n]$, we have $v_i(x_{\pi(i)-1},x_{\pi(i)}) \geq v_i(x_{j-1},x_j) - \varepsilon$. In other words, any agent has envy that is at most a fraction $\varepsilon$ of her value for the whole cake.

A typical way for an algorithm to access the valuation functions is through queries in the \emph{Robertson-Webb model}: the algorithm can make \emph{evaluate} queries---where it specifies $x,y$ and asks agent $i$ to return the value $v_i(x,y)$---and \emph{cut} queries---where it specifies $x,\alpha$ and asks agent $i$ to return the leftmost point $y$ such that $v_i(x,y)=\alpha$ (or say that no such $y$ exists).
A more restrictive class of valuations is that of \emph{piecewise constant} valuations.
A piecewise constant valuation function is defined by a piecewise constant density function on $[0,1]$, i.e., a step function.
This class of valuations can be explicitly represented as part of the input.
A subclass of piecewise constant valuations is the class of \emph{piecewise uniform} valuations, where the density function of agent $i$ is either some fixed rational constant $c_i$ or $0$.

\section{Approximation Algorithms}
\label{sec:approx-alg}

In this section, we present two algorithms for approximate envy-free cake cutting.
Algorithm~\ref{alg:approx-EF-general} works for arbitrary valuations and returns a $1/3$-envy-free allocation.
On the other hand, Algorithm~\ref{alg:approx-EF-single-interval} can be used for piecewise uniform valuations with a single value-block and outputs a $1/4$-envy-free allocation.
Note that such valuations are relevant, for example, when the agents are dividing machine
processing time: each agent has a release date and a deadline for her job, so she would like to maximize
the processing time she obtains after the release date and before the deadline.

\begin{algorithm}
\caption{$1/3$-Envy-Free Algorithm for Arbitrary Valuations}\label{alg:approx-EF-general}
\begin{algorithmic}[1]
\Procedure{ApproximateEFArbitrary}{}
\State $\ell\leftarrow 0$, $N\leftarrow [n]$
\For{$i\in N$}
\State $M_i\leftarrow\emptyset$
\EndFor
\While{some agent in $N$ values $[\ell,1]$ at least $1/3$} \label{line:query1}
\For{$i\in N$}
\If{$v_i(\ell,1)\geq 1/3$} \label{line:query2}
\State $r_i\leftarrow $ leftmost point such that $v_i(\ell,r_i)=1/3$ \label{line:query3}
\Else
\State $r_i\leftarrow 1$
\EndIf
\EndFor
\State $j\leftarrow \argmin_{i\in N}r_i$, $r\leftarrow\min_{i\in N}r_i$
\State $M_j\leftarrow [\ell,r]$
\State $\ell\leftarrow r$, $N\leftarrow N\backslash\{j\}$
\EndWhile
\If{$N\neq\emptyset$} \label{line:notempty}
\State $j\leftarrow $ arbitrary agent in $N$
\State $M_j\leftarrow [\ell,1]$
\Else
\State $j\leftarrow $ last agent removed from $N$
\State $M_j\leftarrow M_j\cup [\ell,1]$ \label{line:addpiece}
\EndIf
\State \Return $(M_1,\dots,M_n)$
\EndProcedure
\end{algorithmic}
\end{algorithm}

While Algorithm~\ref{alg:approx-EF-general} can be implemented for general valuations under the Robertson-Webb model, it also allows a simple interpretation as a moving-knife algorithm.
In this interpretation, the algorithm works by moving a knife over the cake from left to right.
Whenever the current piece has value $1/3$ to at least one remaining agent, the piece is allocated to one such agent.
If the knife reaches the right end of the cake, then the piece is allocated to an arbitrary remaining agent if there is at least one remaining agent, and to the agent who received the last piece otherwise.

\begin{theorem}
\label{thm:algo-general}
For $n$ agents with arbitrary valuations, Algorithm~\ref{alg:approx-EF-general} returns a contiguous $1/3$-envy-free allocation and runs in time polynomial in $n$ assuming that it makes queries in the Robertson-Webb model.
\end{theorem}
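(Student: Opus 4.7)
The plan is to analyze each agent's envy by splitting it into envy towards agents picked earlier in the main loop and envy towards agents picked later. The first building block is the invariant that at iteration $t$, for every agent $i\in N$ we have $v_i(\ell_t,\ell_{t+1})\leq 1/3$. This follows from $\ell_{t+1}=r_{j_t}=\min_{k\in N}r_k\leq r_i$ together with the definition of $r_i$, which guarantees $v_i(\ell_t,r_i)\leq 1/3$ (with equality when $v_i(\ell_t,1)\geq 1/3$). Consequently, if $k$ is allocated at iteration $s\leq t$ and $i$ is picked at iteration $t$, then $v_i(M_k)\leq 1/3$, so $i$'s envy towards $k$ is at most $1/3-v_i(M_i)\leq 1/3$.

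The more delicate direction is envy towards an agent $k$ picked after $i$, where the argmin selection does not directly control $v_i(M_k)$. Here I would exploit that when $j_t$ is picked with $v_{j_t}(\ell_t,1)\geq 1/3$, the piece $M_{j_t}$ has value \emph{exactly} $1/3$ to $j_t$. Taking $i=j_t$, this yields $v_i(\ell_{t+1},1)=v_i(\ell_t,1)-1/3\leq 2/3$, and any $M_k$ allocated later is contained in $[\ell_{t+1},1]$, giving $v_i(M_k)\leq 2/3$ and hence envy at most $2/3-1/3=1/3$. The degenerate case where $v_{j_t}(\ell_t,1)<1/3$ forces all $r_k=1$ at iteration $t$ and hence $\ell_{t+1}=1$, so no agent is picked later and this direction is vacuous.

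Next I would treat the post-loop cleanup. If $N=\emptyset$ on exit, the leftover $[\ell,1]$ is appended to $M_{j_T}$, where $j_T$ is the last picked agent; the enlarged piece has $v_{j_T}$-value at least $1/3$, so by the invariant $j_T$ has zero envy towards every previously allocated piece, and for each earlier agent $i$ picked at iteration $t<T$ the enlarged $M_{j_T}$ remains contained in $[\ell_{t+1},1]$, so the earlier bound still applies. If $N\neq\emptyset$, the loop-exit condition guarantees every agent in $N$ values the leftover $[\ell,1]$ at less than $1/3$, so the arbitrary recipient and the remaining unfulfilled agents (whose pieces are empty) all trivially satisfy the envy bound.

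The running-time analysis is routine: each iteration of the main loop issues $O(n)$ evaluate and cut queries, and the loop runs at most $n$ times because $|N|$ strictly decreases, giving $O(n^2)$ Robertson--Webb queries with polynomial bookkeeping overall. The main conceptual obstacle is recognizing that the invariant ``every newly-cut piece is worth at most $1/3$ to every remaining agent'', combined with the fact that the picked agent's own piece is worth exactly $1/3$ when the remaining cake is rich enough, is already strong enough to control envy in both temporal directions; the various edge cases where $r_{j_t}=1$ or a valuation is flat only make pieces smaller and therefore preserve the bounds.
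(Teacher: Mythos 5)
Your proof is correct and relies on essentially the same ideas as the paper's: the invariant that each piece cut in the main loop is worth at most $1/3$ to every agent still in $N$, the observation that the argmin agent's own piece is worth exactly $1/3$ to her (so the remaining cake is worth at most $2/3$), and the fact that the leftover is worth less than $1/3$ to anyone still in $N$. The paper organizes the case split by whether an agent receives a piece inside the loop rather than by temporal direction, and it also explicitly checks contiguity, but the underlying bounds and reasoning match yours.
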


\begin{proof}
Every agent receives a single interval from the algorithm; the only possible exception is agent $j$ in line~\ref{line:addpiece}. 
However, since $j$ is chosen as the last agent removed from $N$, the interval $M_j$ allocated to $j$ earlier is adjacent to $[\ell,1]$, meaning that $j$ also receives a single interval.
Hence the allocation is contiguous.
Moreover, the algorithm only needs to make queries in lines \ref{line:query1}, \ref{line:query2} and \ref{line:query3}, and the number of necessary queries is clearly polynomial in $n$.
The remaining steps can be implemented in polynomial time.

We now prove that the envy of an agent $i$ towards any other agent is at most $1/3$.
If $i$ is assigned a piece in the while loop (line \ref{line:query1}), $i$ receives value at least $1/3$.
This means that $i$'s value for any other agent's piece is at most $2/3$, so $i$'s envy is no more than $1/3$.
Alternatively, after the while loop, $i$ still has not received a piece, meaning that $N\neq\emptyset$ in line~\ref{line:notempty}.
By our allocation procedure in the while loop, $i$ values any piece assigned in the while loop at most $1/3$.
Furthermore, when the algorithm enters line~\ref{line:notempty}, $i$ values the interval $[\ell,1]$ less than $1/3$.
Since $[\ell,1]$ is assigned to an agent who did not receive an interval earlier, it follows that $i$ does not envy any other agent more than $1/3$, as claimed.
\end{proof}

\begin{algorithm}
\caption{$1/4$-Envy-Free Algorithm for Uniform Single-Interval Valuations}\label{alg:approx-EF-single-interval}
\begin{algorithmic}[1]
\Statex \(\triangleright\) $R_i$ : the single interval valued by agent $i$
\Statex \(\triangleright\) $\text{mid}(i)$ : the midpoint of $R_i$
\Statex \(\triangleright\) $A_i$ : part of $R_i$ that is unallocated at the start of agent $i$'s turn
\Statex \(\triangleright\) an interval is \emph{restrained} if it is adjacent to an interval that has already been allocated
\Procedure{ApproximateEFSingleInterval}{}
\State Order the agents $1,\dots,n$ so that $|R_i| \leq |R_j|$ for all $i < j$
\For{$i=1,\dots,n$}
\If{there exists a \emph{restrained} interval $I \subseteq A_i$ with $v_i(I)=1/4$ and $\text{mid}(i) \in I$}\Comment{Case 1}\label{line:case1}
\State $M_i\leftarrow I$
\ElsIf{there exists an interval $I \subseteq A_i$ with $v_i(I)=1/4$ and $\text{mid}(i) \in I$}\Comment{Case 2}\label{line:case2}
\State $S_i \leftarrow \{j>i\mid v_i(\min(\text{mid}(i),\text{mid}(j)), \max(\text{mid}(i),\text{mid}(j))) \leq 1/4$\}
\State $k \leftarrow \min S_i$
\State $M_i\leftarrow $ an interval $I \subseteq A_i$ with $v_i(I)=1/4$, $\text{mid}(i) \in I$ and $\text{mid}(k) \in \partial I$ (i.e., an endpoint of $I$)
\ElsIf{there exist $\ell < i$ with $\text{mid}(i) \in M_\ell$ and an interval $I \subseteq A_i$ adjacent to $M_\ell$ with $v_i(I)=1/4$}\Comment{Case 3}\label{line:case3}
\State $M_i\leftarrow I$
\Else\Comment{Case 4}\label{line:case4}
\State $M_i\leftarrow $ a largest restrained interval $I \subseteq A_i$ with $v_i(I) \leq 1/4$
\EndIf
\EndFor
\If{some two intervals $M_q,M_r$ are adjacent (say, $M_q$ is to the left of $M_r$)} \label{line:finalstep}
\State Extend $M_q$ and all assigned intervals to its left as far as possible to the left.
\State Extend $M_r$ and all assigned intervals to its right as far as possible to the right.
\Else
\State Extend assigned intervals arbitrarily to cover the remaining cake.
\EndIf
\State \Return $(M_1,\dots,M_n)$
\EndProcedure
\end{algorithmic}
\end{algorithm}

Note that if we are only interested in having an algorithm that makes a polynomial number of queries, \cite{BranzeiNi17} showed that for any $\varepsilon>0$, a contiguous $\varepsilon$-envy-free allocation can be found using $O(n/\varepsilon)$ queries, which is polynomial in $n$ for constant $\varepsilon$. 
Their algorithm works by cutting the cake into pieces of size $1/\varepsilon$ and performing a brute-force search over the space of all contiguous allocations with respect to these cuts; this algorithm therefore has exponential computational complexity (even for constant $\varepsilon$).
By contrast, in the absence of the contiguity constraint, \citet[p.~323]{Procaccia16} gave a simple polynomial-time algorithm that computes an $\varepsilon$-envy-free allocation for any constant $\varepsilon$.
His algorithm also starts by cutting the cake into pieces of size $1/\varepsilon$ and then lets agents choose their favorite pieces in a round-robin manner; consequently, the resulting allocation can be highly non-contiguous.

While we do not know whether the bound $1/3$ in our approximation can be improved under the computational efficiency requirement,\footnote{For the case $n=3$, \cite{DengQiSa12} gave a fully polynomial-time approximation scheme that computes a contiguous $\varepsilon$-envy-free allocation for any $\varepsilon > 0$.} we show next that if the agents have piecewise uniform valuations and each agent only values a single interval, the envy can be reduced to $1/4$.
\cite{AlijaniFaGh17} showed that if the valuations are as described and moreover the $n$ valued intervals satisfy an ``ordering property'', meaning that no interval is a strict subinterval of another interval, then a contiguous envy-free allocation can be computed efficiently.
Nevertheless, the ordering property is a very strong assumption, and indeed reducing the envy to $1/4$ without this assumption already requires significant care in assigning the pieces.\footnote{\cite{AlijaniFaGh17} also showed that for piecewise uniform valuations where each agent only values a single interval (without the ordering property assumption), one can efficiently compute an envy-free allocation with at most $2n-1$ intervals in total. Moreover, they showed that for a \emph{constant} number of agents with piecewise constant valuations, a contiguous envy-free allocation can be computed efficiently.}

At a high level, Algorithm~\ref{alg:approx-EF-single-interval} first orders the agents from shortest to longest desired interval, breaking ties arbitrarily.
For each agent in the ordering, if an interval of value $1/4$ containing the midpoint of her valued interval (perhaps at the edge of the former interval) has not been taken, the agent takes one such interval.
Else, if an interval of value $1/4$ is available somewhere, the agent takes one such interval; here, if there are choices on both sides of the midpoint, the agent may need to be careful to pick the ``correct'' one.
Otherwise, if no interval of value $1/4$ is available, the agent takes a largest available interval.
At the end of this process, part of the cake may remain unallocated.
If some pair of assigned intervals are adjacent, pick one such pair, and allocate the remaining cake by extending pieces away from the border between this pair.
Else, extend the pieces arbitrarily to cover the remaining cake.

\begin{theorem}
\label{thm:algo-oneblock}
For $n$ agents with piecewise uniform valuations such that each agent only values a single interval, Algorithm~\ref{alg:approx-EF-single-interval} returns a contiguous $1/4$-envy-free allocation and runs in time polynomial in $n$.
\end{theorem}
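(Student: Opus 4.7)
The plan is to establish three facts in sequence: (i) the algorithm is well-defined and runs in polynomial time, (ii) the output is a contiguous partition of $[0,1]$, and (iii) every agent's envy is at most $1/4$. For (i), polynomial runtime is straightforward because in iteration $i$ the unallocated set $A_i$ is a disjoint union of $O(i)$ subintervals of $R_i$ and each of Cases~1--4 can be decided and executed by a constant number of sliding-knife operations. Well-definedness reduces to observing that if Cases~1--3 all fail, the largest restrained $I \subseteq A_i$ with $v_i(I) \leq 1/4$ in Case~4 still exists (possibly of zero length, in which case $M_i$ is assigned the empty interval and absorbed by the extension step). For (ii), each $M_i$ assigned in the for loop lies in $R_i \subseteq [0,1]$, so the assigned intervals are pairwise disjoint single intervals; the final extension step covers the rest of $[0,1]$ without breaking contiguity, by using a pair of adjacent assigned pieces as an anchor and growing pieces outward from their shared border.

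For (iii), since $v_i$ is uniform on $R_i$ with $v_i(R_i) = 1$, we have $v_i(M_j) = |M_j \cap R_i|/|R_i|$, so it suffices to bound $|M_j \cap R_i|$. When $j < i$, every $M_j$ is contained in $R_j$ and has $v_j(M_j) \leq 1/4$, hence $|M_j \cap R_i| \leq |M_j| \leq |R_j|/4 \leq |R_i|/4$, giving envy at most $1/4$. The nontrivial case is $j > i$, where $|R_j|$ can be much larger than $|R_i|$ and $M_j$ could in principle cover all of $R_i$. Here the midpoint constraint is crucial: in Cases~1 and 2, $M_j$ is a length-$|R_j|/4$ interval containing $\text{mid}(j)$, so $M_j$ overlaps $R_i$ substantially only if $\text{mid}(j)$ lies within $v_i$-distance $1/4$ of $\text{mid}(i)$, i.e.\ only if $j \in S_i$. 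When agent $i$ was processed in Case~2, it deliberately placed $\text{mid}(k) \in \partial M_i$ for $k = \min S_i$; this blocks agent $k$ from grabbing a midpoint-centered piece that overflows into $R_i$, and agents in $S_i$ with index $> k$ are further constrained because $M_i$ and $M_k$ are already in place when they are processed.

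The main obstacle is the interaction of Cases~3 and 4 with this midpoint-blocking argument. In Case~3, $\text{mid}(i) \in M_\ell$ for some $\ell < i$, so agent $i$ cannot itself block nearby later agents via Case~2; instead one must use that $M_\ell$ was chosen earlier and already claims the central portion of $R_i$, leaving only the two side regions for future agents. In Case~4, $v_i(M_i)$ may be strictly less than $1/4$, so we need the sharper bound $|M_j \cap R_i| \leq |R_i|/4$ outright rather than an estimate that trades off against $v_i(M_i)$; this will follow from showing that Case~4 fires only when the unallocated mass of $R_i$ has already been squeezed below $1/4$ by earlier pieces, so that no future piece can exceed this. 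Finally, the extension step requires separate verification: since extensions propagate outward from a single reference border, the growth of each $M_j$ on the side toward $R_i$ is bounded by the gap adjacent to it, and a short case analysis on the position of the reference border relative to $M_i$ and $R_i$ shows that the $1/4$ envy bound survives.
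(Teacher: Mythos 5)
Your roadmap is broadly compatible with the paper's proof, and you correctly identify the crucial roles of the length-ordering of the agents and the midpoint-containment constraint. However, the proposal has several genuine gaps and at least one mis-stated claim, so it would not go through as written.

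The most fundamental gap is in the $j<i$ case. You bound $|M_j \cap R_i| \leq |R_j|/4 \leq |R_i|/4$ and conclude the envy is at most $1/4$, but envy must be measured against the \emph{extended} pieces $M_j^+$, not the pre-extension pieces $M_j$. The extension phase can add a large amount of $i$-value to $M_j^+$, so $v_i(M_j) \leq 1/4$ by itself proves nothing. Your final paragraph does acknowledge that the extension needs ``separate verification,'' but the sketched bound (``growth of each $M_j$ on the side toward $R_i$ is bounded by the gap adjacent to it'') is not by itself $\leq 1/4$. The paper closes this by establishing what it calls the SDE property (if some agent is not in Case~2, there are two adjacent pre-extension pieces, forcing each piece to extend in only one direction) and then, in each of Cases~1--4 for agent $i$, separately bounding how much an extension can add. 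In Case~4 with $v_i(M_i)<1/4$, the paper crucially uses that $M_i$ was chosen as a \emph{largest restrained} gap, so each $M_j$ can grow by at most $v_i(M_i)$ toward $R_i$; combined with $v_i(M_j)\leq 1/4$ for $j\leq i$ and $v_i(M_j)\leq v_i(M_i)$ for $j>i$, the envy bound follows. Your claimed ``sharper bound'' $|M_j^+\cap R_i|\leq|R_i|/4$ is simply false here (a piece $M_j$ with $j<i$ can already occupy up to $|R_i|/4$ of $R_i$ \emph{and} then extend into the gap $M_i$ sat in).

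Your $j>i$ discussion also goes astray. You claim $M_j$ (when agent $j$ is in Cases~1/2) overlaps $R_i$ substantially only if $j\in S_i$; this does not follow, since $M_j$ need not be centered on $\text{mid}(j)$ and can reach far from it when $|R_j|$ is large. More importantly, the paper does not need any such case-split on $j$: when agent $i$ is in Cases~1 or~2, $\text{mid}(i)\in M_i^+$ and $v_i(M_i^+)\geq 1/4$, so any other piece lies entirely to one side of $\text{mid}(i)$ inside $R_i$, hence $v_i(M_j^+)\leq 1/2$ outright. The hard work is confined to Cases~3 and~4 for agent $i$, where the paper runs a careful argument: in Case~3 it proves $M_\ell$ cannot have $\geq 1/4$ of $i$-value available on \emph{both} sides at the start of $i$'s turn, by splitting on whether $i=\min S_\ell$ or $i>\min S_\ell$; in Case~4 with $v_i(M_i)=1/4$ it rules out $v_i(M_j^+)>1/2$ using the Case~1/2/3 tests that agent $i$ failed. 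Your sketch names these cases as ``obstacles'' but does not supply the arguments, and your claim that Case~4 ``fires only when the unallocated mass of $R_i$ has already been squeezed below $1/4$'' is not correct: Case~4 can fire with $v_i(M_i)=1/4$ whenever no qualifying interval contains $\text{mid}(i)$ or is adjacent to the piece containing $\text{mid}(i)$.
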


\begin{proof}
One can check that Algorithm~\ref{alg:approx-EF-single-interval} assigns a single interval to every agent and can be implemented in polynomial time. 
It remains to show that the algorithm returns an allocation such that for any two agents $i,j$, agent $i$ has envy at most $1/4$ towards agent $j$. For the purpose of this proof, when we refer to an interval $M_i$, we mean the interval before it is extended in the final phase of the algorithm (the \emph{extension phase} starting at line~\ref{line:finalstep}). We denote by $M_i^+$ the corresponding extended interval that is returned by the algorithm. For any agent $i$ and any interval $I$, the $i$-value of $I$ is the value of $I$ for agent $i$, i.e., $v_i(I)$.

When agent $i$'s turn comes in the for-loop, it falls into exactly one of four possible cases: Case 1 (line~\ref{line:case1}), Case 2 (line~\ref{line:case2}), Case 3 (line~\ref{line:case3}) or Case 4 (line~\ref{line:case4}). Depending on which case applies, $M_i$ is chosen accordingly. We say that the \emph{single-direction extension} (SDE) property holds if at least one agent does not fall into Case 2. It is easy to check that if the SDE property holds, then there are at least two allocated intervals $M_q$ and $M_r$ that are adjacent before the extension phase begins, and thus every interval $M_i$ will be extended in a single direction.

It is clear that $v_i(M_j^+) \geq v_i(M_j)$ for all $i,j$. Furthermore, in all four cases it holds that agent $i$ is allocated an interval of value at most $1/4$, i.e., $v_i(M_i) \leq 1/4$ for all $i$. Since $M_i \subseteq R_i$ and because of the way the agents are ordered, it follows that
\begin{equation}\label{eqn:algo-oneblock}
v_i(M_j) \leq 1/4 \quad \text{ for all } j \leq i
\end{equation}

We now show that any agent $i$ has envy at most $1/4$ at the end of the algorithm. Namely, we prove that for any agents $i,j$ we have $v_i(M_j^+) \leq v_i(M_i^+) + 1/4$. We treat the four different cases that can occur during agent $i$'s turn.

\emph{Cases $1$ and $2$}. In both cases, $M_i$ contains $\text{mid}(i)$ and has $i$-value $1/4$. This also holds for $M_i^+ \supseteq M_i$. Since the midpoint of $R_i$ is contained in $M_i^+$, any other interval $M_j^+$ has $i$-value at most $1/2$. Thus, agent $i$ has envy at most $1/4$.

\emph{Case $3$}. In this case, we again have $v_i(M_i) = 1/4$. However, this time we have $\text{mid}(i) \in M_\ell$, which implies that $v_i(M_j^+) \leq 1/2$ for all $j \neq \ell$. Thus, it remains to show that $v_i(M_\ell^+) \leq 1/2$. Since $\ell < i$, we have $v_i(M_\ell) \leq 1/4$.
Thus, we need to show that the extension of $M_\ell$ to $M_\ell^+$ increases the $i$-value by at most $1/4$. Since $M_i$ was chosen to be adjacent to $M_\ell$, it suffices to show that there is at most $1/4$ $i$-value available on the other side of $M_\ell$.

To this end, we prove that at the start of agent $i$'s turn, $M_\ell$ cannot have at least $1/4$ of $i$-value available both on the left side and on the right side. Assume on the contrary that this is the case. Note, in particular, that $M_\ell$ is not restrained. Thus, $M_\ell$ was allocated in agent $\ell$'s turn by Case 2. We also know that $i \in S_\ell$, because $\text{mid}(i), \text{mid}(\ell) \in M_\ell$ and $v_\ell(M_\ell) = 1/4$. Now there are two cases:
\begin{itemize}
    \item If $i = \min S_\ell$, then $\text{mid}(i) \in \partial M_\ell$. But in that case, at the start of agent $i$'s turn, there exists a restrained interval $I \subseteq A_i$ with $v_i(I)=1/4$ and $\text{mid}(i) \in I$. Thus, agent $i$ would have been in Case 1 instead of 3.
    \item If $i > k = \min S_\ell$, then in agent $k$'s turn, Case 1 will apply. Indeed, $\text{mid}(k) \in \partial M_\ell$ and thus there is at least $1/4$ of $k$-value available that contains $\text{mid}(k)$ (because there is enough space for $1/4$ of $i$-value and $i>k$). But if Case 1 applies, then $M_k$ will be chosen to be adjacent to $M_\ell$ (since they both contain $\text{mid}(k)$), and $M_\ell$ will not have space available on both sides when agent $i$'s turn comes.
\end{itemize}

%Hence, at the start of agent $i$'s turn, $M_\ell$ can have at least $1/4$ of $i$-value available on at most one side, and $M_i$ is chosen to be adjacent to $M_\ell$ on that side.

\emph{Case $4$}. First, suppose that $v_i(M_i) < 1/4$. This means that $M_i$ was a largest available interval in $R_i$. It follows that any agent $j > i$ can obtain an interval of $i$-value at most $v_i(M_i)$, since it is processed after $i$. For $j<i$, since agent $i$ is in Case 4, the SDE property holds. Thus, $M_j$ can be extended by at most $v_i(M_i)$, i.e., $v_i(M_j^+) \leq v_i(M_j) + v_i(M_i)$ for all $j$. With \eqref{eqn:algo-oneblock} it follows that the envy is at most $1/4$.

Now, consider the case where $v_i(M_i) = 1/4$. Any agent $j>i$ can obtain $i$-value no more than $1/2$---otherwise, agent $i$ would have fallen in Case 1 or 2. Consider any $j<i$:
\begin{itemize}
    \item if $\text{mid}(i) \in M_j$, then both on the left and right side of $M_j$ the space available has $i$-value $<1/4$ (otherwise agent $i$ would be in Case 1 or 3). Since the SDE property holds, it follows that $v_i(M_j^+) \leq v_i(M_j) + 1/4 \leq 1/2$ with \eqref{eqn:algo-oneblock}.
    \item if $\text{mid}(i) \notin M_j$, then $v_i(M_j^+) < 1/2$. Otherwise, it means that $M_j$ is extended in a single direction (SDE property) and takes over an interval of $i$-value at least $1/4$ that contains $\text{mid}(i)$. But then, agent $i$ would be in Case 1 or 2.
\end{itemize}
This completes the proof.
\end{proof}

\section{Hardness for Cake-Cutting Variants}
\label{sec:hardness-cake}

In this section, we establish hardness results for a number of decision problems on the existence of contiguous envy-free allocations.

\begin{theorem}\label{thm:cake-NP}
The following decision problems are NP-hard for contiguous cake cutting, even if we restrict the valuations to be piecewise uniform:
\begin{itemize}
    \item Does there exist an envy-free allocation in which agent $1$ obtains the leftmost piece?
    \item Does there exist an envy-free allocation in which the pieces are allocated to the $n$ agents in the order $1,2, \dots, n$?
    \item Does there exist an envy-free allocation such that there is a cut at position $x$, for $x$ given in the input?
\end{itemize}

These problems remain NP-hard if we replace envy-freeness by $\varepsilon$-envy-freeness for any sufficiently small constant $\varepsilon$.
\end{theorem}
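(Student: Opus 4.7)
The plan is to reduce from a classical NP-complete numerical problem---\textsc{Partition} is natural, with \textsc{3-Partition} as an alternative if we need strong NP-hardness or extra numerical leverage for the approximate version. Given integers $a_1,\dots,a_m$ summing to $2S$, I would construct a cake-cutting instance on $[0,1]$ with piecewise uniform valuations in which any contiguous envy-free allocation obeying the prescribed constraint must induce a split of $\{a_1,\dots,a_m\}$ into two subsets of equal sum, and vice versa.

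First I would build the ``spine'' of the reduction around the fixed-cut variant, since a mandated cut transparently encodes an arithmetic equality. Fix the required cut at a position $x$ and introduce one item agent $A_i$ for each $a_i$ whose value is uniformly distributed over a short sub-interval of width proportional to $a_i$, together with a family of scaffolding agents whose piecewise uniform valuations are calibrated so that (i) each scaffold agent can only be satisfied by an interval of essentially one prescribed length, and (ii) in any contiguous envy-free allocation, the remaining $n-2$ cuts on the two sides of $x$ are forced to sit exactly at the boundaries of the item sub-intervals. The only residual combinatorial freedom is then the choice of which $A_i$'s land in $[0,x]$ and which in $[x,1]$, and the scaffold agents' envy-free inequalities will be satisfiable precisely when the two chosen sides have equal total $a_i$-mass, i.e.\ precisely when the \textsc{Partition} instance is a yes-instance. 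Next I would derive the leftmost-piece and fixed-ordering variants from the same skeleton: for each, I attach one or two ``landmark'' agents whose valuations make the constraint force some specific cut to occur at the position $x$, thereby reducing the variant to the already-treated fixed-cut case. This is the ``general framework'' that the excerpt advertises.

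For $\varepsilon$-envy-freeness, I would rescale the item widths by a fixed factor and pad with neutral filler so that the additive envy margin separating yes- and no-instances is some constant $\delta > 0$ independent of $m$; then any $\varepsilon$-envy-free allocation with $\varepsilon < \delta/2$ still corresponds to a genuine partition. The main obstacle, and where essentially all the care in the proof lives, is the design of the scaffolding: the agents must be piecewise uniform (hence cannot be too fine-grained), they must rigidly propagate cut-position constraints across the whole cake under envy-freeness, and they must continue to do so under a small additive slack. I expect this will require a two-tier scaffold---``primary'' agents whose pieces are pinned to within $O(\varepsilon)$ and ``auxiliary'' agents that amplify each such tight constraint into an exact combinatorial equality once the item widths are chosen large compared to $\varepsilon$. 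Verifying that the scaffold actually behaves as intended in both the exact and approximate regimes is the delicate step that the formal proof will have to carry out carefully.
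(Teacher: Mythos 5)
Your proposal takes a genuinely different route from the paper---you reduce from \textsc{Partition}/\textsc{3-Partition}, whereas the paper reduces from \textsc{3-sat} using Clause-Gadgets, Variable-Gadgets, and a chain of Isolating Intervals. The high-level strategy of treating the fixed-cut variant as the core and deriving the leftmost-piece and fixed-ordering variants by attaching landmark agents is similar in spirit to what the paper does (the paper's $S_0$ and Isolating Interval chain serve exactly that role), and your instinct to control the approximate version via a constant additive envy margin matches the paper as well (they prove hardness for $\varepsilon$-envy-freeness with $\varepsilon=0.01$ by making all gadget slack constant).

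However, there is a genuine gap in the core of your scheme. You say that after the scaffold pins the remaining $n-2$ cuts, ``the only residual combinatorial freedom is then the choice of which $A_i$'s land in $[0,x]$ and which in $[x,1]$.'' But each item agent $A_i$ values a short \emph{fixed} subinterval of the cake, so which side of $x$ it lies on is already determined by where you placed it, not by the allocation. A single fixed cut on a line determines a prefix, so the ``subsets'' you can carve out are prefixes and suffixes, which makes this a trivial prefix-sum question rather than \textsc{Partition}. To recover genuine subset-choice freedom you would either need the $A_i$'s valued regions to overlap in some carefully nested way (as in the paper's Theorem~\ref{thm:indivisible-proportional}, where all normal agents value overlapping prefixes), or you would need each $A_i$ to value two copies of its block (one on each side of $x$) with the scaffold forcing exactly one copy to be used; neither mechanism is described in your sketch, and making either of them work under envy-freeness (as opposed to proportionality) is exactly the hard part. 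The paper sidesteps this difficulty entirely: \textsc{3-sat} has a built-in binary choice per variable, and the Variable-Gadget encodes that choice positionally (whether the gadget's single cut falls in the $x_j$ region or the $\overline{x}_j$ region), with the clause structure then propagating through Clause-Gadgets via the observation that each clause must have a ``sad'' agent whose remaining value-block must be cut. You would need to supply an analogous positional encoding of an unordered subset choice before your \textsc{Partition}-based argument could go through.
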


This list is not exhaustive: additional results of the same flavor can be found in the full proof (Section~\ref{app:cake-NP}).\footnote{However, if we fix all $n-1$ cuts, the problem becomes solvable in polynomial time.
Indeed, with all the cuts fixed, the resulting pieces are also all fixed.
We can therefore construct a bipartite graph with the agents on one side and the pieces on the other side, where there is an edge between an agent and a piece exactly when receiving the piece would make the agent envy-free.
The problem of determining whether an envy-free allocation exists therefore reduces to deciding the existence of a perfect matching, which can be done in polynomial time.} 
The following proof sketch conveys the main ideas behind these results.

\begin{proof}[Proof Sketch]

In order to prove that these decision problems are NP-hard, we reduce from \textsc{3-sat}. Namely, given a \textsc{3-sat} formula, we construct a cake-cutting instance such that the answer to the decision problem is ``Yes'' if and only if the \textsc{3-sat} formula is satisfiable. A bonus of our proof is that we construct a single cake-cutting instance that works for all of the decision problems mentioned in the theorem statement and even a few more.

Let us give some insight into how this instance is constructed. Consider a \textsc{3-sat} formula $C_1 \lor \dots \lor C_m$, where the $C_i$ are clauses containing 3 literals using the variables $x_1, \dots, x_n$ and their negations. The cake-cutting instance is constructed by putting together multiple small cake-cutting instances, so-called gadgets. For every clause $C_i$ we introduce a Clause-Gadget with its three corresponding agents $C_i^{1}$, $C_i^{2}$ and $C_i^{3}$. The intuition here is that $C_i^{1}$ is associated to the first literal appearing in $C_i$, $C_i^{2}$ to the second one, and $C_i^{3}$ to the third one. For any Clause-Gadget agent $A$, we let $\ell(A)$ denote the associated literal. The valuations of these agents inside the gadget are as shown in Figure~\ref{fig:cake:clause-gadget}. We say that the gadget \emph{operates correctly} if it contains exactly two cuts and the three resulting pieces go to the three agents $C_i^{1}$, $C_i^{2}$ and $C_i^{3}$. At this point we can already make a first key observation: if the gadget operates correctly, at least one of the three agents must be \emph{sad}, i.e., obtain at most one out of its three blocks of value in this gadget.

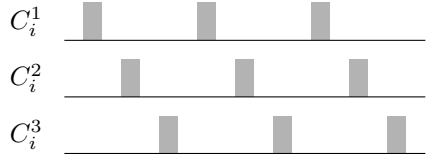
\begin{figure}
\centering
\begin{tikzpicture}[scale=1]

\fill[light-gray] (0.25,1.5) rectangle ++(0.24,0.5);
\fill[light-gray] (0.75,0.75) rectangle ++(0.24,0.5);
\fill[light-gray] (1.25,0) rectangle ++(0.24,0.5);
\fill[light-gray] (1.75,1.5) rectangle ++(0.24,0.5);
\fill[light-gray] (2.25,0.75) rectangle ++(0.24,0.5);
\fill[light-gray] (2.75,0) rectangle ++(0.24,0.5);
\fill[light-gray] (3.25,1.5) rectangle ++(0.24,0.5);
\fill[light-gray] (3.75,0.75) rectangle ++(0.24,0.5);
\fill[light-gray] (4.25,0) rectangle ++(0.24,0.5);

\draw (0,1.5) -- (4.75,1.5);
\draw (0,0.75) -- (4.75,0.75);
\draw (0,0) -- (4.75,0);

\node[font=\small] at (-0.5,1.75) {$C_i^{1}$};
\node[font=\small] at (-0.5,1) {$C_i^{2}$};
\node[font=\small] at (-0.5,0.25) {$C_i^{3}$};

\end{tikzpicture}
\caption{Clause-Gadget for clause $C_i$: the valuations of its three agents inside the gadget. Every block in this figure has value $0.24$.}
\label{fig:cake:clause-gadget}
\end{figure}

For every variable $x_j$ we introduce a Variable-Gadget with its two corresponding agents $L_j$ and $R_j$. Apart from these two agents, some Clause-Gadget agents will also have a value-block inside this gadget. In more detail, all the Clause-Gadget agents that correspond to $x_j$ or $\overline{x}_j$ will have a block of value inside the Variable-Gadget for $x_j$. Figure~\ref{fig:cake:variable-gadget} shows how the value-blocks are arranged inside the gadget. We say that the gadget operates correctly if it contains exactly one cut and the two resulting pieces go to $L_j$ and $R_j$. There is a second key observation to be made here. Assume that all gadgets operate correctly. If some agent $C_i^{k}$ with $\ell(C_i^{k}) = x_j$ (or $\overline{x}_j$) is sad, then the value-block of $C_i^{k}$ in the Variable-Gadget for $x_j$ has to contain a cut (otherwise $C_i^{k}$ would be envious). Since the Variable-Gadget contains exactly one cut, it is impossible to have agents $A$ and $B$ with $\ell(A)=x_j$ and $\ell(B)=\overline{x}_j$ that are both sad.

\begin{figure}
\centering
\begin{tikzpicture}[scale=1]

\fill[light-gray] (0.25,2.25) rectangle ++(1,0.5);
\fill[light-gray] (1.75,1.5) rectangle ++(0.28,0.5);
\fill[light-gray] (2.5,0.75) rectangle ++(0.28,0.5);
\fill[light-gray] (3.25,0) rectangle ++(1,0.5);

\draw (0,2.25) -- ++(4.5,0);
\draw (0,1.5) -- ++(4.5,0);
\draw (0,0.75)-- ++(4.5,0);
\draw (0,0) -- ++(4.5,0);

\node[font=\small] at (-0.5,2.5) {$L_j$};
\node[align=left,font=\footnotesize] at (-1,1.75) {all agents $A$: \\$\ell(A)=x_j$};
\node[align=left,font=\footnotesize] at (-1,1) {all agents $A$: \\$\ell(A)=\overline{x}_j$};
\node[font=\small] at (-0.5,0.25) {$R_j$};

\end{tikzpicture}
\caption{Variable-Gadget for variable $x_j$: the valuations of its two agents $L_j$ and $R_j$, as well as the Clause-Gadget agents that have value in this gadget. The large blocks have value $1$ each and the small blocks have value $0.28$ each.}
\label{fig:cake:variable-gadget}
\end{figure}
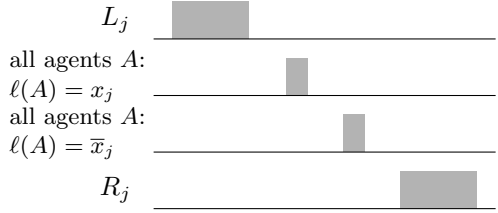

The instance is constructed by positioning the gadgets one after the other on the cake. Starting from the left and moving to the right, we first put the Clause-Gadget for $C_1$, then $C_2$, and so on until $C_m$, and then the Variable-Gadget for $x_1$, then $x_2$, and so on until $x_n$. Between adjacent gadgets we introduce a small interval without any value-blocks. We say that an envy-free allocation is \emph{nice} if all the gadgets operate correctly.

Let us now see how a nice envy-free allocation yields a satisfying assignment for the \textsc{3-sat} formula. For any agent $C_i^{k}$ that is sad, we set the corresponding literal $\ell(C_i^{k})$ to be true. This means that if $\ell(C_i^{k})=x_j$, then we set $x_j$ to be true, and if $\ell(C_i^{k})=\overline{x}_j$, then we set $x_j$ to be false. The first key observation above tells us that every Clause-Gadget has at least one sad agent. Thus, this assignment of the variables ensures that every clause is satisfied. However, we have to make sure that this assignment is consistent, i.e., we never set $x_j$ to be both true and false. This consistency is enforced by the Variable-Gadget for $x_j$ and the second key observation above.

Conversely, given a satisfying assignment for the \textsc{3-sat} formula, it is not too hard to construct a nice envy-free allocation. This proves NP-hardness for the decision problem ``Does there exist a nice envy-free allocation?''. In order to prove the result for the more natural decision problems stated in Theorem~\ref{thm:cake-NP}, the construction has to be extended with some additional work.
\end{proof}

\subsection{Proof of Theorem~\ref{thm:cake-NP}}
\label{app:cake-NP}

We provide a full proof of NP-hardness for the following decision problems:
\begin{enumerate}
    \item Does there exist an envy-free allocation in which agent $1$ gets the leftmost piece?
    \item Does there exist an envy-free allocation in which agents $1,2, \dots, k$ get the $k$ leftmost pieces, in that order? (for any constant $k \geq 1$)
    \item Does there exist an envy-free allocation in which all the agents $1,2, \dots, n$ are assigned pieces in that order from left to right?
    \item Does there exist an envy-free allocation such that there is a cut at position $x$? ($x$ given in the input)
    \item Does there exist an envy-free allocation such that the leftmost cut is at position $x$? ($x$ given in the input)
    \item Does there exist an envy-free allocation such that there are cuts at positions $x_1$, $\dots$, $x_k$? ($x_1$, $\dots$, $x_k$ given in the input, $k$ any constant)
\end{enumerate}
The problems remain NP-hard if we replace envy-freeness by $\varepsilon$-envy-freeness for any $\varepsilon \leq 0.01$.

\begin{remark*}
The list of NP-hard problems that we have provided is by no means exhaustive. The construction we provide below should be viewed as a framework for obtaining these kinds of results. Indeed, with some simple modifications, one can prove additional results of the same general flavor. In particular, one can change the constraint to ``agent $1$ gets the $k$th piece from the left ($k\geq1$ constant)'' or to ``the $k$ leftmost cuts are at positions $x_1$, $\dots$, $x_k$''.
\end{remark*}

Let $I$ be an instance of \textsc{$3$-sat} with $m$ clauses $C_1, \dots, C_m$, where each clause is made out of $3$ literals using the variables $x_1, \dots, x_n$ and their negations. Note that $m$ is polynomial in $n$ and thus we can use $n$ as the complexity parameter for the instance.
Let $\varepsilon\leq 0.01$ be arbitrary.

We will construct an instance where the cake is the interval $[0,p(n)]$ (for some polynomial $p$), instead of the usual $[0,1]$. This is just for convenience as it is easy to obtain a completely equivalent instance on $[0,1]$ in polynomial time. Indeed, it suffices to divide the position of every block by $p(n)$ and multiply its height by $p(n)$. Note that our construction also gives NP-hardness if the valuations are given in unary representation, since the positions and heights of blocks will have numerator and denominator bounded by some polynomial (even after we scale down to $[0,1]$). All the valuations we construct will be piecewise uniform, and in fact all blocks of all agents will have height $1$ (before scaling the cake to $[0,1]$), but variable length. Furthermore, value-blocks of different agents will not overlap.

\textbf{Clause-Gadget.} Consider any clause $C_i$ in the instance $I$. $C_i$ will be represented by a \emph{Clause-Gadget} in the cake cutting instance. The Clause-Gadget for $C_i$ requires an interval of length $9$ on the cake, say $[a_i,a_i+9]$, where only three specific agents are allowed to have any value. These agents are denoted by $C_i^{1}$, $C_i^{2}$ and $C_i^{3}$. The interpretation is that $C_i^{1}$ corresponds to the first literal appearing in the clause $C_i$, $C_i^{2}$ to the second one, and $C_i^{3}$ to the third one. The valuation of agent $C_i^{1}$ contains three blocks of value in the interval $[a_i,a_i+9]$: one in each of the subintervals $[a_i,a_i+1]$, $[a_i+3,a_i+4]$ and $[a_i+6,a_i+7]$. Each of these blocks has value $0.24$ (i.e., length $0.24$ and height $1$). Agents $C_i^{2}$ and $C_i^{3}$ have the same blocks as $C_i^{1}$, but shifted by $1$ and $2$ to the right respectively. The valuations of the three agents inside the Clause-Gadget are shown in Figure~\ref{fig:cake:clause-gadget}.

Note that each of the three agents has value $0.72$ inside the Clause-Gadget. The remaining $0.28$ value will be situated in a different gadget that we introduce next.

\textbf{Variable-Gadget.}  For every variable $x_j$ we introduce a \emph{Variable-Gadget} in the cake cutting instance. The Variable-Gadget for $x_j$ requires an interval of length $4$, say $[b_j,b_j+4]$, and introduces two new agents $L_j$ and $R_j$. $L_j$ has a block of value $1$ in the subinterval $[b_j,b_j+1]$, and $R_j$ has a block of value $1$ in the subinterval $[b_j+3,b_j+4]$. For every clause $C_i$ that contains $x_j$ (respectively $\overline{x}_j$) in the $\ell$th position ($\ell \in \{1,2,3\}$), the agent $C_i^{\ell}$ has a block of value $0.28$ lying at the center of the subinterval $[b_j+1,b_j+2]$ (respectively $[b_j+2,b_j+3]$). See the illustration in Figure~\ref{fig:cake:variable-gadget}.

\textbf{Instance.}  Now consider the cake-cutting instance constructed as follows: starting from the left, position all the Clause-Gadgets one after the other, leaving an interval of length $3$ after every gadget. Then, position all the Variable-Gadgets one after the other, again leaving an interval of length $3$ after every gadget. Thus, the cake is the interval $[0, 12m + 7n]$, where the first Clause-Gadget occupies the interval $[0,9]$, and the first Variable-Gadget occupies the interval $[12m,12m+4]$. There are $3m+2n$ agents so far. Note that adjacent gadgets are separated by intervals of length $3$ that we call \emph{Isolating Intervals}. There are exactly $m+n-1$ Isolating Intervals.

The $k$th Isolating Interval from the left is denoted $I_k$. The Isolating Interval $I_k=[a,a+3]$ is divided into three subintervals: $I_k[1]=[a,a+1]$, $I_k[2]=[a+1,a+2]$ and $I_k[3]=[a+2,a+3]$. Furthermore, we also add an interval of length 3 on the left end of the cake: the \emph{Initiation Interval}. We denote it by $I_0$ and it is similarly subdivided into $I_0[1]$, $I_0[2]$ and $I_0[3]$.
The cake is now represented by the interval $[0,12m+7n+3]$.

We add two new agents $S_0$ and $S_0'$. Agent $S_0$ has a block of value $1/7$ in $I_0[1]$, a block of value $2/7$ in each of $I_0[3]$, $I_1[1]$ and $I_1[3]$. Agent $S_0'$ has a block of value $1$ in $I_0[2]$. For $k \in [m+n-2]$ we define an agent $S_k$ that has a block of value $0.2$ in $I_k[2]$ and a block of value $0.4$ in each of $I_{k+1}[1]$ and $I_{k+1}[3]$. We also define an agent $S_{m+n-1}$ that has a block of value $1$ in $I_{m+n-1}[2]$. Figure~\ref{fig:cake:isolating-interval} shows the valuations of the agents in $I_0$, $I_1$ and $I_2$.
The total number of agents is $(3m+2n)+(m+n+1) = 4m+3n+1$, so there are $4m+3n$ cuts in any solution.

\begin{figure}
\centering
\begin{tikzpicture}[scale=1]

\fill[light-gray] (0.1,2.25) rectangle ++(0.15,0.5);
\fill[light-gray] (0.5,1.5) rectangle ++(1,0.5);
\fill[light-gray] (1.7,2.25) rectangle ++(0.3,0.5);

\fill[light-gray] (2.8,2.25) rectangle ++(0.3,0.5);
\fill[light-gray] (3.2,0.75) rectangle ++(0.2,0.5);
\fill[light-gray] (3.5,2.25) rectangle ++(0.3,0.5);

\fill[light-gray] (4.6,0.75) rectangle ++(0.4,0.5);
\fill[light-gray] (5.1,0) rectangle ++(0.2,0.5);
\fill[light-gray] (5.4,0.75) rectangle ++(0.4,0.5);

\draw (0,2.25) -- (2.2,2.25);\draw[dotted] (2.2,2.25) -- (2.6,2.25);\draw (2.6,2.25) -- (4,2.25);\draw[dotted] (4,2.25) -- (4.4,2.25);\draw (4.4,2.25) -- (6,2.25);
\draw (0,1.5) -- (2.2,1.5);\draw[dotted] (2.2,1.5) -- (2.6,1.5);\draw (2.6,1.5) -- (4,1.5);\draw[dotted] (4,1.5) -- (4.4,1.5);\draw (4.4,1.5) -- (6,1.5);
\draw (0,0.75) -- (2.2,0.75);\draw[dotted] (2.2,0.75) -- (2.6,0.75);\draw (2.6,0.75) -- (4,0.75);\draw[dotted] (4,0.75) -- (4.4,0.75);\draw (4.4,0.75) -- (6,0.75);
\draw (0,0) -- (2.2,0);\draw[dotted] (2.2,0) -- (2.6,0);\draw (2.6,0) -- (4,0);\draw[dotted] (4,0) -- (4.4,0);\draw (4.4,0) -- (6,0);

\node[font=\small] at (-0.5,2.5) {$S_0$};
\node[font=\small] at (-0.5,1.75) {$S_0'$};
\node[font=\small] at (-0.5,1) {$S_1$};
\node[font=\small] at (-0.5,0.25) {$S_2$};

\node at (1.1,-0.5) {$I_0$};
\node at (3.3,-0.5) {$I_1$};
\node at (5.2,-0.5) {$I_2$};

\draw[dotted] (0.35,0) -- ++(0,3);
\draw[dotted] (1.85,0) -- ++(0,3);
\draw[dotted] (2.95,0) -- ++(0,3);
\draw[dotted] (3.65,0) -- ++(0,3);
\draw[dotted] (4.8,0) -- ++(0,3);
\draw[dotted] (5.6,0) -- ++(0,3);

\end{tikzpicture}
\caption{The Initiation Interval $I_0$ and the Isolating Intervals $I_1$ and $I_2$, along with the valuations of all agents who have positive value in any of these three intervals. The vertical dotted lines indicate the position of the cuts in the envy-free allocation that we construct from a satisfying assignment.}
\label{fig:cake:isolating-interval}
\end{figure}
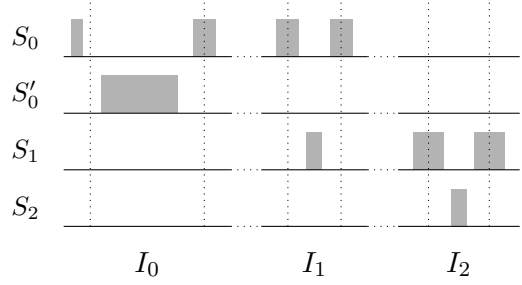

Let $\varepsilon=0.01$. 
Since an envy-free allocation always exists, the cake-cutting instance we have constructed admits an envy-free allocation (in particular also $\varepsilon$-envy-free). In order to ensure that a solution only exists if the \textsc{3-sat} formula is satisfiable, we have to add an additional constraint. An $\varepsilon$-envy-free allocation is said to satisfy the \emph{Isolation property} if together all the Clause- and Variable-Gadgets contain at most $2m+n$ cuts strictly within them.

\begin{lemma}\label{lem:isolation}
Any $\varepsilon$-envy-free allocation that satisfies the Isolation property yields a satisfying assignment for the \textsc{3-sat} formula.
\end{lemma}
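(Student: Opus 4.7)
The plan is to use the Isolation property together with $\varepsilon$-envy-freeness to force a rigid combinatorial structure on any admissible allocation, and then read off a satisfying truth assignment from the positions of the Variable-Gadget cuts. The argument splits into four steps; I expect the first to require the most care.

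First, I would show that every Clause-Gadget contains exactly $2$ interior cuts and every Variable-Gadget contains exactly $1$ interior cut. The total cut budget is $4m+3n$ and the Isolation property caps the interior-gadget total at $2m+n$. For the matching lower bound, a Clause-Gadget with at most one interior cut contains at most two chunks, so only two of $C_i^1, C_i^2, C_i^3$ can hold a piece intersecting the gadget; the displaced agent has value at most $0.28$ elsewhere (its single block in a Variable-Gadget) while another agent's in-gadget piece carries value at least $0.72$, giving envy well above $\varepsilon$. Similarly, a Variable-Gadget with no interior cut leaves one of $L_j, R_j$ with value $0$, hence envy close to $1$. Summing the $2$ cuts per Clause-Gadget and $1$ cut per Variable-Gadget saturates the Isolation bound, so the counts are exact.

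Second, within each Clause-Gadget the two interior cuts $c_1 < c_2$ carve the gadget into three chunks that are the in-gadget parts of three distinct pieces, and these pieces are allocated to $C_i^1, C_i^2, C_i^3$ in some order. Call an agent \emph{sad} if at most one of its three value-blocks lies in its chunk; I claim at least one of the three agents is always sad. The blocks of $C_i^k$ sit at local offsets $k-1$, $k+2$, $k+5$ inside the length-$9$ gadget, so an agent obtains two blocks only when its chunk spans at least two of the three block-columns; a short enumeration over the six possible agent-to-chunk assignments shows that each induces contradictory upper and lower bounds on $c_1$ and $c_2$, so the three-agent happy case is impossible.

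Third, I would turn a sad agent into a tight constraint on its Variable-Gadget cut. Suppose $C_i^k$ is sad with $\ell(C_i^k) = x_j$; the only out-of-Clause-Gadget value of $C_i^k$ is a single $0.28$-block near the midpoint of $[b_j+1,b_j+2]$ inside $V_j$, and its own piece cannot stretch across the intervening gadgets to reach $V_j$. Hence the block lies in either $L_j$'s or $R_j$'s piece, and for $C_i^k$'s envy toward both pieces to be at most $\varepsilon = 0.01$, the unique $V_j$-cut must land strictly inside the $0.28$-block and split it into two parts of $C_i^k$-value at most $0.24 + \varepsilon$ each. The resulting admissible window of cut positions is narrow and centered at $b_j + 1.5$; the symmetric window arising from a sad $\overline{x}_j$-literal agent is centered at $b_j + 2.5$, and the two windows are disjoint. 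In particular, sad agents with opposite literals on the same variable cannot coexist.

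Finally, I would set $x_j = \text{true}$ whenever the cut of $V_j$ lies in the $x_j$-window (and arbitrarily otherwise); the previous step guarantees this is well-defined. Since every Clause-Gadget contributes at least one sad agent, and the associated Variable-Gadget cut forces that agent's literal to be true under the assignment, every clause has at least one true literal, yielding the desired satisfying assignment. The main obstacle is the cut-counting of the first step: the minimum-cut lower bound must be proven carefully in the presence of pieces that straddle a gadget boundary (and with the $S$-agents' envy constraints lurking in the isolating intervals), and the arithmetic of the third step must confirm that the choice $\varepsilon = 0.01$ really does leave only the two intended narrow windows of cut positions per Variable-Gadget.
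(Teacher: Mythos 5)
Your proposal follows the same overall route as the paper's proof: use the Isolation bound together with envy lower bounds to pin down exactly $2$ interior cuts per Clause-Gadget and $1$ per Variable-Gadget, argue that the resulting chunks must go to the gadget's own agents, observe that each Clause-Gadget forces at least one ``sad'' agent, and then use the unique Variable-Gadget cut to enforce consistency.

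There is, however, one genuine issue with your formulation that would break Step 3 as written. You define \emph{sad} as ``at most one of its three value-blocks lies in its chunk,'' whereas what is actually needed (and what the paper uses) is ``obtains value at most $0.24$.'' These are not equivalent: a chunk can fully contain one block and partially overlap the two neighbouring blocks, giving an agent value close to $0.48$ while still having ``only one block lying in its chunk.'' Such an agent is sad under your definition but not under the paper's, and the crucial envy computation in Step 3 (that the $0.28$-block in the Variable-Gadget must be cut, because otherwise envy would exceed $0.28 - 0.24 = 0.04 > \varepsilon$) fails for it. Your enumeration in Step 2 should therefore be aimed at showing that for every choice of $c_1 < c_2$ and every bijection of chunks to agents, some agent receives \emph{value} at most $0.24$ in its chunk; this stronger statement is in fact true and provable by the same kind of case analysis, so the fix is to change the definition and re-run the enumeration against it. Two smaller points: the ``carries value at least $0.72$'' in Step 1 should be ``at least $0.36$'' (the $0.72$ is split among at most two parts), which still gives envy $\geq 0.08 > \varepsilon$; and you implicitly rely on the two Variable-Gadget pieces being allocated to $L_j$ and $R_j$ (otherwise the $0.28$-block might lie in a piece owned by a third agent and your ``hence the block lies in either $L_j$'s or $R_j$'s piece'' has no basis), so this should be argued, as the paper does, via the $\geq 1/2$ envy that $L_j$ or $R_j$ would otherwise suffer.
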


\begin{proof}
Consider any $\varepsilon$-envy-free allocation. If there is at most one cut strictly inside the Clause-Gadget of $C_i$, then there is an agent $C_i^{k}$ ($k \in \{1,2,3\}$) who does not obtain any of its value from this Clause-Gadget. Thus, agent $C_i^{k}$ gets value at most $0.28$ (from its corresponding Variable-Gadget). However, since the Clause-Gadget of $C_i$ is divided into at most two parts, some agent gets at least $0.72/2=0.36$ according to agent $C_i^{k}$'s valuation, which contradicts $\varepsilon$-envy-freeness. Thus, every Clause-Gadget contains at least two cuts strictly within them.

If the Variable-Gadget for $x_j$ does not strictly contain any cut, then all of it is allocated to a single agent. Necessarily, agent $L_j$ or $R_j$ would have envy $1 > \varepsilon$. Thus, every Variable-Gadget strictly contains at least one cut.

Now consider an $\varepsilon$-envy-free allocation that also satisfies the Isolation property. Since the property permits at most $2m+n$ cuts strictly inside gadgets, we get that these lower bounds on the number of cuts inside gadgets are actually tight. Thus, there are exactly two cuts strictly inside every Clause-Gadget and exactly one cut strictly inside every Variable-Gadget.

Since there is exactly one cut strictly inside the Variable-Gadget of $x_j$, the two resulting parts must go to agents $L_j$ and $R_j$. Indeed, if one of these two agents does not get one of the two parts, then the agent would have envy at least $1/2 > \varepsilon$. Similarly, since there are exactly two cuts strictly inside the Clause-Gadget of $C_i$, the three resulting parts must go to agents $C_i^{1}$, $C_i^{2}$ and $C_i^{3}$.
Indeed, if one of these three agents does not get one of the three parts, the agent would have value $0$ (as she cannot get any value from the corresponding variable gadget) and therefore have envy at least $0.24 > \varepsilon$.

We now show how such a solution yields a satisfying assignment to the \textsc{3-sat} instance. Consider the Clause-Gadget of $C_i$. As we showed above, there are exactly two cuts strictly inside the gadget and the three resulting parts go to the agents $C_i^{1}$, $C_i^{2}$ and $C_i^{3}$. Any of these three agents who obtains at most $0.24$ of its own value is called \emph{sad}. By inspection of the construction of the Clause-Gadget it follows that at least one of the three agents must be sad. Indeed, it is easy to check that if $C_i^{1}$ is not sad, then at least one of the other two must be. The fact that any Clause-Gadget must have at least one sad agent will be used to encode the fact that any clause of the \textsc{3-sat} instance must have at least one literal set to $1$. Thus, if $C_i^{k}$ is sad, this means that we set the literal corresponding to $C_i^{k}$ to have the value $1$.

It remains to check that this is consistent, i.e., that we never set the two literals $x_j$ and $\overline{x}_j$ to both be $1$. In other words, we have to check that if some agent $C_i^{k}$ corresponding to the literal $\ell \in \{x_j,\overline{x}_j\}$ is sad, then all agents corresponding to $\overline{\ell}$ are not sad. If agent $C_i^{k}$ is sad, then it gets value at most $0.24$. Agent $C_i^{k}$ has a block of value $0.28$ in the Variable-Gadget of $x_j$. This block must contain a cut, otherwise $C_i^{k}$ would have envy at least $0.04 > \varepsilon$. But since there is a single cut inside the Variable-Gadget, the blocks of all agents corresponding to $\overline{\ell}$ are not cut. As a result, these agents cannot be sad.
\end{proof}

\begin{claim}\label{clm:value-larger-zero}
In any $\varepsilon$-envy-free allocation for this instance, every agent obtains a nonzero value.
\end{claim}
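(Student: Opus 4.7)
The plan is to argue by contradiction, exploiting two structural features of the construction: all blocks have uniform height $1$ and different agents' block regions are pairwise disjoint. Let $Z$ denote the (hypothetically non-empty) set of agents whose allocated piece has value $0$, and for each agent $A$ let $U_A$ denote the union of $A$'s block intervals. Three facts will be used throughout: (i) every agent's valuation has total support length $1$, combining normalization with blocks having height $1$; (ii) by disjointness of blocks, only $A$ itself has positive value density inside $U_A$; and (iii) by inspecting the construction, every agent has at most $4$ block intervals (the Clause agents and $S_0$ have exactly $4$, and every other agent has at most $3$).

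Fix any $A \in Z$. Since $v_A(\mathrm{piece}_A) = 0$, the $\varepsilon$-envy-free condition with $\varepsilon \leq 0.01$ yields $v_A(\mathrm{piece}_i) \leq \varepsilon$ for every $i$. Together with $\sum_i v_A(\mathrm{piece}_i) = 1$ this forces at least $\lceil 1/\varepsilon \rceil \geq 100$ pieces to have positive $A$-value, equivalently, to meet $U_A$ in a set of positive measure. Among these, at most $2 \cdot 4 = 8$ can extend beyond $U_A$: such a piece must contain a point of $\partial U_A$ in its interior, and $\partial U_A$ has at most $8$ points, each lying in the interior of at most one piece. Hence at least $92$ pieces are contained entirely in $U_A$, and by (ii) each of them has zero value for every agent other than $A$. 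Since $A$ receives only one piece, at least $91$ of these $92$ pieces go to distinct other agents, and each of those agents therefore has zero value and belongs to $Z$.

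Now define a directed graph $G$ on vertex set $Z$ with an edge $A \to B$ whenever $B \neq A$ and $B$'s piece is contained in $U_A$. The previous paragraph shows every vertex of $G$ has out-degree at least $91$. On the other hand, since the regions $U_C$ are pairwise disjoint over \emph{all} agents $C$, any fixed piece lies in at most one $U_C$, and so every vertex of $G$ has in-degree at most $1$. Equating the two degree sums gives
\[ 91\,|Z| \;\leq\; \sum_{A \in Z} \mathrm{outdeg}(A) \;=\; \sum_{B \in Z} \mathrm{indeg}(B) \;\leq\; |Z|, \]
which forces $|Z| = 0$, contradicting the assumption that $Z$ is non-empty.

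The step I expect to be the main obstacle is spotting the cascade. The direct observation that a zero-value agent $A$ must have $\geq 99$ cuts crammed inside $U_A$ is not by itself a contradiction, because the overall number of cuts $4m+3n$ can be arbitrarily large. What breaks the symmetry is that those cuts create many interior pieces, which must be assigned to other agents who then inherit zero value; the disjointness of the $U_C$'s is exactly what caps the in-degree at $1$ and allows the double counting to close.
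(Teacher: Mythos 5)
Your proof is correct, and it takes a genuinely different route from the paper's. The paper's argument is a chain-extension-with-backtracking: starting from a zero-value agent $X_0$, it finds an agent $X_1$ whose piece lies strictly inside one of $X_0$'s blocks (such slices exist because a block of value at least $0.2$ gets chopped into slivers of value at most $\varepsilon$), then $X_2$ inside a block of $X_1$ chosen to differ from $X_0$, and so on; pigeonhole gives a repeat $X_i = X_j$, disjointness of blocks forces $X_{i-1} = X_{j-1}$, and backtracking eventually yields $X_0 = X_\ell$ for some $\ell > 0$, which the construction explicitly avoided. Your argument replaces this local chain-following with a global double-count on the directed graph of ``piece contained in support'' relations among zero-value agents: out-degree at least $91$, in-degree at most $1$. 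Both proofs hinge on the same structural facts (pairwise disjoint block supports, $\varepsilon$ small enough that a zero-value agent's support is divided among many pieces). Your version dispenses with the backward induction and the ad hoc ``always pick $X_{k+1} \neq X_0$'' step, which is cleaner; the trade-off is that it is more quantitative, invoking both the height-$1$ normalization and the at-most-four-blocks cap (to bound $|\partial U_A|$ by $8$), whereas the paper uses only that every agent has a single sufficiently wide block. Note that the final inequality needs only out-degree $\geq 2$, so the $91$ is comfortable slack. Either proof is adequate.
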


\begin{proof}
Assume on the contrary that some agent $X_0$ obtains value $0$. Note that $X_0$ (like all agents) has a block of value at least $0.2$ somewhere on the cake such that no other agent has any value there. Since the allocation is $\varepsilon$-envy-free, it follows that this block must be cut into slices of value at most $\varepsilon$. Let $X_1$ be an agent that is assigned one of the slices strictly contained in this block. Agent $X_1$ also obtains value $0$, and it must also have a block of value at least $0.2$ somewhere on the cake such that no other agent has any value there. This block must also be cut in slices of value at most $\varepsilon$, and since there are at least two slices that lie strictly inside the block, there exists such a slice that is not assigned to agent $X_0$, but rather to some agent $X_2$. We continue this procedure, always ensuring that we pick some agent that is not $X_0$ (which is always possible). Since the number of agents is finite, there exist $i < j$ such that $X_i = X_j$. If $i > 0$, then one can check that we necessarily have $X_{i-1} = X_{j-1}$. Thus, there exists $\ell > 0$ such that $X_0 = X_\ell$. However, this is impossible due to our choice of $X_\ell$, a contradiction.
\end{proof}

\paragraph*{Fixing the ordering of agents.}

\begin{claim*}
If there exists an $\varepsilon$-envy-free allocation in which agent $S_0$ gets the leftmost piece, then the \textsc{3-sat} formula is satisfiable.
\end{claim*}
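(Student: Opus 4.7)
The plan is to show that any $\varepsilon$-envy-free allocation in which $S_0$ is assigned the leftmost piece automatically satisfies the Isolation property; Lemma~\ref{lem:isolation} then extracts a satisfying assignment for the \textsc{3-sat} formula. I start with a ``no-span'' lemma: no piece in any $\varepsilon$-envy-free allocation contains an entire gadget. If some piece $P$ contains a Clause-Gadget $C_i$, then each $C_i^k$ has $v_{C_i^k}(P) = 0.72$ while her own piece provides at most $0.28$ (her one block in the relevant Variable-Gadget), yielding envy $\geq 0.44 > \varepsilon$. If $P$ contains a Variable-Gadget for $x_j$, then $L_j$ and $R_j$ would receive value $0$, contradicting Claim~\ref{clm:value-larger-zero}.

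Combining the no-span lemma with ``$S_0$ is leftmost'' pins down $P_1 = [0, x_1]$: $P_1$ cannot cross $C_1$, so $P_1 \subseteq I_0$; $P_1$ cannot contain all of $I_0[2]$ (else $S_0'$ gets value $0$); hence $P_1 \subseteq I_0[1] \cup I_0[2]$ and $v_{S_0}(P_1) \leq 1/7$. By $\varepsilon$-envy-freeness, no other piece has $v_{S_0}$-value exceeding $1/7 + \varepsilon < 2/7$, so each of $S_0$'s three blocks of value $2/7$ (in $I_0[3]$, $I_1[1]$, and $I_1[3]$) must contain a cut. Together with $x_1$ itself, this gives at least $2$ cuts strictly inside $I_0$ and at least $2$ cuts strictly inside $I_1$.

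The main step is an induction along the chain of separator agents: I show that for each $k \in [m+n-2]$, $S_k$'s piece lies entirely inside $I_k$ (with $v_{S_k}$-value equal to that of her $I_k[2]$ block, namely $0.2$) and $I_{k+1}$ contains at least $2$ cuts strictly inside. Assuming the inductive hypothesis for $k-1$, the two cuts already forced inside $I_k$ (by the envy-freeness of $S_{k-1}$, whose value is at most $0.2$, or $1/7$ when $k=1$) lie in $I_k[1]$ and $I_k[3]$, and the piece strictly between these two cuts is contained in $I_k$ and covers all of $I_k[2]$. Since only $S_{k-1}$ and $S_k$ have positive value inside $I_k$, and $S_{k-1}$'s piece is in $I_{k-1}$ by the inductive hypothesis, Claim~\ref{clm:value-larger-zero} forces this middle piece to be $S_k$'s. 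Her value is therefore exactly $0.2$, so her two blocks in $I_{k+1}[1]$ and $I_{k+1}[3]$ (each of value $0.4 > 0.2 + \varepsilon$) must each be cut, producing at least $2$ cuts strictly inside $I_{k+1}$.

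Summing over $k$, at least $2(m+n)$ cuts lie strictly inside the intervals $I_0, I_1, \ldots, I_{m+n-1}$, so at most $(4m+3n) - 2(m+n) = 2m+n$ cuts lie strictly inside the gadgets---the Isolation property. Lemma~\ref{lem:isolation} then delivers a satisfying assignment. The main obstacle is the identification of the middle piece's holder as $S_k$ in the inductive step: this requires the no-span lemma (which rules out a piece covering $I_k$ by extending into both adjacent gadgets without containing either) together with the inductive hypothesis (which places $S_{k-1}$'s piece outside $I_k$), so one must check these preclude all other candidates before concluding.
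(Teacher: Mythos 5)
Your proposal follows essentially the same route as the paper: pin $S_0$'s piece to a strict prefix of $[0,2]$ via $S_0'$, deduce $v_{S_0}(P_1)\leq 1/7$, force cuts through $S_0$'s three $2/7$-blocks, and then run an induction along the chain $S_1, S_2, \dots$ to force two cuts in each Isolating Interval, finally counting to obtain the Isolation property and invoking Lemma~\ref{lem:isolation}. The one genuinely correct, self-contained engine of your proof is the same as the paper's.

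That said, two points merit comment. First, the ``no-span'' lemma is not actually doing any work, and the one place where you lean on it explicitly is a non sequitur: ``$P_1$ cannot cross $C_1$, so $P_1 \subseteq I_0$'' does not follow, since the lemma only forbids $P_1$ from \emph{containing} the whole Clause-Gadget, not from protruding into it. Fortunately your next clause (``$P_1$ cannot contain all of $I_0[2]$'') already gives the stronger containment $P_1 \subseteq [0,2)$, so the earlier step is superfluous rather than fatal; likewise, the second invocation of no-span in the inductive step is unnecessary, because once you have two cuts strictly inside $I_k$ no piece can cover $I_k$ at all, gadget or not. Second, your inductive step speaks of ``the piece strictly between these two cuts'' covering all of $I_k[2]$ and being $S_k$'s, which implicitly assumes there is no third cut between the two forced ones. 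This is true, but it needs an argument---e.g., an extra cut would create a piece strictly inside $I_k$ that is not $S_k$'s and not $S_{k-1}$'s, contradicting Claim~\ref{clm:value-larger-zero}. The paper sidesteps this by a more robust phrasing (``some agent is allocated an interval in $I_k$; by Claim~\ref{clm:value-larger-zero} and the inductive hypothesis it must be $S_k$''), which avoids ever having to identify which piece is the middle one or fix the exact number of cuts. Your version is fixable with one more sentence, but as written there is a small gap at that point.
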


\begin{proof}
In any $\varepsilon$-envy-free allocation in which agent $S_0$ gets the leftmost piece, the piece allocated to $S_0$ will be a strict prefix of $I_0[1] \cup I_0[2] = [0,2]$. Indeed, if $S_0$ were allocated all of $[0,2]$, then agent $S_0'$ would have envy $1$. It follows that agent $S_0$ will obtain value at most $1/7$. As a result, the three blocks of value $2/7$ of $S_0$ must each contain at least one cut. Also, note that the Initiation Interval $I_0$ contains at least two cuts.

We now know that the two blocks of value of $S_0$ in $I_1[1]$ and $I_1[3]$ must each contain a cut. We show that agent $S_1$ must be allocated some interval in $I_1$. Suppose for the sake of contradiction that this is not the case. Then, some agent $X_0$ must be allocated an interval in $I_1$, since there are at least two cuts inside $I_1$. But this agent cannot be $S_0$ or $S_1$, so it will obtain value $0$. However, by Claim~\ref{clm:value-larger-zero}, this is impossible.

Thus, $S_1$ must be allocated some interval in $I_1$. It follows that $S_1$ obtains value at most $0.2$. This, in turn, implies that the two blocks of value $0.4$ of $S_1$ in $I_2$ must each contain a cut. This means that we can repeat the argument above to show that $S_2$ must be allocated an interval in $I_2$. By induction it follows that every Isolating Interval contains at least $2$ cuts. Thus, we have shown that at least $2 + 2(m+n-1)=2m+2n$ cuts do not lie inside any Clause- or Variable-Gadget. This means that at most $(4m+3n)-(2m+2n)=2m+n$ cuts lie strictly inside a Clause- or Variable-Gadget, and so the Isolation property holds. By Lemma~\ref{lem:isolation}, any $\varepsilon$-envy-free allocation in which $S_0$ gets the leftmost piece will yield a satisfying assignment to the \textsc{3-sat} instance.
\end{proof}

We define the \emph{standard ordering} of allocation as follows. Starting from the left, the first piece goes to agent $S_0$ and the second piece to $S_0'$. The rest of the agents are ordered according to the order of appearance of their gadget in the instance. For this purpose, we treat every Isolating interval $I_k$ as a gadget with corresponding agent $S_k$. Within the Clause-Gadget for $C_i$, the corresponding agents appear in the order $C_i^{1}$, $C_i^{2}$, $C_i^{3}$. Within the Variable-Gadget for $x_j$, the corresponding agents appear in the order $L_j$, $R_j$. This yields a unique full ordering of all the agents in the instance.

\begin{claim*}
If the \textsc{3-sat} formula is satisfiable, then there exists an envy-free allocation in which the pieces are allocated to the agents according to the standard ordering.
\end{claim*}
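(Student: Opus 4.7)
Given a satisfying assignment $\phi$, the plan is to construct an explicit allocation that realises the standard ordering, then verify envy-freeness agent by agent. First, for each clause $C_i$, I pick a literal satisfied by $\phi$ and let $k_i \in \{1,2,3\}$ be its position within $C_i$; the agent $C_i^{k_i}$ will be the designated sad agent of $C_i$. The consistency of $\phi$ ensures that for every variable $x_j$, all chosen sad literals involving $x_j$ share a common polarity (either all $x_j$ or all $\overline{x}_j$), which is what will allow a single Variable-Gadget cut for $x_j$ to split simultaneously every relevant $0.28$-block.

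I then place the $4m+3n$ cuts so that reading the pieces left-to-right yields exactly the standard ordering: in the Initiation Interval $I_0$, one cut just past $S_0$'s $1/7$-block in $I_0[1]$ and the other at the midpoint of $S_0$'s $2/7$-block in $I_0[3]$; in each Clause-Gadget $C_i$, two cuts chosen (with the exact positions depending on $k_i$) so that $C_i^{k_i}$ gets exactly one of its value-blocks while the other two Clause-Gadget agents each get two of theirs; in each Isolating Interval $I_k$ for $k\ge 1$, two cuts at the midpoints of $S_{k-1}$'s two large blocks in $I_k[1]$ and $I_k[3]$; and in each Variable-Gadget $X_j$, a single cut inside the $\phi$-consistent sad agents' $0.28$-blocks, at a common point that splits all of them at once by the consistency property.

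The verification proceeds by cases. Agents $S_0'$, $L_j$, $R_j$, and $S_{m+n-1}$ each receive their full value-$1$ block and therefore have no envy. Non-sad Clause-Gadget agents have own-piece value at least $0.48$, which dominates the $0.28$-value of their uncut Variable-Gadget block and of any Clause-Gadget piece they do not receive. Sad Clause-Gadget agents have own-piece value $0.24$, matched by the halved value (at most $0.14$) that they see in each of $L_j$'s and $R_j$'s pieces as a consequence of the Variable-Gadget cut through their $0.28$-block, and by the at-most-$0.24$ value they see in other Clause-Gadget pieces. For each $S_k$ agent (including $S_0$), the midpoint-cut placements in the adjacent isolating interval distribute each large value-block symmetrically into the two pieces flanking $S_k$'s piece, so that the values seen in those flanking pieces are bounded by $S_k$'s own-piece value.

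The main obstacle is this last envy check for the $S_k$ chain, and in particular for $S_0$: each $S_k$'s large blocks straddle the isolating intervals in a way that requires the midpoint-cut placements in $I_k$ and $I_{k+1}$ to jointly respect both $S_{k-1}$'s and $S_k$'s envy constraints, and showing this joint compatibility requires careful accounting of which half-blocks land in which piece -- relying crucially on the chosen block positions inside each subinterval and on the specific cut positions determined by the construction. Once that bookkeeping is in place, the verification concludes that the constructed allocation is envy-free and realises the standard ordering.
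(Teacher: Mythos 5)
You follow the same high-level construction as the paper: pick a satisfied literal per clause to designate the sad agent, place cuts through the midpoints of $S_k$'s big blocks, cut the appropriate half of each Variable-Gadget, and cut each Clause-Gadget according to the satisfying assignment. However, the step you explicitly punt on---the envy check for the $S_k$ chain---does not close under the construction as stated, and the gap is substantive rather than mere bookkeeping. $S_0$'s two big $2/7$-blocks in $I_1[1]$ and $I_1[3]$ are cut at their midpoints, and these two cuts are exactly the boundaries of $S_1$'s piece; so $S_1$'s piece contains the right half of one block and the left half of the other, for a total $S_0$-value of $2/7$, strictly exceeding $S_0$'s own piece value of $1/7$. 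The same happens for every $k\geq 1$: the two $0.4$-blocks of $S_k$ in $I_{k+1}$ contribute a combined $0.4$ to $S_{k+1}$'s piece while $S_k$ itself gets only $0.2$. Your phrasing that the blocks are split ``symmetrically into the two pieces flanking $S_k$'s piece'' misidentifies the geometry: those blocks flank $S_{k+1}$'s piece, not $S_k$'s own piece, and the middle piece between the two midpoint cuts collects both inner halves. So the bookkeeping you defer does not rescue the construction; if carried out, it refutes the claimed envy-freeness.

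For what it is worth, the paper's own proof is similarly terse at this point, asserting envy-freeness for $S_0$ on the grounds that ``its three blocks of value $2/7$ have all been cut in half,'' which does not by itself rule out two halves landing in the same adjacent piece (and they do). The construction appears to need rebalanced block values---for $S_0$, a small-block value $a$ and large-block value $b$ with $a<b\leq 1.5a$ and $a+3b=1$, say $a=2/11$, $b=3/11$ rather than $1/7$, $2/7$, and analogously for the $S_k$---while keeping $b>a$, which the converse direction uses to force a cut in every big block. Be that as it may, your proposal relies on a verification step that would fail as written, and acknowledging it as ``the main obstacle'' without resolving it leaves the argument incomplete.
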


\begin{proof}
Given a satisfying assignment, we show how to construct an envy-free allocation such that the pieces are allocated to the agents according to the standard ordering. Place a cut at position $1$ and through the middle of every block of $S_0$ of value $2/7$. Also place a cut through the middle of every block of value $0.4$ of $S_k$, $1 \leq k \leq m+n-2$. Allocate the leftmost piece to $S_0$ and the next piece to $S_0'$. Allocate the piece between the two cuts in the Isolating interval $I_k$ to agent $S_k$. Note that no matter how we allocate the remaining parts of the cake, the agents $S_0'$, $S_0$, $S_1$, $\dots$, $S_{m+n-1}$ will definitely be envy-free. $S_0'$ and $S_{m+n-1}$ have obtained all of their value. $S_0$ has obtained value $1/7$, but its three blocks of value $2/7$ have all been cut in half. Finally, for $1 \leq k \leq m+n-2$, $S_k$ has obtained value $0.2$, but its two blocks of value $0.4$ have also been cut in half. Figure~\ref{fig:cake:isolating-interval} shows the positions of the cuts in $I_0$, $I_1$ and $I_2$.

Depending on whether $x_j=1$ or $\overline{x}_j=1$ place a cut in the middle of the region corresponding to $x_j$ or $\overline{x}_j$ respectively inside the Variable-Gadget of $x_j$. Allocate the left piece to $L_j$ and the right piece to $R_j$. Note that $L_j$ and $R_j$ obtain all of their value.

Finally, for every clause $C_i$ pick one of its literals that is $1$ and let $C_i^k$ be the associated agent. We position two cuts inside the gadget such that $C_i^k$ gets one block of its own value, and the other two Clause-Gadget agents each get two blocks of their own value. While doing so, we also ensure that these other two agents each get one of the two remaining blocks of $C_i^k$ inside the gadget. Note that this is always possible and in fact we can also ensure that the three pieces are allocated to the agents $C_i^{1}, C_i^{2}, C_i^{3}$ in that order from left to right. $C_i^k$ has thus obtained value $0.24$ and its two other $0.24$-blocks have been allocated to two distinct agents. The last remaining block, which has value $0.28$ and lies in the corresponding Variable-Gadget, has been cut in half according to the procedure above describing how to place the cut in a Variable-Gadget. Thus, $C_i^k$ is envy-free. Now consider any of the two other agents of this Clause-Gadget. Such an agent has obtained $0.48$ of its value. $0.24$ of its value has been allocated to other agents in this Clause-Gadget, and $0.28$ of its value has been allocated to Variable-Gadget agents. Thus, this agent is also envy-free.
\end{proof}

Using these two claims we immediately obtain that the decision problems 1, 2, and 3 are NP-hard (with envy-freeness or $\varepsilon$-envy-freeness).

\paragraph*{Fixing cuts.}

\begin{claim*}
In any $\varepsilon$-envy-free allocation in which there is a cut at position $1$, the leftmost piece must be assigned to agent $S_0$.
\end{claim*}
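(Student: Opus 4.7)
The plan is straightforward once Claim~\ref{clm:value-larger-zero} is in hand. First, I would note that the hypothesis places a cut at position $1$, so the leftmost cut $x_1$ satisfies $x_1 \le 1$ and the leftmost piece $[0, x_1]$ is contained in $I_0[1] = [0,1]$.

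Second, I would check, by direct inspection of the construction, that $S_0$ is the unique agent with any positive valuation density on $[0,1]$. The block of value $1/7$ of $S_0$ lies in $I_0[1]$; by contrast, $S_0'$'s block sits in $I_0[2] = [1,2]$ and, by non-atomicity, contributes nothing to any subinterval of $[0,1]$, while every other agent (the Clause-Gadget agents $C_i^{k}$, the Variable-Gadget agents $L_j$ and $R_j$, and the remaining Isolating-Interval agents $S_k$ with $k \ge 1$) has all value-blocks positioned at cake-coordinate $\ge 3$, hence outside $I_0$ entirely.

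Finally, I would invoke Claim~\ref{clm:value-larger-zero}: the agent receiving $[0, x_1]$ obtains nonzero total value, so this agent must have positive value on $[0, x_1] \subseteq [0,1]$. Combined with the inventory above, this forces the recipient to be $S_0$. The only mildly delicate step is the bookkeeping of the second paragraph, confirming that the supports of all non-$S_0$ agents avoid $[0,1]$; everything else follows in a single line from the previously-established nonzero-value property.
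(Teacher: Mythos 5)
Your proof is correct and takes essentially the same route as the paper: observe that a cut at $1$ forces the leftmost piece into $[0,1]$, note that $S_0$ is the only agent with a value-block intersecting $[0,1]$, and invoke Claim~\ref{clm:value-larger-zero} to rule out assigning that piece to any other agent. You have simply spelled out the support bookkeeping a bit more explicitly than the paper does.
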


\begin{proof}
Since there is a cut at position $1$, the leftmost piece can only contain value for agent $S_0$. Thus, by Claim~\ref{clm:value-larger-zero} it cannot be allocated to any other agent.
\end{proof}

On the other hand, given a satisfying assignment for the \textsc{3-sat} formula, we can always ensure that the corresponding envy-free allocation that we construct has a cut at position $1$. In fact, there are many more cuts that are fixed (and do not depend on what the satisfying assignment is), namely, the two cuts in each Isolating interval.

Using this observation along with the claim above, we get that the decision problems 4, 5, and 6 are NP-hard (with envy-freeness or $\varepsilon$-envy-freeness).

\section{Hardness for Indivisible Items}
\label{sec:hardness-indiv}

We now turn to a discrete analog of cake cutting, where we wish to allocate a set of indivisible items that lie on a line subject to the requirement that each agent must receive a contiguous block.
As in cake cutting, we assume that the valuations of the agents over the items are additive, and that all items must be allocated.
Besides envy-freeness, we consider the classical fairness notions of proportionality and equitability.
An allocation is \emph{proportional} if every agent receives value at least $1/n$ times her value for the whole set of items, and \emph{equitable} if all agents receive the same value.

Unlike in cake cutting, for indivisible items there may be no allocation satisfying any of the three fairness properties, e.g., when two agents try to divide a single item.
\cite{BouveretCeEl17} showed that deciding whether an envy-free allocation exists is NP-hard for additive valuations, and the same is true for proportionality; they did not consider equitability.
In this section, we extend and strengthen their results in several ways.
We consider \emph{binary valuations}, which are additive valuations such that the value of each agent for each item is either $0$ or $1$.
In other words, an agent either ``wants'' an item or not.
Even though binary valuations are much more restricted than additive valuations, as we will see, several problems still remain hard even for this smaller class.

First, we show that deciding whether a fair allocation exists is NP-hard for each of the three fairness notions mentioned.
This hardness result holds for \emph{any} non-empty combination of the three notions and even if all agents want the same number of items.
Moreover, we present a reduction that establishes the hardness for all combinations in one fell swoop.
We remark that the techniques of \cite{BouveretCeEl17} do not extend to the binary domain because each agent can have different values for different items in their construction. One may try to fix this by breaking items into smaller items to obtain a binary valuation, but each agent will require a different way of breaking items, and moreover there will be allocations in the new instance that cannot be mapped back to those in the original instance.

\begin{figure*}[!ht]
\centering
\begin{tikzpicture}[scale=1]
\draw (0,0) rectangle ++(0.8,1.2);
\draw (1,0) rectangle ++(0.8,1.2);
\draw (2,0) rectangle ++(0.8,1.2);
\draw (3,0) rectangle ++(0.8,1.2);
\draw (4,0) rectangle ++(0.8,1.2);
\node at (0.4,0.85) {$X_j$};
\node at (0.4,0.3) {$\overline{X}_j$};
\node at (1.4,0.85) {$X_j$};
\node at (1.4,0.3) {$\overline{X}_j$};
\node at (2.4,0.57) {$X_j$};
\node at (3.4,0.6) {$C_{i_p}^{k_p}$};
\node at (4.4,0.6) {$C_{i_p}^{k_p}$};
\node at (5.15,0.6) {$\dots$};
\draw (5.5,0) rectangle ++(0.8,1.2);
\draw (6.5,0) rectangle ++(0.8,1.2);
\draw (7.5,0) rectangle ++(0.8,1.2);
\draw (8.5,0) rectangle ++(0.8,1.2);
\draw (9.5,0) rectangle ++(0.8,1.2);
\node at (5.9,0.6) {$C_{i_q}^{k_q}$};
\node at (6.9,0.6) {$C_{i_q}^{k_q}$};
\node at (7.9,0.85) {$X_j$};
\node at (7.9,0.3) {$\overline{X}_j$};
\node at (8.9,0.6) {$C_{i_r}^{k_r}$};
\node at (9.9,0.6) {$C_{i_r}^{k_r}$};
\node at (10.65,0.6) {$\dots$};
\draw (11,0) rectangle ++(0.8,1.2);
\draw (12,0) rectangle ++(0.8,1.2);
\draw (13,0) rectangle ++(0.8,1.2);
\node at (11.4,0.6) {$C_{i_s}^{k_s}$};
\node at (12.4,0.6) {$C_{i_s}^{k_s}$};
\node at (13.4,0.6) {$\overline{X}_j$};
\end{tikzpicture}
\caption{Variable-Gadget for variable $x_j$. Every item is represented by a rectangle containing the agent(s) who value it.}
\label{fig:variable-gadget}
\end{figure*}
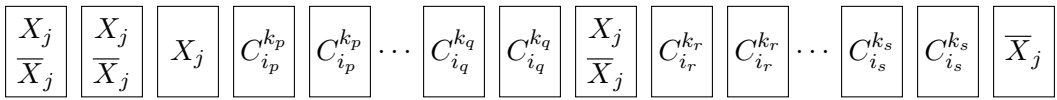

\begin{theorem}
\label{thm:indivisible-NP-combination}
Let $$F = \{\text{envy-freeness, proportionality, equitability}\},$$ and let $\emptyset\neq X\subseteq F$.
Deciding whether an instance with indivisible items on a line admits a contiguous allocation satisfying all properties in $X$ is NP-hard, even if all agents have binary valuations and value the same number of items.
\end{theorem}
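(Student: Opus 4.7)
The plan is to reduce from 3-SAT using the Variable-Gadget depicted in Figure~\ref{fig:variable-gadget}. Given a formula with variables $x_1,\ldots,x_n$ and clauses $C_1,\ldots,C_m$, I would introduce two variable agents $X_j,\bar{X}_j$ per variable and three clause agents $C_i^1,C_i^2,C_i^3$ (one per literal) per clause. The items are placed on a line as a concatenation of one Variable-Gadget per variable; inside each gadget, items wanted only by $X_j$ anchor the left, items wanted only by $\bar{X}_j$ anchor the right, and two-item blocks wanted by each clause agent whose literal uses $x_j$ or $\bar{x}_j$ are interleaved with a few items wanted by both $X_j$ and $\bar{X}_j$, arranged so that the clause agents associated with $x_j$ sit on the opposite side of a central both-wanted item from those associated with $\bar{x}_j$. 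All counts are tuned so that every agent values exactly the same (small constant) number of items across the whole line, matching the uniform-support property required by the theorem.

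The heart of the argument would be a gadget lemma: in any contiguous allocation that is proportional, equitable, or envy-free, each Variable-Gadget splits into exactly two contiguous pieces, one assigned to $X_j$ and one to $\bar{X}_j$, separated by a single boundary cut; the side of the central both-wanted item on which this boundary falls plays the role of the truth value of $x_j$; and a clause agent $C_i^k$ can receive its desired two-item block only when the corresponding literal is set to true by these boundary cuts. Since every agent's positive value is binary, each of the three fairness conditions in $F$ forces every clause agent to obtain at least one wanted item, so a fair contiguous allocation exists if and only if the 3-SAT instance is satisfiable. Because this equivalence is independent of which property or combination in $F$ is imposed, a single reduction establishes NP-hardness for all non-empty $X\subseteq F$ simultaneously. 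In the forward direction, cuts placed according to a satisfying assignment give every agent the same positive count of wanted items, hence an allocation that is envy-free, proportional, and equitable at once; in the reverse direction, the gadget lemma translates a fair allocation back into a consistent truth assignment satisfying every clause.

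The main obstacle I expect to face is the uniform-support requirement, which is the genuinely new ingredient over the construction of \citet{BouveretCeEl17}. Forcing every agent to value the same number of items compels us to pad the instance---for example, with filler items at the ends of each gadget or with a small number of auxiliary balancer agents---in a way that does not create new contiguous allocations that spuriously satisfy the weakest property in $F$ (proportionality, since a single wanted item can already meet the $1/n$ threshold when $n$ is moderately large). Verifying that the forced structure in the gadget lemma survives this padding is the delicate step; once that is in hand, the equivalence between satisfiability and the existence of a fair allocation reduces to routine combinatorics on the gadget boundaries.
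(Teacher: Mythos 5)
Your sketch is missing a key gadget and, as a result, does not yield a correct reduction. You build only a line of Variable-Gadgets, so a clause agent $C_i^k$ values items in exactly one place: its two-item block inside the Variable-Gadget for the variable of $\ell(C_i^k)$. Under any of the fairness notions in $F$, $C_i^k$ must then obtain an item from that very block, which forces the corresponding literal to be true. Applying this to \emph{every} clause agent forces \emph{every} literal appearing in the formula to be true, which is far stronger than satisfiability (and is impossible as soon as some variable occurs both positively and negatively). So the forward direction ``satisfiable $\Rightarrow$ fair allocation exists'' breaks. The paper avoids this by inserting a \emph{Clause-Gadget} of four items valued by all three agents $C_i^1,C_i^2,C_i^3$: at most two of those agents can be sated there, so \emph{at least one} must get its items from a Variable-Gadget --- exactly mirroring the ``at least one true literal per clause'' semantics of \textsc{3-SAT}, not ``all literals true.''

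There is also an internal inconsistency in your ``gadget lemma.'' You claim that each Variable-Gadget splits into exactly two pieces allocated to $X_j$ and $\overline{X}_j$; but then no clause agent ever receives its block, so clause agents get zero value, violating proportionality and envy-freeness. In the paper's construction, the Variable-Gadget is deliberately carved into several pieces, with $X_j$ and $\overline{X}_j$ each receiving two of their four valued items and various clause agents receiving their two-item blocks in between.

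Finally, the uniform-support requirement needs a concrete construction, not hand-waved ``filler items or a few balancer agents.'' The paper appends a single large \emph{Special-Gadget} with $3m+2n+7$ new agents and $6m+4n+14$ new items, carefully tuned so that every agent in the final instance values exactly $6m+4n+14$ items; the number of agents is $6m+4n+7$, so proportionality forces every agent to obtain at least $\lceil (6m+4n+14)/(6m+4n+7)\rceil=2$ valued items, and the new agents necessarily consume all Special-Gadget items. This is what makes proportionality (the weakest notion) bite, and it also pins $s=2$ for equitability. Without a construction of this kind, your worry that ``a single wanted item can already meet the $1/n$ threshold'' is exactly the failure mode, and your reduction would not be sound for $X=\{\text{proportionality}\}$ or $X=\{\text{equitability}\}$.
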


\begin{proof}
We prove this result with a single reduction. 
Let $I$ be an instance of \textsc{3-SAT} with $m$ clauses $C_1, \dots C_m$ using the variables $x_1, \dots, x_n$ and their negations. 
We create the following gadgets.
\begin{itemize}
    \item Clause-Gadget: For every clause $C_i$ we introduce three agents: $C_i^1,C_i^2,C_i^3$. 
    Each of these agents is associated with one of the three literals that appear in the clause $C_i$. 
    We denote by $\ell(C_i^k)$ the literal associated with $C_i^k$. For every clause $C_i$ we construct a Clause-Gadget. 
    The gadget consists of four contiguous items that are all valued by all three agents $C_i^1,C_i^2,C_i^3$, and by no one else.
    \item Variable-Gadget: For every variable $x_j$ we introduce two agents, $X_j$ and $\overline{X}_j$, and construct a Variable-Gadget as follows (Figure~\ref{fig:variable-gadget}). 
    Starting from the left, create two items that are valued by both $X_j$ and $\overline{X}_j$ (and no one else). 
    Then, create one item that is valued only by $X_j$. Then, for every $C_i^k$ such that $\ell(C_i^k) = x_j$, create two items that are valued only by $C_i^k$.
    Then, create an item that is valued by both $X_j$ and $\overline{X}_j$. Then, for every $C_i^k$ such that $\ell(C_i^k) = \overline{x}_j$, create two items that are valued only by $C_i^k$. 
    Finally, create an item that is valued only by $\overline{X}_j$.
\end{itemize}

We combine these gadgets to create the instance $R$ as follows. Starting from the left, construct the Clause-Gadget for each clause $C_i$. Then, construct the Variable-Gadget for each variable $x_j$. Thus, we obtain an instance with $3m+2n$ agents and $4m+(5n+6m) = 5n+10m$ items.

\begin{claim*}
The following statements hold:
\begin{itemize}
    \item Any contiguous allocation in $R$ where every agent gets at least two items they value yields a satisfying assignment for $I$. This holds even if the allocation is \emph{partial}, i.e., some items are not allocated.
    \item Any satisfying assignment for $I$ yields a contiguous envy-free allocation in $R$ where every agent gets exactly two items they value.
\end{itemize}
\end{claim*}

\begin{proof}[Proof of Claim]
Consider any (possibly partial) contiguous allocation in $R$ where every agents gets at least two valued items. All of the items valued by $X_j$ or $\overline{X}_j$ lie in the Variable-Gadget for $x_j$. Let $T$ denote the second item in this gadget. Note that this item must necessarily be allocated to $X_j$ or $\overline{X}_j$ (and it cannot remain unallocated, even in a partial allocation). If $X_j$ obtains $T$, then we set $a_j=1$. If $\overline{X}_j$ obtains $T$, we set $a_j=0$. We now claim that $a$ is a satisfying assignment for $I$. Consider any clause $C_i$ and the three associated agents $C_i^1,C_i^2,C_i^3$. At most two of those agents can obtain their two items from the Clause-Gadget for $C_i$. Thus, there exists $k \in [3]$ such that $C_i^k$ is allocated a valued item outside the Clause-Gadget. But the only other place where $C_i^k$ values items is inside the Variable-Gadget for the variable of $\ell(C_i^k)$ (the literal in clause $C_i$ corresponding to agent $C_i^k$). Since $C_i^k$ obtains an item in this gadget, one can check that the agent corresponding to the literal $\ell(C_i^k)$ must obtain the second item in the gadget. It follows that the literal $\ell(C_i^k)$ has value $1$ in the assignment $a$, and thus the clause $C_i$ is satisfied by $a$.

Conversely, let $a$ be any satisfying assignment for $I$. For every clause $C_i$, there exists an agent $C_i^k$ such that the literal $\ell(C_i^k)$ is true in $a$. Allocate the four items in the Clause-Gadget for $C_i$ to the other two clause agents (two contiguous items for each). Then, $C_i^k$ has only two valued items remaining, namely the ones in the Variable-Gadget corresponding to $\ell(C_i^k)$. Allocate them to $C_i^k$. Once this is done for all clauses, we move on to the Variable-Gadget agents. Assume that $a_j=1$; the case where $a_j=0$ can be treated analogously. Then, the first two items of the Variable-Gadget for $x_j$ are allocated to $X_j$, while $\overline{X}_j$ obtains the only two remaining items that it values (which are not adjacent). However, no clause agent $C_i^k$ has been allocated any item in this interval, because items there are only valued by $C_i^k$ with $\ell(C_i^k) = \overline{x}_j$ and those agents have been allocated items within their respective Clause-Gadget (because $a_j=1$); therefore we may allocate all items in this interval to $\overline{X}_j$. At this point, some items in the Variable-Gadget might still be unallocated, namely items that lie in the interval starting from the third item up to the last item not allocated to $\overline{X}_j$. If all of these items are unallocated, then allocate them all to $\overline{X}_j$. Note that the items allocated to $\overline{X}_j$ are indeed contiguous. If some of these items are already allocated, then they are allocated to clause agents. Simply extend the intervals allocated to these clause agents until they form a partition of this region. This construction ensures that every agent $A$ obtains exactly two items they value.
Moreover, for every other agent $B$, $A$ obtains at most two items valued by $B$.
\end{proof}

The final step of the proof is to introduce one last gadget. The Special-Gadget creates $3m+2n+7$ new agents. We denote the set of these new agents by $N$. The gadget consists of $2(3m+2n)+14=6m+4n+14$ new items. These items are valued by all agents in $N$. For every $i \in [m]$ and $k \in [3]$, $C_i^k$ values all new items except the rightmost six. For every $j \in [n]$, $X_j$ and $\overline{X}_j$ value all new items except the rightmost four.

The Special-Gadget is added to the right end of $R$ and yields the final instance $R'$. Note that in $R'$ there are $6m+4n+7$ agents and every agent values exactly $6m+4n+14$ items. Now consider any contiguous allocation for $R'$.

\begin{itemize}
    \item If the allocation is proportional, then every agent gets at least $\lceil (6m+4n+14)/ (6m+4n+7)\rceil = 2$ items they value. It follows that the agents in $N$ get all the new items, because $2|N|=2(3m+2n+7)=6m+4n+14$. This means that the other agents get at least two items they value in $R$. By the claim above, we obtain a satisfying assignment.
    \item If the allocation is equitable, then all agents get exactly $s$ items they value, for some $s \geq 0$. The Special-Gadget contains an item (in fact, many) that is valued by all agents. Since this item will be allocated to someone, $s=0$ is not possible. Also $s \geq 3$ is not possible, because the $3m+2n+7$ agents in $N$ all like the exact same $2(3m+2n+7)$ items. Now, since all $6m+4n+7$ agents value the first $(6m+4n+14) - 6 = 6m+4n+8$ items in the Special-Gadget, at least one of them will be allocated to two of those (by the pigeonhole principle). It follows that $s=1$ is also impossible. Thus, only $s=2$ remains, and we again obtain a satisfying assignment by the claim.
\end{itemize}

Since envy-freeness implies proportionality, it follows that any $X$-allocation for $R'$ yields a satisfying assignment for the \textsc{3-SAT} instance $I$, for any non-empty $X \subseteq \{$envy-free, proportional, equitable$\}$. On the other hand, any satisfying assignment for the \textsc{3-SAT} instance yields an envy-free and equitable allocation for $R'$, by assigning two contiguous Special-Gadget items to each agent in $N$ and then using the claim.
\end{proof}

In the construction used for our proof of Theorem~\ref{thm:indivisible-NP-combination}, each agent values at most four contiguous block of items.
In light of this result, one may naturally wonder whether the hardness continues to hold if, for example, every agent values a single block of items.
We show that this is the case for proportionality, provided that we drop the requirement that all agents value the same number of items.
Note that if each agent values a contiguous block of items \emph{and} all agents value the same number of items, deciding whether a proportional allocation exists can in fact be done in polynomial time.
Indeed, we can view the problem as a scheduling problem on a single machine, with each agent having a task to be completed by a machine.
For a given task, its \emph{release time} is where the corresponding agent's valued block starts, its \emph{deadline} is where the block ends, and its \emph{length} is the number of items that we need to give the agent in order to satisfy proportionality.
When all tasks have the same length, which is true in our setting, polynomial-time algorithms have been proposed by \cite{Simons78} and \cite{GareyJoSi81}.

\begin{theorem}
\label{thm:indivisible-proportional}
Deciding whether an instance with indivisible items on a line admits a contiguous proportional allocation is NP-hard, even if the valuations are binary and every agent values a contiguous block of items.
\end{theorem}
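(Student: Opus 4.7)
My plan is to reduce from a strongly NP-hard problem that takes advantage of the scheduling analogy highlighted just before the theorem: with each agent's contiguous valued block playing the role of a release/deadline window and the proportional requirement $\lceil b_i/n\rceil$ playing the role of a processing time, the problem is equivalent to single-machine non-preemptive scheduling, which is polynomial when processing times coincide but NP-hard when they vary. A natural target is therefore \textsc{3-Partition}: given $3m$ positive integers $a_1,\dots,a_{3m}$ each strictly between $B/4$ and $B/2$ summing to $mB$, decide whether they can be partitioned into $m$ triples each summing to $B$.

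The intended construction has $n=3m+(m-1)=4m-1$ agents on a cake organized into $m$ ``bin'' regions of a common length $L$ separated by $m-1$ single-item ``wall'' positions. I introduce two kinds of agents: (i) $m-1$ \emph{wall agents} $W_1,\dots,W_{m-1}$, where $W_k$ values only the $k$th wall item, so that $b_{W_k}=1$ forces $W_k$'s contiguous block to contain that single wall item and thereby pins down the boundary between bins $k$ and $k+1$; and (ii) $3m$ \emph{element agents} $E_1,\dots,E_{3m}$, where $E_i$ has valued block of length $b_{E_i}=na_i$, so $\lceil na_i/n\rceil=a_i$ is exactly the number of items needed from $E_i$'s window. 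The window of $E_i$ is positioned so that, thanks to the fact that its allocated block only needs to \emph{overlap} the window in at least $a_i$ items (not lie inside it), $E_i$ has the flexibility to be assigned to any of the $m$ bins permitted by its window's span.

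The forward direction is the easier one: given a 3-partition, place each element in the bin prescribed by its triple, give each element a contiguous sub-block yielding exactly $a_i$ items of its window (with excess bin space absorbed into adjacent blocks of the same bin), and hand each wall agent its wall item. For the reverse direction, one first shows that in any proportional allocation the wall agents must sit on the wall items (because they each value only one item), which forces the $m$-bin structure; then a counting argument based on $a_i\in(B/4,B/2)$ and the sum $mB$ shows that each bin must receive exactly three element agents whose requirements sum to exactly $B$, which is a 3-partition.

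\emph{Main obstacle.} The hardest step is the geometric one: since each element's valued block is \emph{one} contiguous interval of length only $na_i<nB/2$, no single window can simultaneously span all $m$ bins once $m$ is large, so I cannot naively let any element go into any bin. The plan is to choose the common bin length $L$ and the horizontal placement of each $E_i$'s window carefully so that (a) $E_i$'s window straddles enough consecutive bin boundaries to give the degrees of freedom required by the 3-partition reduction, and (b) the counting argument in the reverse direction still pins the per-bin sums to exactly $B$. Verifying that the construction simultaneously achieves flexibility in the forward direction and tightness in the reverse direction --- and in particular that no ``cheating'' allocation, in which an element's block extends across a wall to harvest items from a neighboring bin, can satisfy all proportional requirements unless the $a_i$'s in each bin sum to $B$ --- is where the bulk of the technical work will lie.
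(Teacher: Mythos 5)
Your high-level plan matches the paper's: both reduce from \textsc{3-Partition}, both use agents with a single valued item to act as separators ("wall agents" in your plan, "special agents" in the paper), and both use element/normal agents whose proportional requirement is exactly $a_i$ (respectively $x_i$). However, the step you flag as the "main obstacle" --- choosing where to place each element's valued window so that every element can potentially be assigned to any bin while still forcing the tight per-bin accounting --- is precisely the nontrivial content of the proof, and your sketch leaves it genuinely unresolved. A window of length $na_i < nB/2$ cannot straddle $m$ bins once $m$ is large, and no choice of bin length $L$ and horizontal placement will make windows of different lengths $na_i$ all span all $m$ bins; the family of windows you'd need simply doesn't exist.

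The paper sidesteps this by anchoring \emph{every} normal agent's valued interval at the left endpoint of the line: normal agent $i$ values the leftmost $n'x_i$ items. Since $n'x_i$ exceeds the total length of all $n$ normal-item blocks, every normal agent values every normal item, so any element can be assigned to any bin; the surplus of the valued interval simply spills into a large terminal region of dummy items. That dummy region, together with a matching army of dummy agents each needing exactly $k$ items, forces the dummy region to be consumed entirely by dummy agents, which in turn confines the normal agents to the $nB$ normal items. This gives the tight accounting you also need but did not provide: since $\sum_i x_i = nB$ equals the number of normal items and each normal agent needs at least $x_i$, equality is forced, and $B/4 < x_i < B/2$ then pins exactly three agents per block with sum $B$. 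Without a dummy region (or some equivalent device) your bins of length $L > B$ leave slack items whose allocation is not constrained, so the reverse direction of your reduction would not establish a valid \textsc{3-Partition} solution.
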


\begin{proof}
We reduce from the {\normalfont \scshape 3-partition} problem. 
An instance of the {\normalfont \scshape 3-partition} problem consists of $3n$ positive integers $x_1,\dots,x_{3n}$ with sum $nB$, and the goal is to partition them into $n$ sets of size three each so that the three numbers in each set sum to $B$.
The problem is NP-hard, and remains so when $B/4<x_i<B/2$ for all $i$ \citep{GareyJo79}.

Given an instance of {\normalfont \scshape 3-partition}, we create an instance of our problem as follows. 
There are $m := n(B+1)+4nk^2$ items on the line, where $k=4B$.
Each item belongs to one of the three types: special, normal, and dummy.
From left to right, the last $4nk^2$ items are dummy items.
The remaining $n(B+1)$ items are partitioned into $n$ blocks of size $B+1$---the leftmost item of each block is a special item (so $n$ special items in total), and the remaining $B$ items of the block are normal items (so $nB$ normal items in total).
There are $n' := 4n(k+1)$ agents: $n$ special, $3n$ normal, and $4nk$ dummy.
Each of the $n$ special agents values a distinct special item and nothing else.
Each dummy agent values all dummy items and nothing else.
For $1\leq i\leq 3n$, the $i$th normal agent values the leftmost $n'x_i$ items.
Note that this is well-defined because $n'x_i < 2n(k+1)B < 4nkB = nk^2 < m$.
Moreover, $n'x_i > n(k+1)B  > 2nB > n(B+1)$, so each normal agent values all normal items (along with other items).

First, suppose that there is a valid solution to the {\normalfont \scshape 3-partition} instance. 
We construct a proportional allocation.
Give each special agent her valued item, and each dummy agent $k$ consecutive dummy items.
For each part $\{x_{a_1},x_{a_2},x_{a_3}\}$ in the solution to the {\normalfont \scshape 3-partition} instance, we pick a block of $B$ normal items and give $x_{a_i}$ consecutive items to the $a_i$th normal agent.
One can check that the resulting allocation is proportional; in particular, each dummy agent needs at least $\left\lceil\frac{4nk^2}{n'}\right\rceil = \left\lceil\frac{4nk^2}{4n(k+1)}\right\rceil = k$ valued items, and that is exactly what they get.

Conversely, suppose that our construction admits a proportional allocation.
In this allocation, each special agent must get her valued item and, as above, each dummy agent needs at least $k$ valued items. Since there are $4nk$ dummy agents and they value the same $4nk^2$ items, each dummy agent must receive exactly $k$ valued items.
This leaves only the $nB$ normal items to be allocated to the $3n$ normal agents.
Normal agent $i$ needs to get at least $x_i$ items, so given that $\sum_{i=1}^{3n}x_i = nB$, all normal items must be allocated to the normal agents, and normal agent $i$ must receive exactly $x_i$ items.
Finally, since $B/4 < x_i < B/2$ for all $i$, each block of $B$ normal items is allocated to exactly three agents.
Hence the allocation yields a valid solution to the {\normalfont \scshape 3-partition} instance, as desired.
\end{proof}

Next, we show that under the same conditions as Theorem~\ref{thm:indivisible-proportional}, deciding whether there exists a proportional and equitable allocation, or an equitable allocation that gives the agents positive value, are both computationally hard.
Since agents do not all value the same number of items (unlike in Theorem~\ref{thm:indivisible-NP-combination}), we normalize the valuations so that if agent $i$ values $x_i$ items, she has value $1/x_i$ of each of them (so her total value is $1$).

\begin{theorem}
\label{thm:indivisible-equitable}
Deciding whether an instance with indivisible items on a line admits
\begin{itemize}
\item a contiguous allocation that is both proportional and equitable;
\item a contiguous equitable allocation in which the agents receive positive value
\end{itemize}
are both NP-hard, even if the valuations are binary and every agent values a contiguous block of items.
\end{theorem}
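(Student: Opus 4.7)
The plan is to reduce from {\normalfont \scshape 3-partition}, adapting the construction of Theorem~\ref{thm:indivisible-proportional} so that the intended allocation becomes simultaneously proportional, equitable, and gives every agent positive value. The obstacle in reusing the earlier reduction verbatim is that its intended allocation assigns distinct normalized values to the three agent types (value $1$ to special agents, $1/n'$ to normal agents, and $1/(4nk)$ to dummy agents), so it is proportional but far from equitable. The idea is to rebalance the lengths of the agents' valued blocks so that the intended allocation assigns every agent the same normalized value $1/N$, where $N$ is the total number of agents.

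Concretely, I would modify the construction so that every agent's valued block length equals $N$ times the number of items the agent is supposed to receive: each special agent values $N$ items (receives $1$), each normal agent $i$ values $N x_i$ items (receives $x_i$), and each dummy agent values $N g$ items (receives $g$), for a suitably chosen $g$. The number of dummy agents and the size of the dummy region are then chosen so that every item is accounted for, and contiguity of each valued block is preserved by extending blocks into the dummy region on the right. Converting a 3-partition solution into an allocation then yields an allocation in which every agent has value exactly $1/N > 0$, giving the forward direction of both decision problems.

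For the converse, consider any equitable allocation with common positive value $v$. Since each block length is a positive multiple of $N$, the integer condition on valued-item counts forces $v = p/N$ for some positive integer $p$, and a counting argument using the total number of items rules out $p \geq 2$, so $v = 1/N$. In particular, equitability with positive value already implies proportionality in this construction. With $v = 1/N$, the capacity count is tight, so every allocated item is valued by its receiver; combined with the structure of the instance, this forces each special agent to receive her unique special item, each dummy agent to receive a contiguous block of $g$ dummy items, and normal agent $i$ to receive exactly $x_i$ items inside the normal region. The Garey--Johnson restriction $B/4 < x_i < B/2$ then yields a partition of the $x_i$'s into triples summing to $B$.

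The main obstacle is designing the rescaled construction so that it simultaneously (a) keeps every agent's valued items in a contiguous block whose length is a multiple of $N$, (b) preserves the ``only special agent $j$ values the $j$th special item'' property used to force structure, and (c) makes the total item count divide cleanly enough that the integer-divisibility and capacity arguments rule out both $v > 1/N$ and $v < 1/N$. Some care is needed to handle cross-valuation between special items and enlarged normal or dummy blocks, possibly through an additional anchor item placed inside each special agent's block and valued by her alone; once the construction is set up correctly, the reduction yields NP-hardness for both decision problems in a single blow.
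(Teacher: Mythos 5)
Your proposal starts from the right reduction (\textsc{3-partition}) but relies on a rescaling of the Theorem~\ref{thm:indivisible-proportional} construction that does not provide the crucial forcing mechanism, and you acknowledge as much in your final paragraph. The gap is this: in the original construction, proportionality forces each special agent to receive her unique special item precisely \emph{because} she values that one item and nothing else. Once you enlarge her valued block to $N$ contiguous items so that the intended allocation becomes equitable, she may now legitimately receive any one of those $N$ items while remaining equitable; the structural conclusion ``special items go to special agents'' no longer follows. The same problem affects dummy agents, and your suggested ``anchor item'' does not help, since equitability constrains only the \emph{fraction} of value an agent receives, not \emph{which} of her valued items she gets. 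Ruling out $v\geq 2/N$ by a counting argument also needs the capacity accounting to be exactly tight, which is delicate once the enlarged blocks of different agent types overlap in a way you have not pinned down. In short, the rescaling idea removes the very rigidity that made the Theorem~\ref{thm:indivisible-proportional} converse work, and you have not supplied a replacement.

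The paper instead introduces a new gadget, the $K$-block: $K$ consecutive items valued by exactly $K$ fresh agents who value nothing else. In any equitable allocation with positive value, contiguity forces each of these $K$ agents to receive exactly one of the $K$ items, hence value exactly $1/K$; equitability then propagates the common value $1/K$ to \emph{every} agent in the instance, including the 3-partition agents $a_i$, with no further case analysis. The ``right region'' interleaves $K$-blocks with gaps of $B$ items, and (with the extra normalization $x_i > n$) each $a_i$ then receives exactly $x_i$ items inside those gaps, yielding the partition via the Garey--Johnson bound $B/4 < x_i < B/2$. This gadget-based construction is what resolves the obstacle your sketch leaves open: it pins down the common equitable value directly rather than trying to control it through global item counts on a rescaled instance.
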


\begin{proof}
The reduction is similar to the one in Theorem~\ref{thm:indivisible-proportional}. We again reduce from \textsc{3-partition}, but this time we also assume that $x_i > n$ for all $i$. Note that we can ensure that this is the case by multiplying all $x_i$ and $B$ by $n$.

Let $K=nB$. The main building block of this reduction is a \emph{$K$-block}: $K$ consecutive items with $K$ agents who only value these $K$ items. The instance is constructed as follows. Starting from the left end of the line, there are $B$ consecutive $K$-blocks. Note that each $K$-block has its own $K$ agents. We call this the ``left region'' of the instance. The ``right region'' of the instance consists of $n$ blocks of $K+B$ items each. The leftmost $K$ items of such a block form a $K$-block, and there are $B$ items to the right of that $K$-block. Finally, we introduce new agents $a_1, \dots, a_{3n}$. For each $i \in [3n]$, agent $a_i$ values the $Kx_i$ rightmost items on the line. Note that this is well-defined, since there are $BK + n(K+B) \geq BK\geq Kx_i$ items overall. Furthermore, agent $a_i$ values all items in the right region, because $Kx_i \geq n(K+B)$ (since $K=nB$ and $x_i > n$). Note that every agent values a contiguous block of items.

Now consider any equitable allocation in which the agents receive positive value. Every agent must get at least one item that they value. Consider any $K$-block. Since its $K$ agents only value these $K$ items, it follows that they each obtain exactly one. Thus, they each get value exactly $1/K$, and all other agents in the instance must also get value exactly $1/K$. This means that agent $a_i$ must obtain exactly $x_i$ of its valued items. 
Since $B/4<x_i<B/2$ for all $i$, each block of $B$ items in the right region are allocated to exactly three agents $a_i$.
Hence, we obtain a solution to the \textsc{3-partition} instance.
Note that a proportional and equitable allocation yields positive value to the agents, so it also gives rise to a solution to the \textsc{3-partition} instance.

Conversely, given a solution to the \textsc{3-partition} instance, one can construct an equitable allocation in which the agents receive positive value by following the previous paragraph. Note that this allocation is also proportional, since each agent receives value $1/K$ and there are more than $K$ agents. This completes the proof.
\end{proof}

Finally, we consider \emph{approximate} envy-freeness for the discrete setting as well. We show that for a sufficiently small constant $\varepsilon$, deciding whether there exists an $\varepsilon$-envy-free allocation is NP-hard; this holds even if we restrict the valuation functions as in Theorem~\ref{thm:indivisible-NP-combination}.

\begin{theorem}
\label{thm:indivisible-NP-envyfree}
For any $\varepsilon < 1/13$, deciding whether a contiguous $\varepsilon$-envy-free allocation exists is NP-hard, even if all agents have binary valuations and value the same number of items.
\end{theorem}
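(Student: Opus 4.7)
The plan is to adapt the reduction used in the proof of Theorem~\ref{thm:indivisible-NP-combination}. Given a 3-SAT instance, I would build essentially the same combination of Clause-Gadgets, Variable-Gadgets, and a Special-Gadget, with sizes re-tuned (for example by shrinking the Special-Gadget from $6m+4n+14$ items down to a bounded constant and introducing a constant number $|N|$ of new agents) so that every agent values exactly the same number $V$ of items, with $V$ a small constant satisfying $2/V>1/13$. The easy direction is immediate: if the 3-SAT instance is satisfiable, the exact envy-free allocation constructed in the proof of Theorem~\ref{thm:indivisible-NP-combination} is in particular $\varepsilon$-envy-free for every $\varepsilon\geq 0$.

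For the hard direction, suppose a contiguous $\varepsilon$-envy-free allocation exists with $\varepsilon<1/13$. The goal is to recover the hypothesis of the combinatorial claim in the proof of Theorem~\ref{thm:indivisible-NP-combination}---that every agent obtains at least two items they value---so that that claim yields a satisfying assignment. The key observation is discreteness: since valuations are binary and every agent values exactly $V$ items of value $1/V$ each, the envy $v_i(M_j)-v_i(M_i)$ is always an integer multiple of $1/V$, so $\varepsilon<1/13$ with $2/V>1/13$ forces $k_{ij}\leq k_{ii}+1$ for every pair of agents $i,j$, where $k_{ij}$ denotes the number of items in $M_j$ that $i$ values.

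Under this constraint, I would then argue, gadget by gadget and using contiguity, that no agent can end up with at most one valued item. The items that any agent $A$ values lie in a small, fixed number of contiguous blocks (one per gadget), and within each such block the $k_{ij}\leq k_{ii}+1$ bound controls how items valued by $A$ may be distributed among the consecutive agents covering that block. A pigeonhole argument then shows that if $A$ held at most one valued item, some covering agent would necessarily hold at least $k_{ii}+2$ items valued by $A$, producing envy at least $2/V>1/13$, a contradiction. Once every agent is shown to receive at least two valued items, the analysis from Theorem~\ref{thm:indivisible-NP-combination} extracts a satisfying assignment.

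The main obstacle I expect is re-engineering the Special-Gadget so that the constant-$V$ regime is compatible with every agent valuing exactly $V$ items. The original construction balanced $|N|$ against the Special-Gadget size via $|N|=V/2$, and for a small constant $V$ this parity may need to be restored by introducing auxiliary items and agents. A further subtlety is making the pigeonhole argument uniform across the three kinds of agents---clause agents, variable agents, and the $N$-agents in the Special-Gadget---given that the original claim in Theorem~\ref{thm:indivisible-NP-combination} was derived through proportionality or equitability (both strictly stronger than $\varepsilon$-envy-freeness), and must here be re-derived directly from the $\varepsilon$-envy-freeness constraint together with contiguity.
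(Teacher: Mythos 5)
Your approach is genuinely different from the paper's, and it has a substantial gap that you partially acknowledge but do not resolve.

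The paper does \emph{not} adapt the construction of Theorem~\ref{thm:indivisible-NP-combination}; it adapts the cake-cutting reduction used for Theorem~\ref{thm:cake-NP} and Lemma~\ref{lem:isolation}. Concretely, the paper builds Clause-Gadgets with $27$ items (each of the three clause agents valuing $9$ of them), Variable-Gadgets with $13+4+4+13$ items, and ``Isolation-Gadgets'' of $13$ items with $5$ dedicated agents each valuing all $13$. Every agent then values exactly $13$ items, so any $\varepsilon$-envy-free allocation with $\varepsilon<1/13$ is forced to be \emph{exactly} envy-free. The heart of the paper's argument is a cut-counting lemma: every Clause-Gadget must strictly contain at least $2$ cuts, every Variable-Gadget at least $1$, and every Isolation-Gadget at least $6$ (because $13$ items cannot be split among $5$ equally-entitled agents plus at most one outsider using only $5$ cuts). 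These lower bounds add up to exactly the total number of cuts, so they are all tight, which establishes an analogue of the ``Isolation property'' from Section~\ref{sec:hardness-cake} and yields a satisfying assignment as in Lemma~\ref{lem:isolation}. Your discreteness observation is the same trick, but you set the threshold at $2/V>1/13$, which allows $V$ as large as $25$ and therefore tolerates $k_{ij}=k_{ii}+1$; the paper uses $V=13$ precisely so that $\varepsilon<1/13$ collapses $\varepsilon$-envy-freeness to exact envy-freeness, which is both simpler and stronger.

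The genuine gap in your plan is the ``re-engineering the Special-Gadget'' step. In Theorem~\ref{thm:indivisible-NP-combination}, the Special-Gadget enforces ``every agent obtains at least two valued items'' \emph{via proportionality}: each agent values $6m+4n+14$ items and there are $6m+4n+7$ agents, so proportionality gives $\lceil(6m+4n+14)/(6m+4n+7)\rceil=2$. This calculation relies crucially on the number of valued items growing linearly with the number of agents. If you shrink $V$ to a constant while the number of agents stays $\Theta(m+n)$, proportionality only guarantees $\lceil V/\Theta(m+n)\rceil=1$ item, and envy-freeness alone does not recover the ``$\geq 2$'' bound. The pigeonhole argument you sketch is the missing piece: you would need to show that, under the (weaker) constraint $k_{ij}\leq k_{ii}+1$ and contiguity, no agent can end up with at most one valued item. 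You give no concrete version of this, and it is not clear how to make it work with a bounded-size Special-Gadget where a constant number of ``$N$''-agents share a constant number of items---precisely because the number of agents not in $N$ is unbounded and any of them could plausibly carve up an agent's valued block into single items. The paper's Isolation-Gadgets sidestep this entirely by consuming a fixed number of cuts ($6$ each, with $m+n-1$ gadgets) so that the global cut budget forces the gadget structure, rather than trying to argue item-by-item that each agent is well-served. If you want to preserve your route through Theorem~\ref{thm:indivisible-NP-combination}'s claim, you must replace the Special-Gadget with a device that enforces ``$\geq 2$ valued items'' directly from envy-freeness (not proportionality) at constant $V$, and then actually prove the pigeonhole step; as written, this is the step that would fail.
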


\begin{proof}
Consider an instance of \textsc{3-SAT} with $m$ clauses $C_1, \dots, C_m$ using the variables $x_1, \dots, x_n$ and their negations. We will make use of the following gadgets:
\begin{itemize}
    \item Clause-Gadget: For every clause $C_i$ we introduce three agents $C_i^1, C_i^2, C_i^3$. Each of these agents is associated with one of the three literals appearing in clause $C_i$, and we denote by $\ell(C_i^k)$ the literal associated to $C_i^k$. For every clause $C_i$ we construct a Clause-Gadget as follows. Starting from the left, the first three items are valued by $C_i^1$ (and no one else), the next three items are valued by $C_i^2$, and the next three by $C_i^3$. We repeat this three times. Thus, the Clause-Gadget for $C_i$ consists of $27$ items and every agent $C_i^k$ values exactly $9$ of these items.
    \item Variable-Gadget: For every variable $x_j$ we introduce two agents $L_j$ and $R_j$ and construct the Variable-Gadget for $x_j$ as follows. Starting from the left, the first $13$ items are valued by $L_j$. The next $4$ items are valued by every agent $C_i^k$ such that $\ell(C_i^k) = x_j$ (i.e., every agent corresponding to the literal $x_j$). The next $4$ items after that are valued by every agent $C_i^k$ such that $\ell(C_i^k) = \overline{x}_j$. Finally, the next $13$ items are valued by $R_j$.
    \item Isolation-Gadget: An Isolation-Gadget consists of $13$ items and $5$ agents. The $5$ agents value each of the $13$ items and no other items in the instance.
\end{itemize}
The instance is then constructed as follows. Starting from the left, we construct the Clause-Gadget for $C_1$, then for $C_2$, and so on up to $C_m$. Then, we construct the Variable-Gadget for $x_1$, for $x_2$, and so on up to $x_n$. Finally, we introduce an Isolation-Gadget between any two adjacent gadgets. Thus, there are $m+n-1$ Isolation-Gadgets, and the instance has $3m+2n+5(m+n-1) = 8m+7n-5$ agents.

Note that in this construction every agent values exactly $13$ items. Since all of the valuations are binary, this means that for the normalized valuations, any $\varepsilon$-envy-free allocation with $\varepsilon < 1/13$ must actually be (exactly) envy-free.

Consider any contiguous envy-free allocation for this instance. The Variable-Gadget for $x_j$ must contain at least one cut strictly inside it---otherwise, $L_j$ or $R_j$ would not be envy-free. Furthermore, the Clause-Gadget for $C_i$ must strictly contain at least two cuts. If it contained at most one cut, then at least one of the agents $C_i^k$ would not obtain any item in this gadget. Thus, $C_i^k$ would be able to obtain at most $4$ valued items (from the Variable-Gadget for the variable of $\ell(C_i^k)$). However, since the Clause-Gadget for $C_i$ has been divided into at most two pieces and $C_i^k$ values $9$ items in this gadget, one of those pieces contains at least $5$ items valued by $C_i^k$. Thus, $C_i^k$ would envy the agent receiving that piece.

Finally, any Isolation-Gadget must strictly contain at least $6$ cuts. It is easy to see that it must contain at least $4$ cuts, so that each of the $5$ agents that values all of the $13$ items can obtain something. However, $4$ cuts are not enough, because the $5$ resulting pieces cannot contain exactly the same number of items and thus one of the $5$ agents would not be envy-free. It turns out that $5$ cuts are also not enough. Indeed, in that case there are $6$ pieces and $5$ of those must be given to the $5$ agents of the gadget. However, it is impossible to divide $13$ items into $6$ pieces in such a way that $5$ of the pieces contain the same number of items and the $6$th piece contains at most that many items.

Since the instance has $8m+7n-5$ agents, there are $8m+7n-6$ cuts. With the arguments above we have accounted for exactly $2m+n+6(m+n-1) = 8m+7n-6$ cuts. It follows that every Clause-Gadget strictly contains exactly $2$ cuts and every Variable-Gadget strictly contains exactly $1$ cut. Thus, similarly to the divisible case, we have ensured that a certain Isolation property holds. The proof that this allocation yields a satisfying assignment for the \textsc{3-SAT} instance is analogous to the divisible case (Lemma~\ref{lem:isolation}).

Conversely, given a satisfying assignment of the \textsc{3-SAT} instance, it is not hard to construct an envy-free contiguous allocation for the instance. In fact, the only difference from the divisible case is with respect to the Isolation-Gadgets. Here, the $6$ cuts inside every Isolation-Gadget are placed as follows: place a cut after the first item, then a cut every two items, and give the $5$ central pieces of size $2$ to the $5$ agents of the gadget.
\end{proof}

\section{Connections Between Various Cake-Cutting Problems}\label{sec:connections}

In this section, we uncover several new connections between different cake-cutting settings. In particular, in Section~\ref{sec:exact-to-approx} we show that for piecewise constant valuations, finding an approximate envy-free allocation is as hard as finding an exact one. Then, in Section~\ref{sec:continuous-and-discrete} we exhibit connections between a number of continuous and discrete cake-cutting problems.

\subsection{Approximate and Exact Envy-Freeness}\label{sec:exact-to-approx}

We begin by proving the following result, which relates approximate and exact envy-freeness for a restricted yet quite expressive class of valuations.

\begin{theorem}\label{thm:exactEFtoeps}
For piecewise constant valuations, computing a contiguous envy-free allocation reduces to computing a contiguous $\varepsilon$-envy-free allocation for a sufficiently small $\varepsilon$ (which may depend on the number of agents and the valuations).
\end{theorem}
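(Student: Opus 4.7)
The plan is to exploit the fact that, for piecewise constant valuations, envy-freeness becomes a system of linear inequalities once the ``combinatorial type'' of an allocation is fixed. Let the breakpoints of $v_1,\ldots,v_n$ partition $[0,1]$ into finitely many closed intervals $J_1,\ldots,J_B$ on each of which every $v_i$ is constant. A contiguous allocation is specified by a permutation $\pi$ and cuts $x_1\le\cdots\le x_{n-1}$, and its combinatorial type $T$ records $\pi$ together with the index $\iota_k\in\{1,\ldots,B\}$ of the interval containing each cut $x_k$. Once $T$ is fixed, each value $v_i(x_{j-1},x_j)$ is an affine function of the cuts, so the exact envy-freeness constraints form a linear system $A_T x\le b_T$, and the $\varepsilon$-envy-freeness constraints form $A_T x\le b_T+\varepsilon\mathbf{1}$. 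After intersecting with the (linear) cell constraints $x_k\in J_{\iota_k}$ and the ordering constraints, denote the resulting polyhedra by $P_T$ and $P_T^{\varepsilon}$.

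The crux is a uniform \emph{feasibility-gap} lemma: for each $T$, either $P_T\neq\emptyset$ (so $P_T^\varepsilon\supseteq P_T$ is non-empty for every $\varepsilon\ge 0$), or there exists $\varepsilon_T>0$ such that $P_T^\varepsilon=\emptyset$ for all $\varepsilon<\varepsilon_T$. In the second case, Farkas's lemma yields non-negative multipliers $\lambda$ with $\lambda^{\!\top}\! A_T=0$ and $\lambda^{\!\top}\! b_T<0$, and the same $\lambda$ still certifies infeasibility after perturbing $b_T$ by any amount smaller than $|\lambda^{\!\top}\! b_T|/\|\lambda\|_1$. Standard LP bit-complexity bounds allow $\lambda$ to be chosen as a vertex of the alternatives polyhedron, with denominators bounded by $2^{q(n,L)}$ for some fixed polynomial $q$, where $L$ is the bit-size of the input. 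Although the number of types is exponential in $n$, every $A_T$ and $b_T$ is assembled from the same finite pool of breakpoint coordinates and density values, so the bit-size bound is uniform in $T$, giving $\varepsilon^*:=\min\{\varepsilon_T : P_T=\emptyset\}\ge 2^{-q(n,L)}$, a quantity one can lower-bound a priori without enumerating types.

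The reduction itself is then straightforward: set $\varepsilon := 2^{-q(n,L)-2}$, so that $\log(1/\varepsilon)$ is polynomial in the input size; invoke the $\varepsilon$-envy-free oracle to obtain $(x,\pi)$; and read off the combinatorial type $T^*$. (A preliminary cosmetic perturbation of $x$ by less than $\varepsilon/(2nM)$, where $M$ bounds the densities, moves any cut lying exactly on a breakpoint strictly into some adjacent cell; this only degrades the approximation by at most $\varepsilon/2$, so $T^*$ is unambiguous.) By our choice of $\varepsilon$, $(x,\pi)\in P_{T^*}^{\varepsilon}$ forces $P_{T^*}\neq\emptyset$, and one polynomial-sized linear program solve on $P_{T^*}$ produces an exact envy-free allocation with permutation $\pi$ and combinatorial type $T^*$, which we return.

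The main obstacle is the uniform lower bound on $\varepsilon^*$: a single $\varepsilon$ must work simultaneously against all exponentially many infeasible types. This is delivered by the observation that each $A_T, b_T$ is built from the same finite pool of rational input data, so Farkas certificates for infeasibility can always be chosen with bit-size polynomial in $n$ and $L$ regardless of $T$. Everything else reduces to one oracle call and one linear program solve, both of which fit comfortably within polynomial time; Stromquist's theorem guarantees that the target instance is well-posed in the sense that the returned allocation is always valid.
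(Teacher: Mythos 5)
Your proposal is essentially the same as the paper's proof. Both arguments fix the ``combinatorial type'' of the allocation returned by the approximation oracle (which cell each cut falls into, plus the permutation $\pi$), note that on this type the envy constraints become linear, and then invoke an LP bit-complexity bound to show that if a type admits an allocation with maximum envy below some threshold $2^{-\mathrm{poly}}$, it also admits one with envy exactly zero. The only difference is presentational: the paper minimizes maximum envy over the fixed-type polytope and bounds the denominator of the optimum $z^*$ directly via Cramer's rule and Hadamard's inequality, concluding $z^*=0$; you instead phrase the threshold via Farkas certificates of infeasibility and their stability under small right-hand-side perturbations. These are dual views of the same LP-perturbation fact, and the uniformity you need across the exponentially many types is obtained by the same observation the paper makes, namely that every $(A_T,b_T)$ is built from the fixed pool of breakpoint positions and densities, so all certificate/optimum denominators obey one polynomial bit-size bound. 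The extra ``cosmetic perturbation'' you introduce for cuts landing exactly on breakpoints is unnecessary in the paper's formulation (it simply pins such a cut by setting $\ell_j=r_j=\hat{x}_j$), but it is harmless and does not change the argument.
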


In particular, this means that for such valuations there always exists a contiguous envy-free allocation in which all cut points are rational.\footnote{This is not the case for more general valuations~\citep{Stromquist08}.} Theorem~\ref{thm:exactEFtoeps} is implied by the following result:
\begin{lemma}
Let $v_1, \dots, v_n$ be (explicit, normalized) piecewise constant valuations, and $M\geq 3$ and $k$ be positive integers such that
\begin{itemize}
    \item for all $i \in [n]$, all of the numbers in the explicit description of $v_i$ (i.e., the step heights and step change positions) have numerator and denominator at most $M$;
    \item for all $i \in [n]$, $v_i$ has at most $k$ value-blocks.
\end{itemize}
Then from any $M^{-20kn}$-envy-free solution, we can efficiently obtain an envy-free solution.
\end{lemma}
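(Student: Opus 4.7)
The plan is to reduce the problem to linear programming (LP) feasibility. Under the stated hypothesis, each cumulative function $V_i(x) = v_i(0,x)$ is piecewise linear with at most $k+1$ breakpoints, all rationals of denominator at most $M$. Let $B = \{0 = b_0 < b_1 < \dots < b_N = 1\}$ be the common refinement of the breakpoints of $V_1, \dots, V_n$, so $N \leq nk+1$ and every $V_i$ is affine on each cell $[b_{s-1}, b_s]$. On such a cell one may write $V_i(x) = \alpha_i^{(s)} + \beta_i^{(s)} x$, where $\beta_i^{(s)}$ is a step height of $v_i$ (denominator at most $M$) while the intercept $\alpha_i^{(s)} = V_i(b_{s-1}) - \beta_i^{(s)} b_{s-1}$, obtained by telescoping over at most $k$ earlier cells, has denominator at most $M^{O(k)}$.

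Given the $M^{-20kn}$-envy-free solution $(x^*, \pi)$, I keep the permutation $\pi$ fixed and, for each $j \in [n-1]$, select a cell $[a_j, b_j]$ of $B$ containing $x_j^*$. On the box $\prod_{j}[a_j,b_j]$ intersected with the order constraints $x_1 \leq \dots \leq x_{n-1}$, every $V_i(x_j)$ is linear in $x_j$, so the envy-freeness constraints $V_i(x_{\pi(i)}) - V_i(x_{\pi(i)-1}) \geq V_i(x_j) - V_i(x_{j-1})$ for all $i,j \in [n]$ (with $x_0 = 0$, $x_n = 1$) reduce to a finite linear system $Ax \leq b$ in $n-1$ variables, whose coefficients are rationals of denominator at most $M^{O(k)}$. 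If this LP is feasible, any solution $x$ together with $\pi$ is an exact envy-free contiguous allocation, and a polynomial-time LP solver returns one.

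The quantitative heart of the argument is establishing that the LP is in fact feasible. By construction $x^*$ satisfies $Ax^* \leq b + \varepsilon \mathbf{1}$ with $\varepsilon = M^{-20kn}$, i.e., $x^*$ is feasible for the $\varepsilon$-relaxed LP. Suppose for contradiction that $\{x : Ax \leq b\}$ were empty; Farkas' lemma would yield a basic $y \geq 0$ with $y^T A = 0$ and $y^T b < 0$, and Cramer's rule would express $-y^T b / \|y\|_1$ as a rational whose denominator is a subdeterminant of an $O(n) \times O(n)$ matrix with entries of denominator at most $M^{O(k)}$, hence at most $M^{O(kn)}$. This shows the infeasibility gap is at least $M^{-ckn}$ for some absolute constant $c$, and choosing the exponent $20$ in $M^{-20kn}$ large enough to dominate $c$ together with the hidden constants contradicts the $\varepsilon$-feasibility of $x^*$. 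The main obstacle is precisely this bookkeeping step: one must carefully track how denominators compound through (i) the intercepts $\alpha_i^{(s)}$, (ii) the coefficients obtained by rearranging the envy inequalities over the chosen cells, and (iii) the Cramer determinants appearing in the Farkas certificate, and verify that the resulting exponent remains safely below $20kn$.
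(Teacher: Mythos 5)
Your proposal follows essentially the same strategy as the paper: fix the permutation $\pi$, constrain each cut $x_j$ to lie in the cell of the common refinement that contains the approximate cut $x_j^*$, write the envy constraints as a linear program over that box, and appeal to Cramer-type denominator bounds to argue that an $M^{-20kn}$-approximate solution forces an exact one. The one place you diverge is the final quantitative step. The paper sets up a \emph{primal} LP $\min z$ (where $z$ upper-bounds the maximum envy), observes that an optimal solution may be taken at a vertex, and bounds the denominator of the optimal value $z^*$ directly via Cramer and Hadamard, concluding $z^* = 0$ once $\varepsilon$ is below that denominator. You instead keep a pure \emph{feasibility} LP in the $x_j$'s alone and argue by contradiction via a Farkas certificate with bounded denominator, so that the "infeasibility gap" $-y^\top b / \|y\|_1$ cannot drop below $M^{-O(kn)}$. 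These two arguments are LP-dual to one another and carry identical bookkeeping burden: in both cases one must track how denominators compound through the intercepts (denominator $M^{O(k)}$ by telescoping over at most $k$ blocks), the LP coefficient matrix, and finally an $O(n)\times O(n)$ determinant. Where the paper does this explicitly (entries bounded by $M^{6k+13}$, Hadamard gives denominator at most $M^{(6k+14)n}\le M^{20kn}$), you explicitly defer it, and that deferral is the substance of the lemma: the statement claims the concrete constant $20$, so without carrying the accounting through the Farkas certificate (integer extreme ray of the alternative cone, bound on its entries, bound on $\|y\|_1$, bound on $|y^\top b|$) the proof is not complete. The plan is sound and would close with the same Cramer/Hadamard estimates the paper uses, but you should finish that computation to confirm the exponent.
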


\begin{proof}
We apply the technique that was used by~\cite{EtessamiYa10} to show that finding an exact fixed point of a LinearFIXP circuit reduces to finding a (sufficiently good) approximate fixed point.

Let $(\hat{x},\pi)$ be a contiguous $\varepsilon$-envy-free allocation, where $0 \leq \hat{x}_1 \leq \dots \leq \hat{x}_{n-1} \leq 1$. Without loss of generality assume that $\pi(i)=i$ for all $i$ (by reordering the agents). For $i \in [n]$ and $j \in [n-1]$, let $\ell_j^i$ be the position of the closest step change in $v_i$ on the left of $\hat{x}_j$. Similarly, define $r_j^i$ to be the closest step change position on the right. (If $\hat{x}_j$ lies on a step change position of $v_i$, then we set $\ell_j^i = r_j^i = \hat{x}_j$.) Finally, set $\ell_j = \max_i \ell_j^i$ and $r_j = \min_i r_j^i$. Note that all valuation densities are constant on the interval $[\ell_j,r_j]$. Thus, the corresponding cumulative valuation functions are linear. For $i \in [n]$ and $j \in [n-1]$, let $h_j^i$ denote the (constant) value of the density function of $v_i$ in $[\ell_j,r_j]$ (if $\ell_j=r_j$ then $h_j^i$ can be defined as any arbitrary value).

We solve the following linear program (LP) with variables $x_1, \dots, x_{n-1},z$:
\bigbreak
\begin{tabular}{rl}
    $\min z$ & \\
    $\ell_j \leq x_j \leq r_j$ & $\forall j \in [n-1]$ \\
    $x_j \leq x_{j+1}$ & $\forall j \in [n-2]$ \\
    $\left[h_j^i(x_j-\ell_j) - h_{j-1}^i(x_{j-1}-\ell_{j-1}) + v_i(\ell_{j-1},\ell_j)\right]$ \phantom{- $\leq z$} & \\
    $- \left[h_i^i(x_i-\ell_i) - h_{i-1}^i(x_{i-1}-\ell_{i-1}) + v_i(\ell_{i-1},\ell_i)\right] \leq z$ & $\forall i,j \in [n]$
\end{tabular}
\bigbreak
\noindent where we define $x_0=\ell_0=0$ and $x_n=\ell_n=1$ for ease of exposition. Note that the left-hand side of the last line is equal to $v_i(x_{j-1},x_j) - v_i(x_{i-1},x_i)$, i.e., the envy of agent $i$ towards agent $j$. Thus, minimizing $z$ corresponds to minimizing the maximum envy experienced by any agent. The LP does the following: it allows any cut in $(\hat{x},\pi)$ to move slightly to the left or the right, as long as it does not fall into a different step (in any of the valuations) and as long as the relative order of the cuts does not change. Furthermore, the order of assignment of the intervals to the agents (i.e., $\pi$) does not change.

Clearly, $(\hat{x},\varepsilon)$ is a feasible solution of the LP. For now assume that we know that the LP has an optimal (rational) solution $(x^*,z^*)$ such that all denominators are bounded by some positive integer $d$ (that only depends on $M$, $k$ and $n$). Then, if we pick $\varepsilon < 1/d$, it will follow that $z^* < 1/d$, which implies that $z^*=0$ ($z^* \geq 0$ is implicitly forced by the constraints). Thus, solving the LP will give us a contiguous envy-free allocation.

It remains to find a bound $d$ such that the LP is guaranteed to have an optimal solution with all denominators bounded by $d$. The LP must have a solution $(x^*,z^*)$ that is a vertex of the feasible polytope---the polytope defined by the constraints. Note that for $(x^*,z^*)$, at least $n$ constraints must be tight, i.e., satisfied with equality. Furthermore, $(x^*,z^*)$ must be the unique point that satisfies all these tight constraints with equality (otherwise, it would not be a vertex of the feasible polytope). Thus, by picking a linearly independent subset of these tight constraints, we get that $y=(x^*,z^*)$ is the unique solution of a linear system $Ay=b$ with $n$ variables and $n$ equations.

In order to investigate the denominator size of solutions of $Ay=b$, we first turn it into an equivalent linear system $A'y=b'$ where $A'$ and $b'$ are integral. Specifically, we will multiply the $m$th line of the linear system by some value $C_m$ so that all of the coefficients become integers. Inspection of the LP reveals that any line $m$ contains at most $6$ non-zero entries, i.e., at most $6$ entries out of $a_{m,1}, \dots, a_{m,n}, b_m$ are not zero. Furthermore, at most $5$ of them are non-integral, because the coefficient of $z$ is integral. Out of these, at most one (namely $b_m$) can have a denominator that is larger than $M$. This corresponds to the case where $b_m = -h_j^i \ell_j + h_{j-1}^i \ell_{j-1} + h_i^i \ell_i - h_{i-1}^i \ell_{i-1} + v_i(\ell_{j-1},\ell_j) - v_i(\ell_{i-1},\ell_i)$.

The first 4 terms in the expression of $b_m$ have denominator at most $M^2$ (since they are of the form $p_1 \cdot p_2$, where $p_1,p_2$ have denominator at most $M$). The term $v_i(\ell_{j-1},\ell_j)$ can be computed by summing up the values of the blocks between $\ell_{j-1}$ and $\ell_j$ with respect to $v_i$. The value of each block has denominator at most $M^3$ (since it is of the form $(p_1-p_2)\cdot p_3$, where $p_1,p_2,p_3$ have denominator at most $M$). Since there are at most $k$ blocks in $v_i$, the denominator of $v_i(\ell_{j-1},\ell_j)$ is at most $M^{3k}$. The same also holds for $v_i(\ell_{i-1},\ell_i)$. Thus, the denominator of $b_m$ is at most $M^{6k+8}$.

It follows that there exists some integer $C_m \leq M^4\cdot M^{6k+8} = M^{6k+12}$ such that multiplying the $m$th line of the linear system by $C_m$ makes the coefficients integral. Doing this for every line yields an equivalent linear system $A'y=b'$ that is integral. Notice that $A'$ has at most 5 non-zero entries per line and each of these values is bounded (in absolute value) by $M^{6k+13}$. Cramer's rule tells us that $z^* = \frac{\det (C)}{\det (A')}$, where $C$ is the matrix $A'$ with the last column replaced by $b'$. Since $\det(C)$ is an integer, it suffices to bound $|\det(A')|$ in order to bound the denominator of~$z^*$.

Using Hadamard's inequality, we get that $|\det(A')| \leq \prod_{m=1}^n \|A'_m\|_2$, where $A'_m$ is the $m$th line (i.e., row) of $A'$. It follows that $\|A'_m\|_2 \leq \sqrt{5} M^{6k+13} \leq M^{6k+14}$ (since $M \geq 3$). Thus, we get that $z^*$ has denominator at most $d := M^{(6k+14)n} \leq M^{20kn}$.
\end{proof}

\begin{remark*}
The same proof also yields the following result: If for all $i \in [n]$, all numbers in the description of $v_i$ have denominator \emph{exactly} $M$, then from any $M^{-4n}$-envy-free solution, we can efficiently obtain an envy-free solution. Indeed, in this case $b_m$ has denominator $M^2$, so we get $C_m=M^2$ and $\|A'_m\|_2 \leq M^4$ for every $m$.
\end{remark*}
%here we use the fact that h \leq M, because intervals have size at least 1/M

\subsection{Continuous and Discrete Cake Cutting}\label{sec:continuous-and-discrete}

We now establish the computational equivalence between some continuous and discrete cake-cutting problems. Let us start by defining the computational problems that we will consider.

\begin{definition}
The problem \textsc{unary-$\varepsilon$-EF-Cake-Cutting} is defined as: given $\varepsilon > 0$ (\emph{in unary}) and (explicit, normalized) piecewise constant valuations $v_1, \dots, v_n$ on $[0,1]$, find a contiguous $\varepsilon$-envy-free allocation $(x, \pi)$.
\end{definition}

This corresponds to the standard contiguous $\varepsilon$-envy-free cake-cutting problem with piecewise constant valuations, except that $\varepsilon$ is provided in unary representation. This means that $\varepsilon$ can no longer have exponential precision with respect to the size of the input. We also define a (seemingly) more restricted version of this problem.

\begin{definition}
The problem \textsc{simple-$\varepsilon$-EF-Cake-Cutting} is defined exactly as \textsc{unary-$\varepsilon$-EF-Cake-Cutting}, except that we are also given some positive integer $M$ (\emph{in unary}) and for all $i \in [n]$ we have that the piecewise constant valuation $v_i$ satisfies:
\begin{itemize}
    \item all heights of value-blocks of $v_i$ are integral;
    \item the height of $v_i$ can only change at points of the form $k/M$ where $k \in [M]$.
\end{itemize}
\end{definition}

Next, we consider discrete cake cutting. While an envy-free allocation is not guaranteed to exist in this setting (cf. Section~\ref{sec:hardness-indiv}), such an allocation always exists for some restricted classes of valuations.
We say that indivisible item valuations $v_1, \dots, v_n$ are \emph{disjoint} if every item is valued by at most one agent. \citet{MarencoTe14} proved that if the valuations are disjoint, then an envy-free allocation necessarily exists. We define a computational search problem based on this existence theorem, where we restrict ourselves to the binary valuation case. Note that binary valuations correspond to piecewise uniform valuations once normalized (i.e., if an item is valued by an agent, then it is valued the same as any other item valued by that agent).

\begin{definition}
The problem \textsc{Disjoint-Discrete-EF-Cake-Cutting} is defined as: given disjoint binary valuations $v_1, \dots, v_n$ on a discrete cake, find a contiguous envy-free allocation.
\end{definition}

Perhaps surprisingly, it turns out that all of these problems are computationally equivalent. Thus, any algorithm or hardness result for one of them would immediately extend to all of them.

\begin{theorem}
The following problems are polynomial time equivalent:
\begin{enumerate}
    \item[(1)] \textsc{unary-$\varepsilon$-EF-Cake-Cutting}
    \item[(2)] \textsc{simple-$\varepsilon$-EF-Cake-Cutting}
    \item[(3)] \textsc{Disjoint-Discrete-EF-Cake-Cutting}
    \item[(4)] \textsc{Disjoint-Discrete-EF-Cake-Cutting}, where all agents value the same number of items.
\end{enumerate}
\end{theorem}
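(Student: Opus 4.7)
The plan is to establish a cycle of polynomial-time reductions among the four problems. Two reductions are immediate from the definitions: $(2) \le (1)$, since simple-$\varepsilon$-EF is a restriction of unary-$\varepsilon$-EF, and $(4) \le (3)$, since the equal-item-count condition is a restriction of the general disjoint-binary-discrete setting.

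For $(1) \le (2)$, I would use the cumulative-rounding idea from the proof of Theorem~\ref{thm:exactEFtoeps}: given $\varepsilon$ in unary, pick $M$ polynomial in $1/\varepsilon$ and the input size, and construct a simple $v'$ with integer heights on the $1/M$ grid satisfying $|v_i'(0,j/M) - v_i(0,j/M)| \le 1/(2M)$. Calling the $(2)$ oracle on $v'$ with parameter $\varepsilon/3$ and, if necessary, snapping the returned cuts to the $1/M$ grid yields an $\varepsilon$-EF allocation of $v$. For $(3) \le (4)$, I would pad each under-valuing agent: for agent $i$ with $x_i < X := \max_j x_j$, add $X - x_i$ items valued only by $i$, placed adjacent to $i$'s valued items and surrounded by buffer gadgets so that the EF allocations of the padded and the original instances are in bijection.

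For the continuous/discrete bridge, I plan to establish $(4) \le (2)$ and $(2) \le (3)$. For $(4) \le (2)$: embed each of the $m$ items as a unit interval of width $1/m$ on $[0,1]$ (padding with unvalued dummies to ensure $X \mid m$), and set each agent's density to $m/X$ on its valued cells; this is a simple instance with $M = m$. Calling $(2)$ with $\varepsilon < 1/(3X)$ returns an allocation whose cuts can be snapped to item boundaries via a case analysis (snaps inside an unvalued item are free; snaps inside a valued item can be performed greedily since the item is valued by exactly one agent). For $(2) \le (3)$: given a simple instance with parameter $M$, I would construct a disjoint-binary-discrete instance by subdividing each cell $k$ into $H_k = \sum_i h_{i,k}$ consecutive tokens (with $h_{i,k}$ of them labeled by agent $i$), augmented with alignment gadgets (in the spirit of the Isolating-Interval construction in Section~\ref{sec:hardness-cake}) that force every EF discrete allocation to place its cuts at the $M+1$ cell-boundary positions of the original simple instance.

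The main technical obstacle is in $(2) \le (3)$, because a naive cell-subdivision admits discrete EF allocations whose corresponding continuous cuts land strictly inside cells and violate $\varepsilon$-EF in the simple instance (one can verify this explicitly on small examples with uniform density, where the discrete and continuous normalizations disagree). I would overcome this by carefully designing the alignment gadgets so that a small number of dummy agents and items placed around each cell boundary rigidly pin the cuts via their own envy-freeness constraints. Once every EF discrete allocation's cuts are forced onto the $1/M$ grid, the value correspondence between the token counts and the continuous valuations becomes exact up to the slack $\varepsilon$, enabling extraction of the desired $\varepsilon$-EF continuous allocation from any discrete EF allocation of the reduced instance.
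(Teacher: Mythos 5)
Your two trivial reductions $(2)\to(1)$ and $(4)\to(3)$ match the paper, but the cycle you propose to close the equivalence is different and, in its current state, has genuine gaps in the two non-trivial steps. The paper closes the cycle as $(2)\to(1)\to(4)\to(3)\to(2)$, so that the only two reductions requiring work are $(1)\to(4)$ (Proposition~\ref{prop:continuous-to-discrete}) and $(3)\to(2)$ (Proposition~\ref{prop:discrete-to-continuous}); you instead go around via $(1)\to(2)\to(3)\to(4)\to(2)$, which forces you to build a continuous-to-discrete bridge as $(2)\to(3)$ and a discrete-to-continuous bridge as $(4)\to(2)$, plus two extra reductions $(1)\to(2)$ and $(3)\to(4)$ that the paper never needs.

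The most serious gap is in your $(2)\to(3)$ step. You acknowledge that naive cell-subdivision fails and you propose ``alignment gadgets'' that add dummy agents and items around each cell boundary to rigidly pin the discrete cuts onto the $1/M$ grid. You do not construct these gadgets, and the obstacle you identify is real: dummy agents must themselves be envy-free, so adding them changes the instance, and it is not clear that pins can be enforced without creating new feasibility constraints that break the correspondence with the original simple instance. The paper's Proposition~\ref{prop:continuous-to-discrete} takes a structurally different route: it cuts each value-block into sub-blocks of value exactly $\delta \le \varepsilon/(m+2)$, drops the (at most $m$) incomplete sub-blocks, places one item at the midpoint of each complete sub-block, and trims so all agents value the same count. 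Importantly, it does \emph{not} try to force discrete cuts onto any grid. Given a discrete EF allocation, it just reinserts each cut somewhere between the positions of the two bordering items and then \emph{bounds the loss}: at most $\delta/2$ per endpoint of each received interval, plus $m\delta$ for discarded sub-blocks, for a total of at most $(m+2)\delta \le \varepsilon$. This error-budget argument is what your alignment-gadget plan would need to replace, and it is much simpler than exact pinning.

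There is also a gap in your $(4)\to(2)$ step. You propose calling the oracle with $\varepsilon < 1/(3X)$ and snapping the resulting cuts to item boundaries by a ``greedy case analysis.'' But snapping a cut inside one of agent $i$'s items shifts up to $1/m_i$ of that agent's value across the cut, and if several consecutive cuts all fall inside items valued by the same agent, this error accumulates across the whole chain. The paper's Proposition~\ref{prop:discrete-to-continuous} sets $\varepsilon := \min_i 1/(n\,m_i)$, a factor of $n$ smaller than your choice, and then runs an explicit $i$-chain rounding: it first locks in the endpoints of agent $i$'s own piece so $w_i(A_i) = t_i/m_i$, then processes each maximal run of $i$-cuts from one end, choosing to round right whenever the neighbour would stay at or below $t_i/m_i$, and rounding left otherwise, and proves that the propagated slack after $k$ steps is at most $(k+1)\varepsilon < 1/m_i$. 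Your bound $\varepsilon < 1/(3X)$ would fail for this invariant once $n > 3$, and the greedy procedure you sketch does not come with the invariant at all.

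Finally, your two extra reductions $(1)\to(2)$ and $(3)\to(4)$ are plausible but each has unaddressed details: the cumulative rounding for $(1)\to(2)$ must be done so that the valuations remain normalized to $1$ while also having integer heights and grid-aligned break points, and the padding for $(3)\to(4)$ needs an argument that the buffer gadgets do not introduce or destroy EF allocations. Since the paper's cycle avoids both reductions, you could drop them entirely by reorienting your cycle to match the paper's, which would also resolve the two substantive gaps above.
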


The rest of this section is devoted to proving this theorem. The reductions (2) $\rightarrow$ (1) and (4) $\rightarrow$ (3) are trivial, because we are reducing from a special case of a problem to a more general case. Thus, in order to establish the theorem, it remains to show that (1) reduces to (4) (Proposition~\ref{prop:continuous-to-discrete}), and that (3) reduces to (2) (Proposition~\ref{prop:discrete-to-continuous}).

\begin{proposition}\label{prop:continuous-to-discrete}
\textsc{unary-$\varepsilon$-EF-Cake-Cutting} reduces to \textsc{Disjoint-Discrete-EF-Cake-Cutting} where all agents value the same number of items.
\end{proposition}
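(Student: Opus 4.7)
The plan is to discretize the continuous cake at the natural break points of the valuations, then over-discretize finely enough that every envy-free allocation of the resulting discrete instance pulls back to an $\varepsilon$-envy-free allocation of the cake.

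First, I would let $0 = z_0 < z_1 < \dots < z_K = 1$ be the union of all step-change points of $v_1,\dots,v_n$, so each cell $[z_{k-1},z_k]$ has constant density for every agent; note $K = O(nk)$ is polynomial in the input. I set a granularity $T := \lceil 16n/\varepsilon\rceil$, which is polynomial because $\varepsilon$ is given in unary, and uniformly subdivide each cell into sub-cells so short that $v_i(S) \le \varepsilon/(16n)$ for every agent $i$ and every sub-cell $S$. Since densities are constant on each cell this requires at most $O(n/\varepsilon)$ sub-cells per cell, so the total count is polynomial.

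Next, for each agent $i$ and each sub-cell $S_m = [\ell_m,r_m]$ in left-to-right order I apply the cumulative-rounding rule $\beta_{m,i} := \lfloor T V_i(r_m)\rfloor - \lfloor T V_i(\ell_m)\rfloor$, where $V_i(x) := v_i(0,x)$. These are non-negative integers, satisfy $\sum_m \beta_{m,i} = T$ for every $i$ by telescoping, and the partial sums track $T V_i(\cdot)$ to within $1$. Inside each $S_m$ I place $\beta_{m,i}$ items valued only by agent $i$ in any order (with a dummy item valued by nobody if $\sum_i \beta_{m,i}=0$). The concatenation is an instance of the target problem: disjoint binary valuations in which every agent values exactly $T$ items. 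By \citeauthor{MarencoTe14}'s theorem it admits a contiguous envy-free allocation, and given one I translate back to continuous cuts by placing the items of each $S_m$ evenly across $S_m$ and reading off the rational item-boundary positions.

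For correctness I track the prefix error $E_i(p) := (\text{normalized discrete prefix value at }p) - V_i(p)$. Cumulative rounding gives $|E_i(p)| \le 1/T$ at every sub-cell boundary. Inside a sub-cell $S_m$ both the discrete increment $q_i(p)/T$ and the continuous increment $d_{m,i}(p-\ell_m)$ lie in $[0,\alpha_{m,i}+1/T]$, so their difference contributes at most $\alpha_{m,i}+1/T \le \varepsilon/(8n)$. Hence $|E_i(p)| = O(\varepsilon/n)$ uniformly, every block's value is mis-estimated by $O(\varepsilon/n)$, and exact discrete envy-freeness yields continuous envy at most $O(\varepsilon/n) \le \varepsilon$.

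The main obstacle is choosing the sub-cell size and $T$ jointly so that (i) cumulative rounding is sub-$\varepsilon$, (ii) an arbitrary ordering of items inside each sub-cell still gives $O(\varepsilon/n)$ error (which is what lets me avoid any balanced-interleaving machinery), (iii) every agent ends up with exactly $T$ items, and (iv) the total size stays polynomial. The unary encoding of $\varepsilon$ is essential: it is precisely what keeps $T = \Theta(n/\varepsilon)$ polynomial in the input, whereas with a binary encoding the denominators in the valuation description could already force an exponentially large number of cells and the same strategy would fail.
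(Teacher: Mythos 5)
Your proof is correct, and it reaches the same destination as the paper's but via a genuinely different route. The paper discretizes \emph{per agent}: for each agent $i$ and each value-block, it carves off sub-blocks of $i$-value exactly $\delta \le \varepsilon/(m+2)$, places one item at each sub-block's midpoint, discards the (at most $m$) incomplete sub-blocks, and then explicitly \emph{removes} surplus items so that every agent ends up valuing exactly $1/\delta - m$ items. The error bound is then a direct accounting: at most $\delta/2$ lost at each endpoint of an agent's piece, plus the $m\delta$ of value hidden in the discarded sub-blocks, yielding envy at most $(m+2)\delta \le \varepsilon$.

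You instead overlay a single \emph{global} grid of sub-cells (refining the common step-change partition) and determine the item counts per sub-cell by the cumulative floor-difference rule $\beta_{m,i} = \lfloor T V_i(r_m)\rfloor - \lfloor T V_i(\ell_m)\rfloor$. This has two pleasant consequences that the paper's construction obtains by hand: the counts automatically telescope to exactly $T$ for every agent (no item-removal post-processing), and no value is discarded at all, so the prefix error $E_i(p)$ is controlled purely by the rounding term $1/T$ and by the (deliberately tiny) sub-cell masses $\alpha_{m,i}$, independent of the number of blocks per agent. Your parameter choice $T = \lceil 16n/\varepsilon \rceil$ gives a looser-than-necessary $O(\varepsilon/n)$ bound, but since the problem only asks for envy $\le \varepsilon$ this is harmless. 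Both approaches hinge on the unary encoding of $\varepsilon$ to keep the item count polynomial, and both translate discrete cuts back into the cake at item boundaries inside their respective cells. In short: the paper equalizes item counts by pruning, you equalize by a discrepancy-style cumulative rounding; both are valid, and yours avoids the explicit removal step at the cost of a slightly more bookkeeping-heavy prefix-error argument.
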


\begin{proof}
We follow the same idea that was used by \citet{Filos-RatsikasGo18} to show that  \textsc{$\varepsilon$-Consensus-Halving} reduces to \textsc{Necklace-Splitting} when $\varepsilon$ is given in unary representation (i.e., it is inversely polynomial).

Let $m$ denote the maximum number of value-blocks in the piecewise constant valuation of any agent $1 \leq i \leq n$. Since the piecewise constant valuations are provided explicitly in the input, it follows that $m$ is bounded by the size of the input. Let $\delta \leq \varepsilon/(m+2)$ be such that $1/\delta$ is integral.

For each agent $i$ and each value-block of $v_i$ we do the following. Let $[a,b]$ denote the subinterval covered by the block and let $h$ be its height. We divide the block into sub-blocks of value $\delta$ each, starting from the left. Namely, the first sub-block covers $[a,a+\delta/h]$, the second sub-block covers $[a+\delta/h,a+2\delta/h]$, and so on. If $(b-a)h/\delta$ is not an integer, then the last sub-block will be incomplete and we will ignore it. Thus, we have obtained $\lfloor (b-a)h/\delta \rfloor$ complete sub-blocks. For each such sub-block, we compute its midpoint and place an item valued by agent $i$ at that position in $[0,1]$.

After we have done this for every block of every agent, we perform some post-processing. Note that all agents might not value the same number of items. Indeed, since incomplete sub-blocks are dropped, an agent might value less than $1/\delta$ items. However, since every agent has at most $m$ blocks of value, she can have at most $m$ incomplete sub-blocks. Thus, every agent values at least $1/\delta - m$ items. Now, for any agent that values strictly more than $1/\delta - m$ items, we remove items from the instance until she values exactly $1/\delta - m$ items. The items to be removed are picked arbitrarily---since every item is valued by exactly one agent, this is straightforward to do. After this is done, every agent values exactly $1/\delta - m$ items. In particular, exactly $m \delta$ of every agent's original value is unaccounted for by the discretized instance.

From here we obtain an instance of \textsc{Disjoint-Discrete-EF-Cake-Cutting} by simply arranging the items in the order in which they appear in the interval $[0,1]$. Note that it is possible that items have the exact same position in $[0,1]$---in that case, we resolve the tie arbitrarily. Every item is valued $1/(1/\delta - m)$ by exactly one agent, and $0$ by all other agents.

Consider any solution of this \textsc{Disjoint-Discrete-EF-Cake-Cutting} instance. This allocation of the items gives rise to an allocation of the cake: for every cut in the discretized version, we place the corresponding cut in the continuous version between the positions of the items on either side of the cut (e.g., halfway between the positions of the two items). In particular, if the two items share the same position in $[0,1]$, then the cut is placed at that same position.

We now argue that the resulting allocation is an $\varepsilon$-approximate solution to the \textsc{unary-$\varepsilon$-EF-Cake-Cutting} instance. Let $v_{ij}$ denote the $i$-value (i.e., the value for agent $i$) of the interval assigned to agent $j$. Let $V_{ij}$ denote the value for agent $i$ of the items assigned to agent $j$ in the discretized instance, but where we let every item have value $\delta$ (instead of $1/(1/\delta - m)$). Then, we have $V_{ii} \geq V_{ij}$ for all $i,j$. Consider the interval assigned to agent $i$ and compare it to the items assigned to agent $i$. It is possible that even though an item was assigned to agent $i$, the cut in the continuous instance cuts through the corresponding sub-block of value $\delta$. However, in that case agent $i$ gets at least $\delta/2$ from that sub-block, i.e., she lost at most $\delta/2$. Since this can happen at both extremities of the interval assigned to agent $i$, we get $v_{ii} \geq V_{ii}-\delta$. Now consider the $i$-value of the  interval assigned to agent $j$. The same idea as above about the extremities of the interval means that the continuous allocation might increase the $i$-value by $\delta$ with respect to the discrete allocation. Furthermore, there is also $m \delta$ of agent $i$'s available value that is unaccounted for in the discrete allocation. In the worst case, all of it lies in the interval allocated to agent $j$. Thus, we obtain $v_{ij} \leq V_{ij} + \delta + m \delta$. Putting everything together, we then get $v_{ii} \geq v_{ij} -(m+2)\delta \geq v_{ij} - \varepsilon$.
\end{proof}

\begin{proposition}\label{prop:discrete-to-continuous}
\textsc{Disjoint-Discrete-EF-Cake-Cutting} reduces to \textsc{simple-$\varepsilon$-EF-Cake-Cutting}.
\end{proposition}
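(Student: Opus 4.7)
The plan is to construct, in polynomial time from each disjoint-binary discrete instance with $N$ items and $n$ agents, a \textsc{simple-$\varepsilon$-EF-Cake-Cutting} instance on $[0,1]$, query the oracle on it, and then snap the returned continuous $\varepsilon$-envy-free allocation into a discrete envy-free allocation. Concretely, I would pick $M$ polynomial in $N$ and $n$ (for instance $M = \Theta((n+1)N^2)$), divide $[0,1]$ into $M$ cells of width $1/M$, and represent each discrete item $j$, in the same order as in the discrete instance, by a contiguous block of $c_j$ cells. For each agent $i$ with $k_i$ valued items, I set $c_j \in \{\lfloor K/k_i\rfloor, \lceil K/k_i\rceil\}$ for every $j \in S_i$, where $K := M/(n+1)$, arranging the $c_j$'s so that $\sum_{j \in S_i} c_j = K$; any remaining cells go into unvalued ``filler'' regions between items. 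Agent $i$ is assigned the constant integer density $h_i = n+1$ on her items' cells and $0$ elsewhere. This gives a valid \textsc{simple-$\varepsilon$-EF-Cake-Cutting} instance: heights are integer, change points lie on the $1/M$-grid, each $v_i$ is normalized since $h_i \cdot K/M = 1$, and by construction each of agent $i$'s items carries value within $1/K$ of $1/k_i$.

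I would then call the oracle with some $\varepsilon = 1/\operatorname{poly}(N)$ chosen small enough that $k_i \varepsilon + O(k_i^2/K) < 1$ for every $i$ (e.g., $\varepsilon = 1/(3N)$ with $K \geq 10 N^2$), obtain a continuous $\varepsilon$-envy-free allocation $(x,\pi)$, and convert it into a discrete allocation by snapping every cut $x_l$ to the nearest item boundary (with ties broken consistently to preserve the order of the cuts). The correctness argument for the ``easy case'' is: when every cut already lies between items, the continuous values are
\[
v_i(M_l) \;=\; \sum_{j \in S_i \cap M_l^*} h_i c_j / M
\quad\text{which differs from}\quad
|S_i \cap M_l^*|/k_i \quad\text{by at most}\quad O(k_i/K).
\]
Plugging these bounds into the continuous $\varepsilon$-envy-free inequality yields $|S_i \cap M_i^*| - |S_i \cap M_l^*| \geq -k_i\varepsilon - O(k_i^2/K)$, and since the left-hand side is integer while the right-hand side exceeds $-1$ by our parameter choices, in fact $|S_i \cap M_i^*| \geq |S_i \cap M_l^*|$, i.e., the discrete allocation is envy-free.

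The main obstacle is that a cut may actually land inside an item, in which case a naive snap can shift agent $i$'s value by $\Theta(1/k_i)$ and potentially destroy envy-freeness for agents with small $k_i$. I would handle this using the key fact that valuations are disjoint: moving a cut across the interior of item $j$ only affects the valuation of its owner $i$. For each cut lying inside some item $j$, if one of the two blocks bordering the cut is agent $i$'s own block, I snap the cut so that item $j$ lies entirely in that block, which only increases $v_i(M_i)$ and does not change any other agent's values. Otherwise, a short mass-counting argument using $\sum_l v_i(M_l) = 1$ together with the two continuous $\varepsilon$-envy-free inequalities of agent $i$ with respect to the cut-adjacent blocks shows that this situation can only arise when $k_i \geq 2$, and those same inequalities pin down agent $i$'s discrete item-counts in the two affected blocks tightly enough that assigning item $j$ to either side still yields an envy-free discrete allocation. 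Carrying out this case analysis over all $n-1$ cuts, in a suitable order, is the main technical work and completes the reduction.
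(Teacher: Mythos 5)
Your construction of the \textsc{simple-$\varepsilon$-EF-Cake-Cutting} instance is a clean alternative to the paper's: by varying the cell-width per item ($c_j \approx K/k_i$ cells) and using the fixed integer density $n+1$, you get integer heights and grid-aligned breakpoints directly, whereas the paper first produces a \textsc{unary-$\varepsilon$-EF} instance and then composes with the converse reduction (Proposition~\ref{prop:continuous-to-discrete}) to equalize $m_i$'s and pad so that $m/m_i$ is integral. Your way is more self-contained, at the small price of per-item value errors of order $1/K$ which you correctly track. The step ``if one of the two cut-adjacent blocks is $M_i$, snap the item into $M_i$'' and the observation that moving a cut inside item $j$ affects only its owner's values are both present in the paper as well; so far so good.

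The gap is in the rounding rule for the remaining cuts. The claim that ``assigning item $j$ to either side still yields an envy-free discrete allocation'' is false, and any local per-cut rule---in particular your stated ``snap to the nearest item boundary''---can produce a post-snap block with more of agent $i$'s items than $M_i$ has. Here is a concrete failure: $k_i=4$ with items $1,2,3,4$ of value $\approx 1/4$ each; $M_i$ contains exactly item $1$ (so $t_i = 1$); the block to the left of $M_l$ contains $1/8-\varepsilon/2$ of item $2$; $M_l$ contains $1/8+\varepsilon/2$ of item $2$ and $1/8+\varepsilon/2$ of item $3$ (so $v_i(M_l)=1/4+\varepsilon$, on the $\varepsilon$-envy-free boundary); the block to the right of $M_l$ contains $1/8-\varepsilon/2$ of item $3$ plus $1/8$ of item $4$; and a fifth block gets the rest of item $4$. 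Majority snapping (nearest item boundary) sends both items $2$ and $3$ into $M_l$, giving it two of agent $i$'s items against $t_i=1$. The pairwise inequalities you invoke only yield $t_i \geq a_l + 1$ here, not $a_l + 2$, so they do not ``pin down'' the counts tightly enough. The fix is the left-to-right greedy processing that the paper uses on each maximal run of consecutive $i$-cuts (``$i$-chains''): snap rightward whenever the left block's count stays $\leq t_i$, otherwise snap leftward and carry the small $\varepsilon$-slack to the next block; the invariant that the carried slack is at most $(\text{chain length})\cdot\varepsilon < 1/k_i$ is what makes the last block of the chain safe. Your ``in a suitable order'' nods in this direction but without the propagated-slack invariant the argument doesn't close.
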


\begin{proof}
%note that this proof also provides an alternative proof of existence, by using existence of approximate solution in the continuous case

Consider an instance of \textsc{Disjoint-Discrete-EF-Cake-Cutting} with $m$ items and $n$ agents with disjoint binary valuations $v_1, \dots, v_n$. For $i \in [n]$, let $m_i$ denote the number of items that agent $i$ values. Note that the valuations are provided explicitly in the input, so $n$, $m$, and the $m_i$'s are bounded by the size of the input. We start by providing a reduction to \textsc{unary-$\varepsilon$-EF-Cake-Cutting}.

We construct a continuous cake-cutting instance as follows. Divide the continuous cake $[0,1]$ into $m$ regions of size $1/m$, i.e., $I_j=[(j-1)/m,j/m]$ for $j \in [m]$. If item $j \in [m]$ is valued by agent $i$, then put a block of length $1/m$ and height $m/m_i$ in interval $I_j$ of the valuation $w_i$. This yields piecewise constant valuations $w_1, \dots, w_n$ on $[0,1]$ that are normalized. Finally, set $\varepsilon := \min_i 1/(nm_i)$. Note that $\varepsilon$ can be efficiently represented in unary.

Let $(x,\pi)$ be a contiguous $\varepsilon$-envy-free allocation for this continuous cake-cutting instance. For $i \in [n]$, let $A_i$ denote the interval allocated to agent $i$. We now provide a rounding procedure to turn this $\varepsilon$-envy-free allocation into an envy-free allocation where all cuts lie on points of the form $j/m$ with $j \in [m]$. It is easy to see that this yields a solution to the \textsc{Disjoint-Discrete-EF-Cake-Cutting} instance.

Consider any ``bad cut'', i.e., a cut that lies strictly within an interval $I_j=[(j-1)/m,j/m]$. If item $j$ is not valued by any agent, then we can move this cut to either side without changing the value of any agent for any of the allocated intervals. Otherwise, item $j$ is valued by exactly one agent $i$; in that case we say that this cut is an \emph{$i$-cut}. If the interval $A_i$ allocated to agent $i$ has an $i$-cut on the left or right endpoint, then we can immediately round it in favour of agent $i$, i.e., making $A_i$ larger. Indeed, this does not decrease $w_i(A_i)$, does not increase $w_i(A_\ell)$, and does not change $w_\ell(A_k)$ for any $k,\ell \in [n]$ with $\ell \neq i$. Thus, we still have an $\varepsilon$-envy-free allocation and $w_i(A_i)$ is of the form $t_i/m_i$ where $t_i \in [m_i]$ (and will no longer change).

If we had an envy-free allocation, then the rest of the rounding would be straightforward: simply round every remaining bad cut to the right (or alternatively round every bad cut to the nearest $\cdot/m$ value). With this rounding, agent $i$ would have envy strictly less than $1/m_i$, and thus envy $0$. However, we only have an $\varepsilon$-envy-free allocation, so we need to do come up with a more involved rounding scheme. We now show how to round all $i$-cuts; the same procedure can be applied for every $i \in [n]$.

An \emph{$i$-chain} is a set of consecutive cuts that are all $i$-cuts, and such that it has maximal length (i.e., the first cut to the left and right is not an $i$-cut). Every $i$-chain can be handled separately as follows. Consider an $i$-chain consisting of $k$ consecutive $i$-cuts. Let $A_{\ell_1}, \dots, A_{\ell_{k+1}}$ be the allocated intervals delimited by those cuts, from left to right. Note that $\ell_{1}, \dots, \ell_{k+1} \neq i$.

We start with the leftmost $i$-cut: the one lying between $A_{\ell_1}$ and $A_{\ell_2}$. Suppose that it lies strictly within the interval $[(j-1)/m, j/m]$. There are two possible rounding positions for this cut: to the left or to the right. If we round to the left then $w_i(A_{\ell_1}) = t/m_i$ for some $t$, while if we round to the right we get $w_i(A_{\ell_1}) = (t+1)/m_i$. If $(t+1)/m_i \leq t_i/m_i = w_i(A_i)$, then we round to the right. On the other hand, if $(t+1)/m_i > t_i/m_i$, then $t=t_i$ (because $\varepsilon < 1/m_i$) and $A_{\ell_1}$ was taking at most $\varepsilon$ value from the block at $[(j-1)/m, j/m]$ (since $w_i(A_{\ell_1}) \leq t_i/m_i + \varepsilon$ before moving the cut). This means that by rounding to the left, we have added at most $\varepsilon$ value to $w_i(A_{\ell_2})$. Thus, we now have $w_i(A_{\ell_2}) \leq t_i/m_i + 2\varepsilon$.

We now proceed to round the second cut of the $i$-chain. Again, if we round to the left then $w_i(A_{\ell_2}) = t/m_i$ for some $t$, and if we round to the right $w_i(A_{\ell_2}) = (t+1)/m_i$. If $(t+1)/m_i \leq t_i/m_i$, then we round to the right. On the other hand, if $(t+1)/m_i > t_i/m_i$, then $t=t_i$ (because $2\varepsilon < 1/m_i$) and $A_{\ell_2}$ was taking at most $2\varepsilon$ value from the block that is cut by the $i$-cut. Thus, rounding to the left ensures that $w_i(A_{\ell_2}) = t_i/m_i$ and $w_i(A_{\ell_3}) \leq t_i/m_i + 3\varepsilon$. We keep repeating this until the rightmost $i$-cut of the $i$-chain has been rounded. At the end, we get that $w_i(A_{\ell_{k+1}}) \leq t_i/m_i + (k+1) \varepsilon$. However, we have $(k+1) \varepsilon < 1/m_i$ since $k+1 \leq n-1$ ($A_i$ cannot be one of the intervals). After we perform this procedure for all $i$-chains, agent $i$ will not envy any other agent.

We have shown a reduction to \textsc{unary-$\varepsilon$-EF-Cake-Cutting}. In order to obtain a reduction to \textsc{simple-$\varepsilon$-EF-Cake-Cutting}, we combine this reduction with Proposition \ref{prop:continuous-to-discrete}. Indeed, this yields a reduction from \textsc{Disjoint-Discrete-EF-Cake-Cutting} to \textsc{Disjoint-Discrete-EF-Cake-Cutting} where all agents value the same number of items. This means that we now have $m_i=m_j$ for all $i,j \in [n]$. By adding additional items that are not valued by anyone, we can also ensure that the number of items $m$ is a multiple of $m_i$. Applying the same reduction described in the first part of this proof to this instance yields an instance of \textsc{simple-$\varepsilon$-EF-Cake-Cutting} with $M=m$.
\end{proof}

\section{Conclusion}

In this paper, we study the classical cake cutting problem with the contiguity constraint and establish several hardness results and approximation algorithms for this setting.
It is worth noting that while our $1/3$-envy-free algorithm (Algorithm~\ref{alg:approx-EF-general}) is simple, lowering the envy to $1/4$ for the restricted class of uniform single-interval valuations (Algorithm~\ref{alg:approx-EF-single-interval}) already requires significantly more work.
Pushing the approximation factor down further even for this class or the class of piecewise uniform valuations while maintaining computational efficiency is therefore a challenging direction.
Of course, it is possible that there are hardness results for sufficiently small constants---this is not implied by the work of \citet{DengQiSa12}, as their PPAD-completeness result relies on more complex preference functions.

On the hardness front, we provide constructions that serve as frameworks for deriving NP-hardness results for both cake cutting and indivisible items.
Nevertheless, our frameworks do not cover questions related to the utilities of the agents, for instance whether there exists a contiguous envy-free allocation of the cake in which the first agent receives at least a certain level of utility.
Extending or modifying our constructions to deal with such questions is an intriguing direction for future research.

Finally, while we have established a number of connections between continuous and discrete cake-cutting in this paper, much still remains to be explored. For example, \citet{Suksompong19} showed in the discrete setting that if the valuations are binary, then an ``envy-free up to one item'' allocation is guaranteed to exist. Similarly, for additive (and even monotonic) valuations, \citet{BiloCaFl19} proved the existence of an allocation that is envy-free up to two items. It would be interesting to see how these problems can be related to the continuous setting.

\section*{Acknowledgments}

This work was partially supported by the European Research Council (ERC) under grant number 639945
(ACCORD) and by an EPSRC doctoral studentship (Reference 1892947).
We would like to thank the anonymous reviewers of the 34th AAAI Conference on Artificial Intelligence (AAAI 2020) for their valuable comments.

\bibliographystyle{named}
\bibliography{cake}

\begin{thebibliography}{}

\bibitem[\protect\citeauthoryear{Alijani \bgroup \em et al.\egroup
  }{2017}]{AlijaniFaGh17}
Reza Alijani, Majid Farhadi, Mohammad Ghodsi, Masoud Seddighin, and Ahman~S.
  Tajik.
\newblock Envy-free mechanisms with minimum number of cuts.
\newblock In {\em Proceedings of the 31st AAAI Conference on Artificial
  Intelligence (AAAI)}, pages 312--318, 2017.

\bibitem[\protect\citeauthoryear{Arunachaleswaran \bgroup \em et al.\egroup
  }{2019}]{ArunachaleswaranBaKu19}
Eshwar~Ram Arunachaleswaran, Siddharth Barman, Rachitesh Kumar, and Nidhi
  Rathi.
\newblock Fair and efficient cake division with connected pieces.
\newblock In {\em Proceedings of the 15th Conference on Web and Internet
  Economics (WINE)}, pages 57--70, 2019.

\bibitem[\protect\citeauthoryear{Aumann and Dombb}{2015}]{AumannDo15}
Yonatan Aumann and Yair Dombb.
\newblock The efficiency of fair division with connected pieces.
\newblock {\em ACM Transactions on Economics and Computation}, 3(4):23, 2015.

\bibitem[\protect\citeauthoryear{Aumann \bgroup \em et al.\egroup
  }{2013}]{AumannDoHa13}
Yonatan Aumann, Yair Dombb, and Avinatan Hassidim.
\newblock Computing socially-efficient cake divisions.
\newblock In {\em Proceedings of the 12th International Conference on
  Autonomous Agents and Multiagent Systems (AAMAS)}, pages 343--350, 2013.

\bibitem[\protect\citeauthoryear{Balkanski \bgroup \em et al.\egroup
  }{2014}]{BalkanskiBrKu14}
Eric Balkanski, Simina Br\^{a}nzei, David Kurokawa, and Ariel~D. Procaccia.
\newblock Simultaneous cake cutting.
\newblock In {\em Proceedings of the 28th AAAI Conference on Artificial
  Intelligence (AAAI)}, pages 566--572, 2014.

\bibitem[\protect\citeauthoryear{Barrera \bgroup \em et al.\egroup
  }{2015}]{BarreraNyRu15}
Roberto Barrera, Kathryn Nyman, Amanda Ruiz, Francis~Edward Su, and Yan Zhang.
\newblock Discrete envy-free division of necklaces and maps.
\newblock {\em CoRR}, abs/1510.02132, 2015.

\bibitem[\protect\citeauthoryear{Bei \bgroup \em et al.\egroup
  }{2012}]{BeiChHu12}
Xiaohui Bei, Ning Chen, Xia Hua, Biaoshuai Tao, and Endong Yang.
\newblock Optimal proportional cake cutting with connected pieces.
\newblock In {\em Proceedings of the 26th AAAI Conference on Artificial
  Intelligence (AAAI)}, pages 1263--1269, 2012.

\bibitem[\protect\citeauthoryear{Bei \bgroup \em et al.\egroup
  }{2018}]{BeiHuSu18}
Xiaohui Bei, Guangda Huzhang, and Warut Suksompong.
\newblock Truthful fair division without free disposal.
\newblock In {\em Proceedings of the 27th International Joint Conference on
  Artificial Intelligence (IJCAI)}, pages 63--69, 2018.

\bibitem[\protect\citeauthoryear{Bei \bgroup \em et al.\egroup
  }{2019}]{BeiIgLu19}
Xiaohui Bei, Ayumi Igarashi, Xinhang Lu, and Warut Suksompong.
\newblock The price of connectivity in fair division.
\newblock {\em CoRR}, abs/1908.05433, 2019.

\bibitem[\protect\citeauthoryear{Bil\`{o} \bgroup \em et al.\egroup
  }{2019}]{BiloCaFl19}
Vittorio Bil\`{o}, Ioannis Caragiannis, Michele Flammini, Ayumi Igarashi,
  Gianpiero Monaco, Dominik Peters, Cosimo Vinci, and William~S. Zwicker.
\newblock Almost envy-free allocations with connected bundles.
\newblock In {\em Proceedings of the 10th Innovations in Theoretical Computer
  Science Conference (ITCS)}, pages 14:1--14:21, 2019.

\bibitem[\protect\citeauthoryear{Bouveret \bgroup \em et al.\egroup
  }{2017}]{BouveretCeEl17}
Sylvain Bouveret, Katar\'{i}na Cechl\'{a}rov\'{a}, Edith Elkind, Ayumi
  Igarashi, and Dominik Peters.
\newblock Fair division of a graph.
\newblock In {\em Proceedings of the 26th International Joint Conference on
  Artificial Intelligence (IJCAI)}, pages 135--141, 2017.

\bibitem[\protect\citeauthoryear{Brams and Taylor}{1996}]{BramsTa96}
Steven~J. Brams and Alan~D. Taylor.
\newblock {\em Fair Division: From Cake-Cutting to Dispute Resolution}.
\newblock Cambridge University Press, 1996.

\bibitem[\protect\citeauthoryear{Br\^{a}nzei and Miltersen}{2015}]{BranzeiMi15}
Simina Br\^{a}nzei and Peter~Bro Miltersen.
\newblock A dictatorship theorem for cake cutting.
\newblock In {\em Proceedings of the 24th International Joint Conference on
  Artificial Intelligence (IJCAI)}, pages 482--488, 2015.

\bibitem[\protect\citeauthoryear{Br\^{a}nzei and Nisan}{2017}]{BranzeiNi17}
Simina Br\^{a}nzei and Noam Nisan.
\newblock The query complexity of cake cutting.
\newblock {\em CoRR}, abs/1705.02946, 2017.

\bibitem[\protect\citeauthoryear{Br\^{a}nzei and Nisan}{2019}]{BranzeiNi19}
Simina Br\^{a}nzei and Noam Nisan.
\newblock Communication complexity of cake cutting.
\newblock In {\em Proceedings of the 20th ACM Conference on Economics and
  Computation (EC)}, page 525, 2019.

\bibitem[\protect\citeauthoryear{Caragiannis \bgroup \em et al.\egroup
  }{2011}]{CaragiannisLaPr11}
Ioannis Caragiannis, John~K. Lai, and Ariel~D. Procaccia.
\newblock Towards more expressive cake cutting.
\newblock In {\em Proceedings of the 22nd International Joint Conference on
  Artificial Intelligence (IJCAI)}, pages 127--132, 2011.

\bibitem[\protect\citeauthoryear{Caragiannis \bgroup \em et al.\egroup
  }{2016}]{CaragiannisKuMo16}
Ioannis Caragiannis, David Kurokawa, Herv\'{e} Moulin, Ariel~D. Procaccia,
  Nisarg Shah, and Junxing Wang.
\newblock The unreasonable fairness of maximum {N}ash welfare.
\newblock In {\em Proceedings of the 17th ACM Conference on Economics and
  Computation (EC)}, pages 305--322, 2016.

\bibitem[\protect\citeauthoryear{Cechl\'{a}rov\'{a} and
  Pill\'{a}rov\'{a}}{2012}]{CechlarovaPi12}
Katar\'{i}na Cechl\'{a}rov\'{a} and Eva Pill\'{a}rov\'{a}.
\newblock On the computability of equitable divisions.
\newblock {\em Discrete Optimization}, 9(4):249--257, 2012.

\bibitem[\protect\citeauthoryear{Cechl\'{a}rov\'{a} \bgroup \em et al.\egroup
  }{2013}]{CechlarovaDoPi13}
Katar\'{i}na Cechl\'{a}rov\'{a}, Jozef Dobo\v{s}, and Eva Pill\'{a}rov\'{a}.
\newblock On the existence of equitable cake divisions.
\newblock {\em Information Sciences}, 228:239--245, 2013.

\bibitem[\protect\citeauthoryear{Deng \bgroup \em et al.\egroup
  }{2012}]{DengQiSa12}
Xiaotie Deng, Qi~Qi, and Amin Saberi.
\newblock Algorithmic solutions for envy-free cake cutting.
\newblock {\em Operations Research}, 60(6):1461--1476, 2012.

\bibitem[\protect\citeauthoryear{Dubins and Spanier}{1961}]{DubinsSp61}
Lester~E. Dubins and Edwin~H. Spanier.
\newblock How to cut a cake fairly.
\newblock {\em The American Mathematical Monthly}, 68(1):1--17, 1961.

\bibitem[\protect\citeauthoryear{Etessami and Yannakakis}{2010}]{EtessamiYa10}
Kousha Etessami and Mihalis Yannakakis.
\newblock On the complexity of {N}ash equilibria and other fixed points.
\newblock {\em SIAM Journal on Computing}, 39(6):2531--2597, 2010.

\bibitem[\protect\citeauthoryear{Filos-Ratsikas and
  Goldberg}{2018}]{Filos-RatsikasGo18}
Aris Filos-Ratsikas and Paul~W. Goldberg.
\newblock Consensus halving is {PPA}-complete.
\newblock In {\em Proceedings of the 50th Annual ACM SIGACT Symposium on Theory
  of Computing (STOC)}, pages 51--64, 2018.

\bibitem[\protect\citeauthoryear{Garey and Johnson}{1979}]{GareyJo79}
Michael~R. Garey and David~S. Johnson.
\newblock {\em Computers and Intractability: {A} Guide to the Theory of
  NP-Completeness}.
\newblock W. H. Freeman, 1979.

\bibitem[\protect\citeauthoryear{Garey \bgroup \em et al.\egroup
  }{1981}]{GareyJoSi81}
Michael~R. Garey, David~S. Johnson, Barbara~B. Simons, and Robert~E. Tarjan.
\newblock Scheduling unit-time tasks with arbitrary release times and
  deadlines.
\newblock {\em SIAM Journal on Computing}, 10(2):256--269, 1981.

\bibitem[\protect\citeauthoryear{Igarashi and Peters}{2019}]{IgarashiPe19}
Ayumi Igarashi and Dominik Peters.
\newblock Pareto-optimal allocation of indivisible goods with connectivity
  constraints.
\newblock In {\em Proceedings of the 33rd AAAI Conference on Artificial
  Intelligence (AAAI)}, pages 2045--2052, 2019.

\bibitem[\protect\citeauthoryear{Lipton \bgroup \em et al.\egroup
  }{2004}]{LiptonMaMo04}
Richard~J. Lipton, Evangelos Markakis, Elchanan Mossel, and Amin Saberi.
\newblock On approximately fair allocations of indivisible goods.
\newblock In {\em Proceedings of the 5th ACM Conference on Electronic Commerce
  (EC)}, pages 125--131, 2004.

\bibitem[\protect\citeauthoryear{Marenco and Tetzlaff}{2014}]{MarencoTe14}
Javier Marenco and Tom\'{a}s Tetzlaff.
\newblock Envy-free division of discrete cakes.
\newblock {\em Discrete Applied Mathematics}, 164:527--531, 2014.

\bibitem[\protect\citeauthoryear{Menon and Larson}{2017}]{MenonLa17}
Vijay Menon and Kate Larson.
\newblock Deterministic, strategyproof, and fair cake cutting.
\newblock In {\em Proceedings of the 26th International Joint Conference on
  Artificial Intelligence (IJCAI)}, pages 352--358, 2017.

\bibitem[\protect\citeauthoryear{Procaccia}{2016}]{Procaccia16}
Ariel~D. Procaccia.
\newblock Cake cutting algorithms.
\newblock In Felix Brandt, Vincent Conitzer, Ulle Endriss, J\'{e}r\^{o}me Lang,
  and Ariel~D. Procaccia, editors, {\em Handbook of Computational Social
  Choice}, chapter~13, pages 311--329. Cambridge University Press, 2016.

\bibitem[\protect\citeauthoryear{Robertson and Webb}{1998}]{RobertsonWe98}
Jack Robertson and William Webb.
\newblock {\em Cake-Cutting Algorithms: Be Fair if You Can}.
\newblock Peters/CRC Press, 1998.

\bibitem[\protect\citeauthoryear{Segal-Halevi \bgroup \em et al.\egroup
  }{2016}]{SegalhaleviHaAu16}
Erel Segal-Halevi, Avinatan Hassidim, and Yonatan Aumann.
\newblock Waste makes haste: bounded time algorithms for envy-free cake cutting
  with free disposal.
\newblock {\em ACM Transactions on Algorithms}, 13(1):12, 2016.

\bibitem[\protect\citeauthoryear{Segal-Halevi}{2018}]{Segalhalevi18}
Erel Segal-Halevi.
\newblock Redividing the cake.
\newblock In {\em Proceedings of the 27th International Joint Conference on
  Artificial Intelligence (IJCAI)}, pages 498--504, 2018.

\bibitem[\protect\citeauthoryear{Simons}{1978}]{Simons78}
Barbara Simons.
\newblock A fast algorithm for single processor scheduling.
\newblock In {\em Proceedings of the 19th Annual Symposium on Foundations of
  Computer Science (FOCS)}, pages 246--252, 1978.

\bibitem[\protect\citeauthoryear{Stromquist}{1980}]{Stromquist80}
Walter Stromquist.
\newblock How to cut a cake fairly.
\newblock {\em The American Mathematical Monthly}, 87(8):640--644, 1980.

\bibitem[\protect\citeauthoryear{Stromquist}{2008}]{Stromquist08}
Walter Stromquist.
\newblock Envy-free cake divisions cannot be found by finite protocols.
\newblock {\em The Electronic Journal of Combinatorics}, 15:\#R11, 2008.

\bibitem[\protect\citeauthoryear{Su}{1999}]{Su99}
Francis~Edward Su.
\newblock Rental harmony: Sperner's lemma in fair division.
\newblock {\em The American Mathematical Monthly}, 106(10):930--942, 1999.

\bibitem[\protect\citeauthoryear{Suksompong}{2019}]{Suksompong19}
Warut Suksompong.
\newblock Fairly allocating contiguous blocks of indivisible items.
\newblock {\em Discrete Applied Mathematics}, 260:227--236, 2019.

\end{thebibliography}

\end{document}